\newif\ifgamma
\newdimen\arrowsize
\newcommand{\independent}{\mbox{${}\perp\mkern-11mu\perp{}$}}
\newcommand{\iids}{\overset{\text{iid}}{\sim}}
\newcommand{\equald}{\overset{d}{=}}
\newcommand{\C}[1]{\mathcal{#1}}
\newcommand{\B}[1]{\mathbf{#1}}
\newcommand{\II}{{U}}
\newcommand{\I}{{\mathcal J}}
\newcommand{\E}{{\mathcal{E}}}
\renewcommand{\i}{{j}}
\newcommand{\e}{{e}}
\newcommand{\mean}{{E}}  
\newcommand{\var}{{\mathrm{Var}}}  
\newcommand{\pa}[1]{\B{PA}({#1})}
\newcommand{\PA}[1]{{\B{PA}}{(#1)}}
\newcommand{\DE}[1]{{\B{DE}}{(#1)}}
\newcommand{\ND}[1]{{\B{ND}}{(#1)}}
\newcommand{\AN}[1]{{\B{AN}}{(#1)}}
\newcommand{\given}{\,|\,}
\newcommand{\caus}{*}
\newcommand{\pred}{\mathrm{pred}}
\newtheorem{theorem}{Theorem}
\newtheorem{lemma}{Lemma}
\newtheorem{remark}{Remark}
\newtheorem{prop}{Proposition}
\newtheorem{defin}{Definition}
\newtheorem{assumption}{Assumption}
\newcommand{\iid}{i.i.d.\ }
\newcommand\Nicolai[1]{{\color{red}Nicolai: ``#1''}}
\newcommand\Jonas[1]{{\color{blue}Jonas: ``#1''}}
\newenvironment{proof}[1][. ]{{\bf Proof#1}}{\hfill$\square$\vskip\baselineskip}
\title{Causal inference using invariant prediction: \\ identification and confidence intervals}
\author{Jonas Peters$ ^{\flat, \sharp}$, Peter B\"uhlmann$ ^{\sharp}$ and Nicolai Meinshausen$ ^{\sharp}$ \vspace{0.1cm}\\
$ ^{\flat}$MPI for Intelligent Systems, T\"ubingen, Germany\\ 
$ ^{\sharp}$Seminar f\"ur Statistik, ETH Z\"urich, Switzerland \vspace{0.1cm}\\ \{peters,buhlmann,meinshausen\}@stat.math.ethz.ch}
\begin{document}

\maketitle

\begin{abstract}
What is the difference of a prediction that is made with a
  causal model and a non-causal model? Suppose we intervene on the
  predictor variables or
  change the whole environment. The predictions from a causal model
  will in general work as well under interventions as for observational data.  In
  contrast, predictions from
  a non-causal model can potentially be very wrong if we actively
  intervene on variables.
Here,
we propose to exploit this  invariance of a prediction under a causal model  for causal inference: given
different experimental settings (for example various interventions) we
collect all models that do show invariance in their predictive
accuracy across settings and interventions. The
causal model will be a member of this set of models with high probability. This approach
yields valid confidence intervals for the causal relationships in
quite general scenarios. 
We examine 
the example of 
structural equation models in more detail
and provide sufficient assumptions under which 
the set of causal predictors becomes identifiable.
We further investigate robustness properties of our approach under model misspecification and discuss possible extensions.
 The empirical properties are studied for various data sets,
including large-scale gene perturbation experiments.
\end{abstract}

\section{Introduction}
Inferring cause-effect relationships between variables is a primary goal in
many applications. Such causal inference has its roots in different fields
and various concepts have contributed to its understanding and
quantification. Among them are the framework of potential outcomes and
counterfactuals \citep[cf.][]{dawid2000causal,rubin2005causal}; or structural equation modelling
\citep[cf.][]{Bollen1989,robins2000marginal,Pearl2009} and 
graphical modeling \citep[cf.][]{lauritzen1988local,greenland1999causal, Spirtes2000},
where the book by 
\citet{Pearl2009} provides a nice overview. 
\citet{richardson2013single} make a 
connection between the frameworks using single-world intervention
graphs. 

A typical approach for
causal discovery, in the context of unknown
causal structure, is to characterise the Markov equivalence class of structures
(or graphs) \citep{Verma1991,ander97,Tian2001, Hauser2012}, estimate
the correct 
Markov equivalence class based on
observational or interventional data  
\citep[cf.]{Spirtes2000,Chickering2002,castelo2003inclusion,Kalisch2007,hegeng08,hapb14}, 
and finally infer the identifiable causal effects or provide some bounds
\citep[cf.]{Maathuis2009,vanderweele2010signed}. More recently, within the
framework of structural equation models, interesting work has been done for
fully identifiable structures exploiting additional restrictions such as 
non-Gaussianity \citep{Shimizu2006}, nonlinearity \citep{Hoyer2008,Peters2014JMLR} or equal error variances \citep{Peters2012}. \citet{Janzing2012} exploit an independence between causal mechanisms. 

We propose here a new method for causal discovery. 
The approach of the paper is to note that  if we consider all ``direct causes'' of a target variable of
interest, then the conditional distribution of the target given the
the direct causes will not
change when we interfere experimentally with all other variables in the model except the
target itself. 
This does not necessarily hold, however, if some of the direct causes are ignored in the conditioning.\footnote{We thank a referee for suggesting this succinct
  description of the main idea.}
We exploit, in other words, that 
the conditional
distribution of the target variable
of interest (often also termed ``response variable''), given the
complete set of 
corresponding direct causal  
predictors, has to remain identical under interventions on variables other
than the target variable.
This invariance idea is closely linked to causality and has been discussed,
for example, under the term ``autonomy'' and ``modularity''
\citep{Haavelmo1944, Aldrich1989, Hoover1990, Pearl2009, Schoelkopf2012} or
also ``stability'' \citep{Dawid2010Didilez2010} \citep[][Sec. 1.3.2]{Pearl2009}. While it is well-known that causal models have an invariance property, we try to
exploit this fact for inference. 
Our proposed procedure gathers all submodels that are statistically
invariant across environments in a suitable sense. The causal submodel
consisting of the set of variables with a direct causal effect on the
target variable will be one of 
these invariant submodels, with controlled high probability, and this
allows to control the probability of making false causal discoveries.

Our method is tailored for (but not restricted to) the setting where we
have data 
from different experimental settings or regimes \citep{didelez2006direct}. For example, two
different interventional data samples, or a combination of
observational and interventional data \citep[cf.][]{hegeng08} belong to such
a scenario. 
For known intervention targets, \citet{cooper1999causal} incorporate the intervention effects as mechanism changes \citep{Tian2001} into a Bayesian framework and \citet{hapb14} modify the greedy equivalence search \citep{Chickering2002} for perfect interventions. 
Our framework does not require to know the location of interventions. For
this setting, \citet{Eaton2007} use
intervention nodes with unknown children and \citet{Tian2001} consider
changes in marginal distributions, while 
\citet{dawid2012decision,dawid2015statistical} make use of different
regimes for a decision-theoretic approach.
In contrast to these approaches, our framework does not require the 
fitting of graphical, structural equation or potential outcome
models and comes with statistical guarantees. 
Further advantages are indicated below in
Section~\ref{subsec.ownc}.

We primarily consider the situation with no hidden
(confounder) variables that influence the target variable.
A rigorous treatment with hidden variables would be
more involved \citep[see][for graphical language]{richardson2002ancestral} but we provide an 
example with instrumental variables in Section \ref{subsec.instrvar} to
illustrate that the method could also work 
more generally in the context of hidden variables. We do not touch very much on the
framework of feedback models
\citep[cf.]{lauritzen2002chain, Mooij2011, Hyttinen2012},
although a constrained form of feedback is allowed. It is an
open question whether our approach could be generalised 
to include general feedback models.  

\subsection{Data from multiple environments or experimental settings}\label{subsec.data}

We consider the setting where we have different experimental conditions $\e
\in \E$ and have an \iid sample of $(X^\e,Y^\e)$ in each environment, where
$X^\e\in \mathbb{R}^p$ is a predictor variable and $Y^\e \in \mathbb{R}$ a target
variable of interest. While the environments $e \in \E$ can be created by precise
experimental design for~$X^\e$ (for example by randomising some or all
elements of~$X^\e$), we are more interested in settings where such
careful experimentation is not possible and the different
distributions of $X^\e$ in the environments are generated by unknown and
not precisely controlled  interventions. 
 If a subset $S^*\subseteq\{1,\ldots,p\}$ is 
causal for the prediction of a response $Y$, we assume that
\begin{align}  
\mbox{for all } \e \in \E:\quad  & X^e \mbox{ has an arbitrary
  distribution and }  \label{eq:fullmodel} \\  
&  Y^\e = g(X^\e_{S^*},  \varepsilon^\e), \quad \varepsilon^\e
\sim F_{\varepsilon} \mbox{  and } \varepsilon^\e \independent X^\e_{S^*}, \label{invariance-nonlin}
\end{align}
where  $g:\mathbb{R}^{|S^*|}\times \mathbb{R} \rightarrow \mathbb{R}$ is a real-valued function in a suitable function class, $X_{S^*}^\e$ is the vector of predictors $X^\e$ with indices in a set 
$S^*$ and both the error distribution $\varepsilon^\e \sim F_\varepsilon$ and the function $g$ are
assumed to be the same for all
the experimental settings. 
Equations~\eqref{eq:fullmodel} and~\eqref{invariance-nonlin} can also be interpreted as requiring that the conditionals $Y^\e \given X^\e_{S^*}$ and $Y^f \given X^f_{S^*}$ are identical for all environments $e,f \in \E$ (this equivalence is proved in Section~\ref{sec:nonlin}).

\definecolor{newblue}{RGB}{62,76,209}
\definecolor{newblueLight}{RGB}{184,189,238}

\if0
\hfsetbordercolor{gray!80}
\hfsetfillcolor{newblue!30}
\hfsetfillcolor{newblueLight!100}

\begin{figure}
\begin{center}
\tikzmarkin{gglw3}(-0.25,-2.4)(0.4,2.8)
\hfsetbordercolor{gray!80}
\hfsetfillcolor{gray!40}
\begin{minipage}[][][c]{0.28\textwidth}
\begin{center}
{\small
environment $e=1$:} \vspace{0.3cm}\\

$ $ \hspace{-0.37cm}\begin{tikzpicture}[scale=1, line width=0.5pt, minimum size=0.58cm, inner sep=0.3mm, shorten >=1pt, shorten <=1pt]
    \small
    \draw (1,3) node(x5) [circle, draw] {$X_5$};
    \draw (0,2) node(x2) [circle, draw] {$X_2$};
    \draw (2,2) node(x4) [circle, draw] {$X_4$};
    \draw (1,1) node(y) [circle, draw] {$Y$};
    \draw (1,0) node(x3) [circle, draw] {$X_3$};
    \draw (0,-0.38) node(x) [circle] {$ $};
    \draw[-arcsq] (x5) -- (x4);
    \draw[-arcsq] (x2) -- (y);
    \draw[-arcsq] (x4) -- (y);
    \draw[-arcsq] (y) -- (x3);
   \end{tikzpicture}
   \end{center}
\end{minipage}
\tikzmarkend{gglw3}
\definecolor{neworange}{RGB}{221,120,63}
\definecolor{neworangeLight}{RGB}{242,206,185}
\hfsetfillcolor{neworange!30}
\hfsetfillcolor{neworangeLight!100}
\tikzmarkin{gglw5}(-0.14,-2.4)(-0.15,2.8)
\begin{minipage}[][][c]{.28\textwidth}
\begin{center}
{\small
environment $e=2$:} \vspace{0.3cm}\\

$ $\hspace{-0.37cm}\begin{tikzpicture}[scale=1, line width=0.5pt, minimum size=0.58cm, inner sep=0.3mm, shorten >=1pt, shorten <=1pt]
    \small
    \draw (1,3) node(x5) [circle, draw] {$X_5$};
    \draw (0,2) node(x2) [circle, draw] {$X_2$};
    \draw (2,2) node(x4) [circle, draw] {$X_4$};
    \draw (1,1) node(y) [circle, draw] {$Y$};
    \draw (1,0) node(x3) [circle, draw] {$X_3$};
    \draw (2.7,0) node(x) [circle] {$ $};
    \draw[-arcsq] (x5) -- (x4);
    \draw[-arcsq] (x2) -- (y);
    \draw[-arcsq] (x4) -- (y);
    \draw[-arcsq] (y) -- (x3);
    \node (myfirstpic2) at (-1,2) {\includegraphics[angle = 60, scale = 0.02]{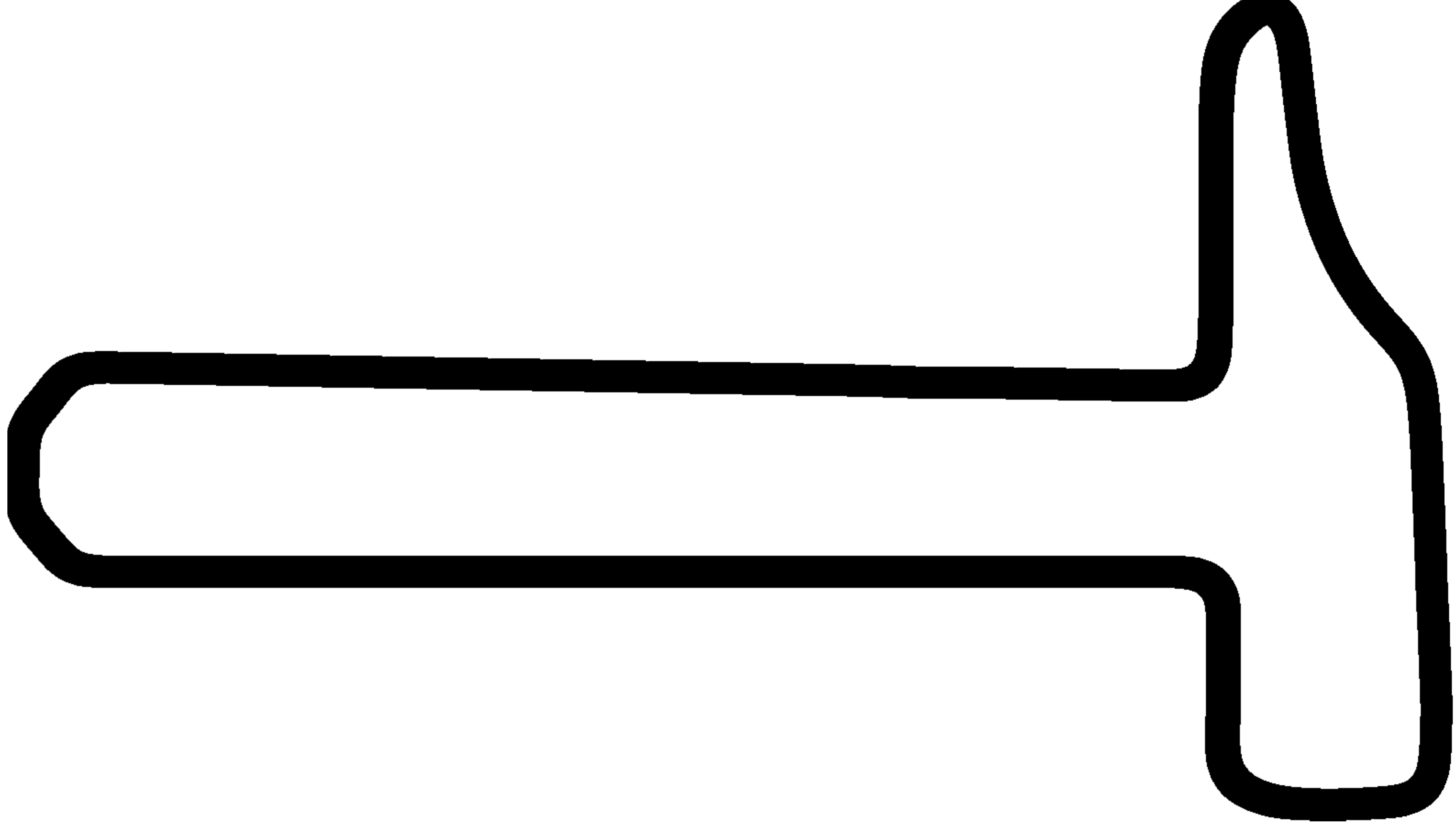}};
    \node (myfirstpic) at (2,0) {\includegraphics[angle = -60, scale = 0.02]{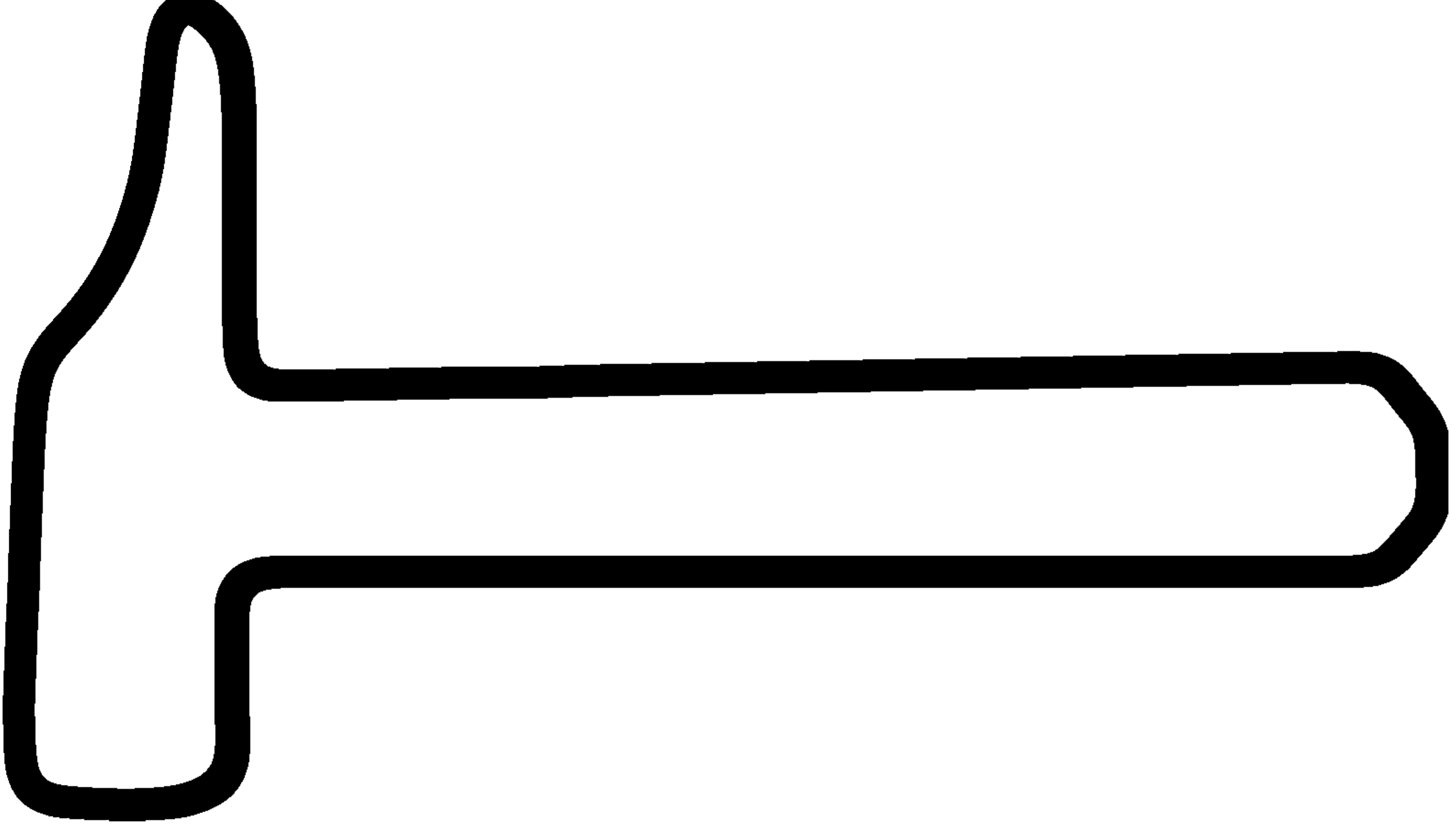}};
   \end{tikzpicture}
   \end{center}
\end{minipage}
\tikzmarkend{gglw5}
\hspace{-0.3cm}
\definecolor{neworange}{RGB}{221,120,63}
\definecolor{neworangeLight}{RGB}{242,206,185}
\hfsetbordercolor{gray!80}
\hfsetfillcolor{gray!40}
\tikzmarkin{gglw4}(-0.1,-2.4)(0.2,2.8)
\begin{minipage}[][][c]{0.28\textwidth}
\begin{center}
{\small
environment $e=3$:} \vspace{0.3cm}\\

$ $\hspace{-0.4cm}\begin{tikzpicture}[scale=1, line width=0.5pt, minimum size=0.58cm, inner sep=0.3mm, shorten >=1pt, shorten <=1pt]
    \small
    \draw (1,3) node(x5) [circle, draw] {$X_5$};
    \draw (0,2) node(x2) [circle, draw] {$X_2$};
    \draw (2,2) node(x4) [circle, draw] {$X_4$};
    \draw (1,1) node(y) [circle, draw] {$Y$};
    \draw (1,0) node(x3) [circle, draw] {$X_3$};
    \draw (-0.7,0) node(x) [circle] {$ $};
    \draw (0,-0.38) node(x11) [circle] {$ $};
    \draw[-arcsq] (x5) -- (x4);
    \draw[-arcsq] (x2) -- (y);
    \draw[-arcsq] (x4) -- (y);
    \draw[-arcsq] (y) -- (x3);
    \node (myfirstpic) at (3,2) {\includegraphics[angle = -60, scale = 0.02]{hammer2.pdf}};
   \end{tikzpicture}
   \end{center}
\end{minipage}
\tikzmarkend{gglw4}\\
$ $
\caption{ An example including three environments. The invariance~\eqref{eq:fullmodel} and~\eqref{invariance-nonlin} holds if we consider $S^* = \{X_2, X_4\}$. Considering indirect causes instead of direct ones (e.g. $\{X_2, X_5\}$) or an incomplete set of direct causes (e.g. $\{X_4\}$) may not be sufficient to guarantee invariant prediction.}
\label{fig:newexample}
\end{center}
\end{figure}

\else
\begin{figure}
\begin{center}
\includegraphics[width=0.99\textwidth]{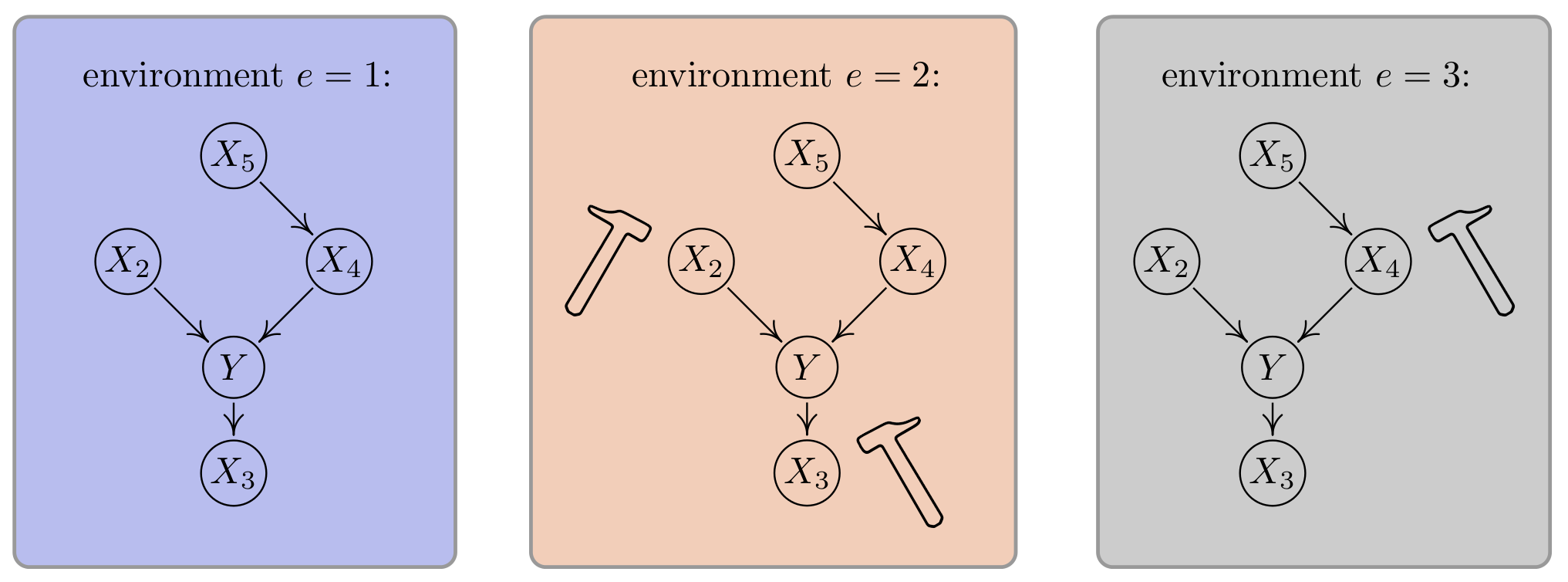}
\caption{ An example including three environments. The invariance~\eqref{eq:fullmodel} and~\eqref{invariance-nonlin} holds if we consider $S^* = \{X_2, X_4\}$. Considering indirect causes instead of direct ones (e.g. $\{X_2, X_5\}$) or an incomplete set of direct causes (e.g. $\{X_4\}$) may not be sufficient to guarantee invariant prediction.}
\label{fig:newexample}
\end{center}
\end{figure}

\fi

An example of a set of environments can be seen in
  Figure~\ref{fig:newexample}. The invariance~\eqref{eq:fullmodel} and~\eqref{invariance-nonlin} holds if the set $S^*$
  consists of all direct causes of the target variable $Y$ and if
we do not intervene on $Y$, see Proposition~\ref{propos:sem}.

Sections~\ref{subsec.instrvar},~\ref{sec:violations} and~\ref{sec:modelmismain} discuss
violations and possible relaxations of this assumption.

\subsection{New contribution}\label{subsec.ownc}

The main and novel idea 
is that  we can use the invariance of the causal relationships under
different settings $e\in \E$ for statistical estimation, which opens a
new road for causal discovery and inference.

For the sake of simplicity, we will mostly focus on a linear model with a
target or response variable and various predictor variables,
where Equation~\eqref{eq:fullmodel} is unchanged
and~\eqref{invariance-nonlin} then reads  
$Y^\e = \mu+X^\e \gamma^* +
\varepsilon^\e$, 
with $\mu$ a constant intercept term. The set $S^*$ of predictors is then given by the support
of $\gamma^*$, that is $S^* := \{k;\ \gamma^*_k \neq 0\}$. 
Assumption~\ref{assum:invariant} in Section~\ref{sec:model} summarises all requirements. 
Proposition~\ref{propos:sem} shows that structural equation models with the traditional notion of 
interventions \citep{Pearl2009} satisfy Assumption~\ref{assum:invariant}
if we choose the set $S^*$ to be the parents of $Y$. 
Proposition~\ref{prop:potentialoutcomes} in Appendix~\ref{app:potentialoutcomes} sheds some light on the relationship to potential outcomes.

Obtaining confidence statements for existing causal discovery methods 
is often difficult as one would need to determine the
distribution of causal effects estimators after having searched and estimated a
graphical structure of the model. It is unknown how one could do
this, except relying on data-splitting strategies which have been found to
perform rather poorly in such a setting \citep{buhlmann2013controlling}. 
We propose in Section \ref{sec:conf} a new method for the 
construction of (potentially) conservative confidence 
statements for causal predictors~$S^*$ and 
of (potentially) conservative intervals for
$\gamma^*_j$ for $j = 1,\ldots ,p$ without a-priori knowing or
assuming a causal ordering
of variables. 
The method provides confidence intervals without relying on
assumptions such as faithfulness or other identifiability assumptions. If
a causal effect is not identifiable from the given data, it would automatically detect
this fact and not make false causal discoveries.

Another main advantage of our methodology is that we do not need to know how the
experimental conditions arise or which type of interventions they
induce. We only assume that the intervention does not change the conditional 
distribution of the target given the causal predictors
(no intervention on the target or a hidden confounder): it
is simply a device exploiting the grouping of data  
into blocks, where every block corresponds to an experimental condition $\e
\in \E$. We will show in Section \ref{sec:pooling} that such grouping can be
misspecified and the coverage statements are still correct. 
This
is again a major bonus in 
practice as it is often difficult to specify what an intervention or change
of environment actually means. In
contrast, for a so-called
do-intervention for structural equation models \citep{Pearl2009} it needs to be specified on which variables it
acts.  
Interesting areas of
applications include studies where observational data alone are not
sufficient to infer causal effects but randomised studies are
infeasible to conduct. 

We believe that the method's underlying invariance principle is rather
general. However, for simplicity, we present our main results for linear
Gaussian models, including some settings with instrumental variables and hidden variables.

\subsection{Organization}
The invariance assumption is formulated and discussed in
Section~\ref{sec:model}. Using this invariance assumption, a general way to construct confidence statements
for causal predictors and associated coefficients is derived in
Section~\ref{sec:conf}. Two specific methods are shown, using
regression effects for various sets of predictors as the main
ingredient.
Identifiability results for
structural equation models are given in Section~\ref{sec:iden}.
The relation to instrumental variables and the behaviour in presence
of hidden variables is discussed in 
Section~\ref{subsec.instrvar}. 
We will discuss extensions to the nonlinear model~\eqref{invariance-nonlin} in Section~\ref{sec:nonlin} and extenstions to intervened targets in Section~\ref{sec:violations}.
Some robustness property against model misspecifications is discussed in Section~\ref{sec:modelmismain}.

Simulations and applications to a biological gene perturbation data set and an
educational study related to instrumental variables are presented in
Section~\ref{sec:numerical}. We discuss the results and provide an outlook
in Section~\ref{sec:discussion}. 

\subsection{Software}
The methods are available 
in the package
\texttt{InvariantCausalPrediction} for the
\texttt{R}-language \citep{R}.

\section{Assumed invariance of causal prediction}  \label{sec:model}
We formulate here the invariance assumption 
and
discuss the notion of identifiable causal predictors. 
Let $\E $ denote again the index set of $|\E|$
possible interventional or experimental settings. As stated above, we have variables $(X^\e,Y^\e)$ with a joint distribution
that will in general depend on the environment $\e\in
\E$.
In the simplest case, $|\E|=2$, and we have for example in the first 
  setting 
observational data and interventions of some (possibly unknown) nature in
the second setting. 

Our  discussion will rest on the following assumption. 
We assume the existence of a model that is invariant
under different experimental or intervention settings. 
Let
for any set $S \subseteq \{1, \ldots, p\}$, 
$X_S$ be the vector containing all variables
$X_k, k \in S$.
\begin{assumption} [Invariant prediction]
 \label{assum:invariant} 
  There exists a  vector of coefficients $\gamma^{\caus}=(\gamma^{\caus}_1,\ldots,
\gamma^{\caus}_p)^t$ with support $S^{\caus} :=\{k: \gamma^{\caus}_k\neq 0\}  \subseteq \{1,\ldots,p\}$ that satisfies
\begin{align}
\mbox{for all } \e \in \E:\quad  &X^\e \mbox{ has an arbitrary distribution } \mbox{ and }\nonumber  \\ & Y^\e = \mu + X^\e \gamma^* + \varepsilon^\e, \quad \varepsilon^\e \sim F_\varepsilon \mbox{ and } \varepsilon^\e \independent X_{S^*}^\e , \label{eq:lincausal} 
\end{align}
where $\mu\in\mathbb{R}$ is an intercept term, $\varepsilon^\e$ is random noise  with
mean zero, finite variance and the same  distribution $F_{\varepsilon}$
across all $\e \in \E$. 
\end{assumption}
The distribution $F_\varepsilon$ is not assumed to be known in general.
If not mentioned otherwise, we will always assume that an
intercept $\mu$ is added to the model~\eqref{eq:lincausal}. To simplify notation, we will from now on
refrain from writing the intercept down explicitly. We  discuss the invariance assumption with the help of some examples in Figure~\ref{fig:newexample} and~\ref{fig:examplesK}; see also Appendix~\ref{app:example} for another artificial example.
\begin{figure}
\begin{center}
\includegraphics[width=0.45\textwidth]{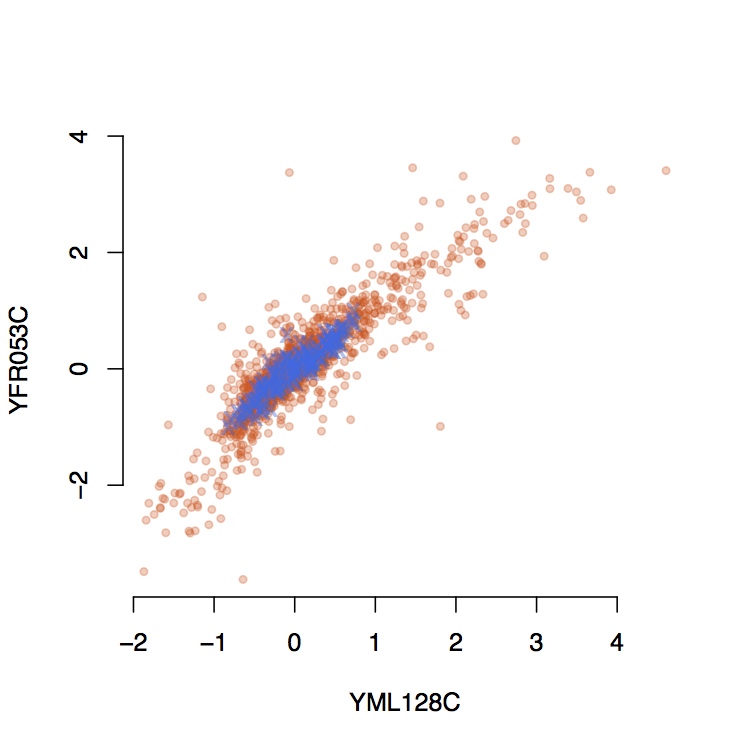}
\includegraphics[width=0.45\textwidth]{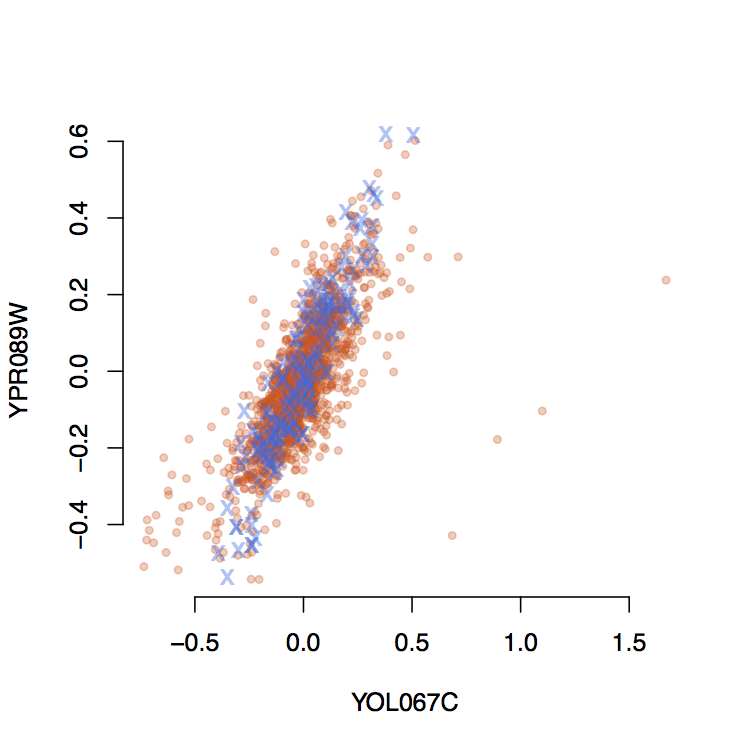}

\includegraphics[width=0.45\textwidth]{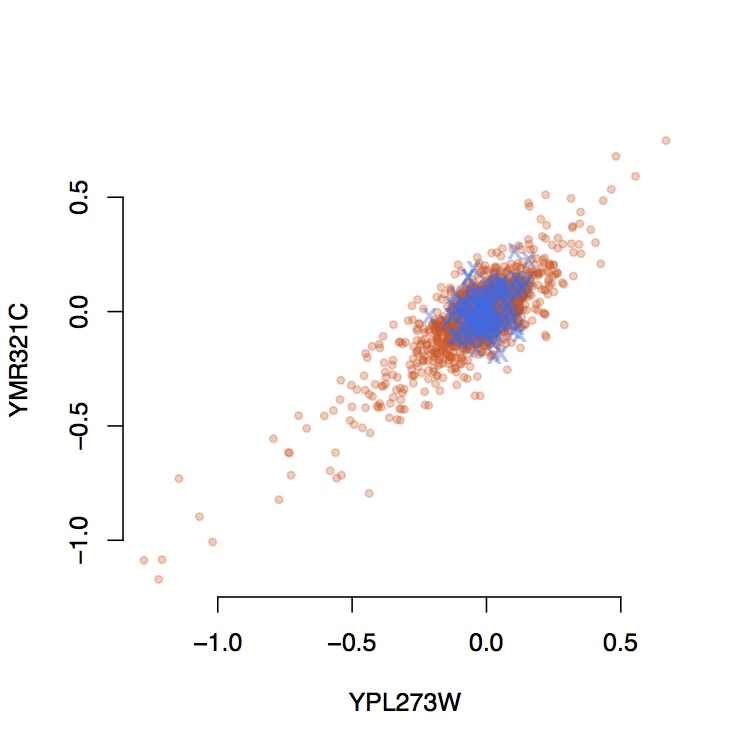}
\includegraphics[width=0.45\textwidth]{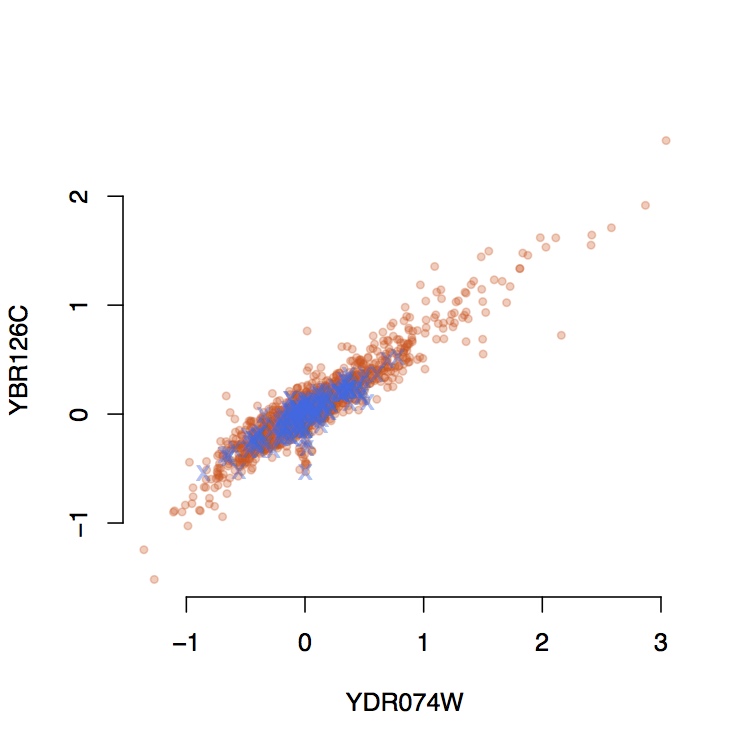}
\caption{ Some examples from the
  gene-knockout experiments in \citet{Kemmeren2014}, which
  will be discussed in more detail in
  Section~\ref{sec:geneknockout}.
   Each panel shows the distribution
of a target gene activity Y (on the respective y-axis), conditional
on a predictor gene activity X (shown on
respective x-axis). Blue crosses show observational data and
red dots show interventional data. The interventions do not occur on
any of the shown genes. The conditional distribution of $Y$, given
$X$, is not
invariant for the  examples in the first row, while invariance cannot
be rejected for the two examples in the bottom row. Take the
example of the bottom left panel.  The variance of the activity of
gene $\mathit{YMR321C}$ is clearly higher for interventional than observational
data, so we can reject that the invariance assumption holds for the
empty set $S=\emptyset$. However, if conditioning on the activity $X$ of
gene $\mathit{YPL273W}$, the conditional distribution of the activity $Y$ of gene
$\mathit{YMR321C}$ is not
significantly different between interventional and observational data,
so that the set $S=\{\mathit{YPL273W}\}$ 
fulfils the invariance
assumption~\eqref{eq:lincausal},
at least approximately.   }
\label{fig:examplesK}
\end{center}
\end{figure}

We observe each unit $i$ in only one experimental setting.
The distribution of the error $\varepsilon^\e$ is assumed to stay
identical across all environments (though see
Sections~\ref{sec:violations} and~\ref{sec:modelmismain} for approaches when this assumption is violated).
It is in general not possible to estimate the correlation between the
noise variables $\varepsilon^{e}_i$ and $\varepsilon^{f}_i$ for a
single unit~$i$ in
different hypothetical environments  $e$ and $f$, as the outcome is
observed for only one environment \citep{dawid2007counterfactuals,
  dawid2012decision}. Knowledge of the correlation would be
necessary to answer counterfactual questions about the outcome. Knowledge of the
correlation is not necessary for our method.

We deliberately avoid the term ``causality'' in Assumption~\ref{assum:invariant} in order to keep it purely mathematical.
 Proposition~\ref{propos:sem} 
 establishes a link to causality by
 showing that the parents of $Y$ in a structural equation model (SEM)
 satisfy Assumption~\ref{assum:invariant}. 
 In other words, 
 the variables that have a
 direct causal effect on $Y$ in a SEM form a set $S^*$
 for which Assumption~\ref{assum:invariant} is satisfied. 
This must not necessarily be true for the variables that have an (in)direct effect on $Y$, i.e., the ancestors of $Y$.
However, the
set $S^*$ is not necessarily unique. 
For a given set of experimental conditions $\E$, there can be multiple vectors~$\gamma^*$ that satisfy~\eqref{eq:lincausal}. For example, if only observational data are available, i.e.\ all environments are identical, 
it is apparent that for any model~\eqref{eq:lincausal} the distribution $F_{\varepsilon}$ of the residuals $\varepsilon^e$ does not depend on $e$.
If additionally $(X,Y)$ have a joint Gaussian distribution and $X$ and $Y$ are not independent, for example, then one can find 
a solution~$\gamma^*$ to~\eqref{eq:lincausal} for every subset $S^* \subseteq \{1, \ldots, p\}$.
The inference we propose works for any possible choice among the set of solutions. We can at most identify the subset of $S^*$ that is common among all possible solutions of~\eqref{eq:lincausal}, 
see Section~\ref{sec:iden} for settings with complete identifiability.

It is perhaps easiest to think about the example of a linear structural
equation model (SEM), as
defined in Section~\ref{sec:lgsem}, see also Figure~\ref{fig:examplesSEM} in Appendix~\ref{app:example}. 
We show in the following proposition
that the set of parents of $Y$ in a linear SEM is a valid set $S^*$
satisfying~\eqref{eq:lincausal}.

\begin{prop} \label{propos:sem}
Consider a linear structural equation model, as formally defined in
Section~\ref{sec:lgsem}, for the variables $(X_1=Y, X_2,\ldots, X_p, X_{p+1})$,  
with coefficients $(\beta_{jk})_{j,k=1, \ldots, p+1}$,
whose structure is given by a directed acyclic graph.
The independence assumption on the noise variables in
Section~\ref{sec:lgsem} can here be replaced by the strictly weaker
assumption that $\varepsilon^\e_{1} \independent \{ \varepsilon^\e_j; j\in
\AN{1}\}$ for all environments $e\in \E$, where $\AN{1}$ are
the ancestors of $Y$. Then Assumption~\ref{assum:invariant} holds
for the parents of $Y$, namely $S^* = \PA{1}$, and $\gamma^* =
\beta_{1,\cdot}$ as defined in Section~\ref{sec:lgsem},
under the following assumption: 
\quote{for each
  $\e \in \E$: the 
experimental setting $\e$ arises by one or several interventions on variables from $\{X_2,\ldots ,X_{p+1}\}$ but interventions on $Y$ are not
allowed;
here, we allow for do-interventions 
\citep{Pearl2009} (see also Section \ref{sec:idfirst}, and note that the assigned values can be random, too), or
soft-interventions \citep{Eberhardt2007} (see also Sections~\ref{sec:det} and~\ref{sec:idlast}).
}
\end{prop}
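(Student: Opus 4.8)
The plan is to read off both parts of Assumption~\ref{assum:invariant} directly from the structural assignments, using that none of the allowed interventions touches the equation for $Y = X_1$. Write that assignment as $X_1 = \mu + \sum_{k \in \PA{1}} \beta_{1k} X_k + \varepsilon_1 = \mu + X\gamma^* + \varepsilon_1$, where $\gamma^* = \beta_{1,\cdot}$ has support $S^* = \PA{1}$ by definition (the entries $\beta_{1k}$ with $k \notin \PA{1}$ vanish). Since every environment $e \in \E$ is generated by interventions on variables in $\{X_2,\ldots,X_{p+1}\}$ only, this assignment appears unchanged in the SEM of environment $e$; hence $Y^e = \mu + X^e\gamma^* + \varepsilon_1^e$, and setting $\varepsilon^e := \varepsilon_1^e$ the noise has the fixed law $F_\varepsilon$ (the law of $\varepsilon_1$), with mean zero and finite variance, across all $e$. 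This establishes the regression part of Assumption~\ref{assum:invariant}; it remains to prove the independence $\varepsilon^e \independent X^e_{S^*}$.

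For this, the \emph{key step} is a functional representation of the intervened system. First I would recall the standard fact that in a DAG-based SEM every variable is a measurable function of the noise terms attached to it and to its ancestors, and spell out how the admissible interventions act on the graph: a do-intervention deletes the incoming edges of its target (and replaces that variable by the assigned, possibly random, value), while a soft intervention on a node $X_j$ replaces its assignment by one with the same parents (or a subset thereof) and a fresh, mutually independent noise variable. In either case the ancestor sets do not grow, so in the post-intervention graph of environment $e$ the ancestors of any $X_k$ with $k \in \PA{1}$, together with $X_k$ itself, remain contained in $\AN{1}$ (parents of $Y$ are non-descendants of $Y$, hence their ancestors lie in $\AN{1}$). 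Therefore each $X_k^e$, $k \in \PA{1}$, is a measurable function of $\{\varepsilon_j^e : j \in \AN{1}\}$ together with the exogenous intervention quantities: the assigned do-values and the fresh noise variables of any soft interventions.

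It then suffices to check that $\varepsilon_1^e$ is independent of this whole collection. Independence from $\{\varepsilon_j^e : j \in \AN{1}\}$ is exactly the weakened noise hypothesis in the statement; independence from the fresh soft-intervention noises holds because, by the SEM definition of Section~\ref{sec:lgsem}, these are mutually independent and independent of all original error terms; and independence from the do-values holds because they are deterministic or, if random, drawn exogenously. Combining, $\varepsilon_1^e \independent X^e_{\PA{1}}$ for every $e \in \E$, so Assumption~\ref{assum:invariant} holds with $S^* = \PA{1}$ and $\gamma^* = \beta_{1,\cdot}$, the intercept $\mu$ being the constant term of the $X_1$-assignment.

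I expect the main obstacle to be making the second paragraph precise: one must fix exactly which families of do- and soft interventions are admissible, and then prove the functional representation for them while ruling out that a soft intervention on a variable in $\AN{1}$ introduces a dependence of some $X_k^e$, $k \in \PA{1}$, on a noise variable $\varepsilon_j^e$ with $j \notin \AN{1}$ --- a quantity the hypothesis does not control --- and while checking that acyclicity is preserved so that the ``function of ancestral noises'' statement is meaningful. Everything else reduces to bookkeeping with the SEM formalism of Section~\ref{sec:lgsem}.
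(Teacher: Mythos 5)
Your proposal is correct and follows essentially the same route as the paper: since none of the allowed interventions touches the structural assignment for $Y$, the equation $Y^e = X^e\gamma^* + \varepsilon_1^e$ with fixed noise law carries over to every environment, and the independence $\varepsilon_1^e \independent X^e_{\PA{1}}$ follows because each parent of $Y$ is a function of ancestral (environment-$e$) noises and exogenous intervention quantities, which the weakened noise assumption and the exogeneity of the intervention randomness render jointly independent of $\varepsilon_1^e$. The paper states this in one line by appealing to the definition of the interventions in Section~\ref{sec:intervv}; your write-up simply makes that bookkeeping explicit.
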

\begin{proof}
It follows by the definition of the interventions in~Section
\ref{sec:intervv} and because the interventions do not act on the target variable $Y$, that $
Y^e = \sum_{j \in \PA{1}} \beta_{1,j} X_j^e + \varepsilon_Y^\e $ for 
all $\e  \in \E$, where $\varepsilon_Y^\e=\varepsilon_1^\e$ is independent of $X_{\PA{1}}$
and has the same distribution for all $\e \in \E$. Thus,
Assumption~\ref{assum:invariant} holds.
\end{proof}

We remark that Proposition \ref{propos:sem} can be generalised to include some
hidden variables: the exact statement is given in Proposition
\ref{propos:semb} in Appendix~\ref{app:propsemb}.

\if0
Note that the condition \Jonas{(ii)} in Proposition~\ref{propos:sem} is weaker than what
is sometimes referred to as stability \citep{Pearl2009} since one could
even allow for changes in the graphical structure of the model \Jonas{wenn
  wir oben do- oder soft interventions als moeglichkeiten angeben, stimmt
  dieser satz nicht mehr},
\fi
Instead of allowing only do- or soft-interventions in
Proposition~\ref{propos:sem}, we can allow for more general interventions
which could change the structural equations for $X_2,\ldots ,X_{p+1}$
(including for example a change in the graphical structure of the model
among the variables $X_2,\ldots ,X_{p+1}$),
as long as the conditional distribution of $Y^\e$ given $X_{S^{\caus}}^\e$
remains the 
same. Such a weaker requirement is 
sometimes referred to as ``modularity'' \citep{Pearl2009} 
or
what is called ``autonomy'' \citep{Haavelmo1944, Aldrich1989}; 
structural
equations are autonomous if whenever we replace one of them due to an
intervention, all other structural equations do not change, they remain invariant. The
remaining part of the condition in Proposition~\ref{propos:sem} about
excluding interventions on the target variable $Y$ is often verifiable in many
applications; see Sections~\ref{sec:violations} and~\ref{sec:modelmismain} for violations of this assumption.
\if0 
The condition is closely related to  what is called ``autonomy''
\citep{Aldrich1989} of structural equations: structural equations are
autonomous if whenever we replace one of them due to an intervention, all
other structural equations remain invariant. The condition about
  excluding interventions on the target variable is often verifiable in
  many applications.
  \fi

Proposition~\ref{propos:sem} refers to standard linear SEMs that do not allow for feedback cycles.
We may, however, include feedback into the SEM and consider equilibrium solutions of the new set of equations.
The independence assumption between $\varepsilon^\e$ and $X_{S^*}^\e$ allows for some feedback cycles in the linear SEM.
The independence assumption prohibits, however, cycles that include the target variable $Y$. We will leave it as an open question to what extent the approach can be generalised to more general forms of feedback models.

It is noteworthy that our inference is valid for \emph{any} set that satisfies Assumption~\ref{assum:invariant}
and not only parents in a linear SEM. 
For the
following statements we do not specify whether the set $S^*$ refers to the set of parents in
a linear SEM or any 
other set that satisfies~\eqref{eq:lincausal}, as
the confidence guarantees will be valid in either case. 
Proposition~\ref{prop:potentialoutcomes} in Appendix~\ref{app:potentialoutcomes} discusses some relationship to the potential outcome framework.

\ifgamma
\subsection{Plausible causal predictors and plausible causal
  coefficients} \label{sec:21}
The assumption of invariance in~\eqref{eq:lincausal} does in general not
uniquely identify the causal coefficients. 
Define for $\gamma\in \mathbb{R}^p$ and $S \subseteq \{1, \ldots, p\}$
the null hypothesis
$H_{0,\gamma, S} (\E)$ as
\begin{equation} \label{eq:H0gS}
H_{0,\gamma,S}(\E):
\quad  
\gamma_k = 0 \text{ if } k \notin S \;\; \text{ and }\;\;
\left\{\begin{array}{l}
\exists F_\varepsilon \mbox{
  such that for
  all } \e \in \E \\ 
  Y^\e = X^\e \gamma + \varepsilon^\e, \mbox{  where }
\varepsilon^\e \independent X^\e_{S} \mbox{ and }  \varepsilon^\e \sim
F_\varepsilon.
\end{array}
\right.
\end{equation}
As stated above, we have dropped the constant intercept notationally.
The null hypothesis is, under Assumption~\ref{assum:invariant},
fulfilled for the pair $\gamma^{\caus}$ and $S^{\caus}$ with the true error distribution, that
is $H_{0,\gamma^{\caus}, S^{\caus}}(\E)$ is true. This fact will
guarantee the coverage property of the estimator. 
The null
hypothesis $H_{0,\gamma, S}(\E)$ is in general not only fulfilled for
$\gamma^{\caus}$  and
its support $S^{\caus}$
but also  potentially for other vectors $\gamma\in \mathbb{R}^p$. This
is true especially if the experimental settings $\E$ are very similar to each
other. In the extreme example of just a single environment, $|\E|=1$, and
a multivariate Gaussian distribution for $(X,Y)$, we can find for any
set $S\subseteq\{1,\ldots,p\}$ a vector $\gamma$ with support $S$ that
fulfills the null hypothesis $H_{0,\gamma, S}(\E)$, namely by using the
regression coefficient when regressing $Y$ on $X_S$. If the interventions that produce the environments $\E$
are stronger and we have more of those environments, the
set of vectors that fulfill the null becomes smaller. 
 We call vectors and sets that fulfill the null hypothesis
plausible causal coefficients and plausible causal predictors, respectively. 
\begin{defin}[Plausible causal predictors and coefficients] 
$\mbox{   }$
\begin{enumerate}[(i)]
\item
We call the variables $S \subseteq \{1, \ldots, p\}$ \emph{plausible causal
predictors} under~$\E$ if the following null hypothesis holds true:
\begin{equation}\label{eq:H0S}
H_{0,S}(\E): \quad  \exists \gamma \in \mathbb{R}^p \text{ such that } H_{0,\gamma,S}(\E) \text{ is true}.
\end{equation}.
\item Analogously, 
we define
the set
$\Gamma_S(\E)$ of
\emph{plausible causal coefficients for the set $S\subseteq\{1,\ldots,p\}$}
and the global set $\Gamma(\E)$ of \emph{plausible causal coefficients}
under~$\E$  as
\begin{align}\Gamma_S(\E) \; &:= \; \{\gamma\in \mathbb{R}^p: \; \; 
 H_{0,\gamma,S}(\E) \mbox{ is true}\}, \label{eq:GIS} \\
\label{eq:GI}
 \Gamma(\E) \; &:= \bigcup_{S\subseteq\{1,\ldots,p\}} \Gamma_S(\E).
\end{align}  
\end{enumerate}
\end{defin}
Thus,
$$
\Gamma(\E_1) \supseteq \Gamma(\E_2) \quad \text{ for two sets of environments } \E_1, \E_2 \text{ with } \quad \E_1
\subseteq \E_2   .
$$
The global set of plausible causal coefficients~$\Gamma(\E)$ is, in other words, shrinking as we enlarge the
set~$\E$ of possible experimental settings. 

The null hypothesis $H_{0,S}(\E)$ in~\eqref{eq:H0S} can be simplified. Writing
\begin{equation} \label{eq:betapred}
\beta^{\pred,\e}(S) \;:=\; \mbox{argmin}_{\beta\in \mathbb{R}^{p}: \beta_k=0 \text{ if } k \notin S} \;  E (Y^\e-X^\e \beta)^2
\end{equation}
for the least-squares population regression coefficients when regressing the target of
interest onto the variables in $S$ in experimental setting $\e\in\E$,
we obtain the equivalent formulation of the null hypothesis for set $S\subseteq\{1,\ldots,p\}$,
\begin{align}   
H_{0,S}(\E): \quad 
\left\{
\begin{array}{l}
\exists \beta \in
  \mathbb{R}^p \text{ and } \exists F_\varepsilon \mbox{ such that for all }\e \in \E \text{ we have }
\\ 
\beta^{\pred,\e}(S)\equiv\beta\mbox{  and  }  Y^\e = X^\e \beta + \varepsilon^\e, \mbox{  where }
\varepsilon^\e \independent X^\e_{S} \mbox{ and }  \varepsilon^\e \sim
F_\varepsilon.
\end{array}
\right.
\label{eq:H0Sregr}
\end{align}
We conclude that 
\begin{equation} \label{eq:GS} 
{\Gamma}_S(\E) =\left\{ \begin{array}{cl} \emptyset & \mbox{if } H_{0,S}(\E)
    \mbox{ is false} \\ \beta^{\pred,\e}(S) &
    \mbox{otherwise}. \end{array}\right.
\end{equation}
In other words,
the set of plausible causal coefficients for a set
$S$ is either empty or contains only the population regression vector.
We will make use of this fact further below in Section~\ref{sect:invariant} when computing confidence sets.

\subsection{Identifiable causal predictors} \label{sec:22}
We can now define the
identifiable causal predictors as the variables that have a non-vanishing contribution for all plausible causal coefficients.
\ifgamma 
\begin{defin} 
The identifiable causal predictors 
 under interventions~$\E$ are defined as the set of
all variables that have non-vanishing coefficients for all plausible causal coefficients,
\begin{equation} \label{eq:ident}
S(\E) \; := \;  
\bigcap_{\gamma \in \Gamma(\E)} \{k: \gamma_k \neq 0\} 
\; = \; 
\bigcap_{S\,:\, H_{0,S}(\E) \text{ is true }} S.
\end{equation}
\end{defin}
\else 
\begin{defin}  
The \emph{identifiable causal predictors} 
 under interventions~$\E$ are defined as the following subset of plausible causal predictors
\begin{equation} \label{eq:ident}
S(\E) \; := \;  
\bigcap_{S\,:\, H_{0,S}(\E) \text{ is true }} S
\; = \; 
\bigcap_{\gamma \in \Gamma(\E)} \{k: \gamma_k \neq 0\}.
\end{equation}
\end{defin}
\fi
Under Assumption~\ref{assum:invariant} we have $\gamma^* \in
\Gamma(\E)$ and the identifiable causal predictors are thus a subset
of the true causal predictors, 
\begin{equation*} 
S(\E) \subseteq S^{\caus}.
\end{equation*}
The set of identifiable causal predictors under interventions $\E$ is growing
monotonically if we enlarge the set~$\E$,
\begin{equation*} 
S(\E_1)\subseteq S(\E_2) \quad \mbox{ if } \quad \E_1
\subseteq \E_2 .
\end{equation*}
In particular, if $|\E|=1$ (for example, there is only observational data), then 
$ S(\E) = \emptyset$ 
because $\gamma \equiv 0 \in \Gamma(\E)$ is a plausible causal coefficient.
The set of identifiable causal predictors under a single
intervention setting 
is thus empty
and we make no statement as to which variables might be causal.

In Section~\ref{sec:iden}, we examine conditions for structural
equation models (see Proposition~\ref{propos:sem}) under which 
$S(\E)$ is identical to the parents of $Y$ we thus have complete identifiability of the causal coefficients.
In practice, the set $\E$ of experimental settings might often be such that
$S(\E)$ identifies some but not all parents of $Y$ in a SEM.
\else

\subsection{Plausible causal predictors and identifiable causal predictors} \label{sec:21new}
In general, $(\gamma^*, S^*)$ is not the only pair that satisfies the assumption of invariance in~\eqref{eq:lincausal}. 
We therefore define for $\gamma\in \mathbb{R}^p$ and $S \subseteq \{1, \ldots, p\}$
the null hypothesis
$H_{0,\gamma, S} (\E)$ as
\begin{equation} \label{eq:H0gS}
H_{0,\gamma,S}(\E):
\quad  
\gamma_k = 0 \text{ if } k \notin S \;\; \text{ and }\;\;
\left\{\begin{array}{l}
\exists F_\varepsilon \mbox{
  such that for
  all } \e \in \E \\ 
  Y^\e = X^\e \gamma + \varepsilon^\e, \mbox{  where }
\varepsilon^\e \independent X^\e_{S} \mbox{ and }  \varepsilon^\e \sim
F_\varepsilon.
\end{array}
\right.
\end{equation}

As stated above, we have dropped the constant intercept notationally.
The variables that appear in \emph{any} set $S$ that satisfies 
$H_{0, S}(\E)$, we call 
plausible causal predictors. 
\begin{defin}[Plausible causal predictors and coefficients] 
$\mbox{   }$
\begin{enumerate}[(i)]
\item
We call the variables $S \subseteq \{1, \ldots, p\}$ \emph{plausible causal
predictors} under~$\E$ if the following null hypothesis holds true:
\begin{equation}\label{eq:H0S}
H_{0,S}(\E): \quad  \exists \gamma \in \mathbb{R}^p \text{ such that } H_{0,\gamma,S}(\E) \text{ is true}.
\end{equation}
\item
The \emph{identifiable causal predictors} 
 under interventions~$\E$ are defined as the following subset of plausible causal predictors
\begin{equation} \label{eq:ident}
S(\E) \; := \;  
\bigcap_{S\,:\, H_{0,S}(\E) \text{ is true }} S
\; = \; 
\bigcap_{\gamma \in \Gamma(\E)} \{k: \gamma_k \neq 0\}.
\end{equation}
\end{enumerate}
\end{defin}
Here, $\Gamma(\E)$ is defined in~\eqref{eq:Gammahatc} below (the
second equation in~\eqref{eq:ident} can be ignored for now). 
Under Assumption~\ref{assum:invariant}, $H_{0,\gamma^*, S^*}(\E)$ is true and 
therefore $S^*$ are plausible causal predictors, that is $H_{0,S^*}(\E)$ is correct, too.
The identifiable causal predictors are thus a subset of the true causal predictors, 
\begin{equation*} 
S(\E) \subseteq S^{\caus}.
\end{equation*}
This fact will
guarantee the coverage properties of the estimators we define below. Furthermore, the set of identifiable causal predictors under interventions $\E$ is growing
monotonically if we enlarge the set~$\E$,
\begin{equation*} 
S(\E_1)\subseteq S(\E_2) \quad 
\text{ for two sets of environments } \E_1, \E_2 \text{ with }
 \quad \E_1
\subseteq \E_2 .
\end{equation*}
In particular, if $|\E|=1$ (for example, there is only observational data), then 
$ S(\E) = \emptyset$ 
because $H_{0,\emptyset}(\E)$ will be true.
The set of identifiable causal predictors under a single
environment 
is thus empty
and we make no statement as to which variables are causal.

In Section~\ref{sec:iden}, we examine conditions for structural
equation models (see Proposition~\ref{propos:sem}) under which 
$S(\E)$ is identical to the parents of $Y$ we thus have complete identifiability of the causal coefficients.
In practice, the set $\E$ of experimental settings might often be such that
$S(\E)$ identifies some but not all parents of $Y$ in a SEM.

\subsection{Plausible causal coefficients}
\label{sec:22new}
We have seen that the null
hypothesis~\eqref{eq:H0gS} $H_{0,\gamma, S}(\E)$ is in general not only fulfilled for
$\gamma^{\caus}$  and
its support $S^{\caus}$
but also  potentially for other vectors $\gamma\in \mathbb{R}^p$. This
is true especially if the experimental settings $\E$ are very similar to each
other. If we consider again the extreme example of just a single environment, $|\E|=1$, and
a multivariate Gaussian distribution for $(X,Y)$, we can find for any
set $S\subseteq\{1,\ldots,p\}$ a vector $\gamma$ with support $S$ that
fulfills the null hypothesis $H_{0,\gamma, S}(\E)$, namely by using the
regression coefficient when regressing $Y$ on $X_S$. If the interventions that produce the environments $\E$
are stronger and we have more of those environments, the
set of vectors that fulfill the null becomes smaller. 
 We call vectors that fulfill the null hypothesis
plausible causal coefficients.
\begin{defin}[Plausible causal coefficients]
We define
the set
$\Gamma_S(\E)$ of
\emph{plausible causal coefficients for the set $S\subseteq\{1,\ldots,p\}$}
and the global set $\Gamma(\E)$ of \emph{plausible causal coefficients}
under~$\E$  as
\begin{align}\Gamma_S(\E) \; &:= \; \{\gamma\in \mathbb{R}^p: \; \; 
 H_{0,\gamma,S}(\E) \mbox{ is true}\}, \label{eq:GIS} \\
\label{eq:GI}
 \Gamma(\E) \; &:= \bigcup_{S\subseteq\{1,\ldots,p\}} \Gamma_S(\E).
\end{align}  
\end{defin}
Thus,
$$
\Gamma(\E_1) \supseteq \Gamma(\E_2) \quad \text{ for two sets of environments } \E_1, \E_2 \text{ with } 
\quad \E_1
\subseteq \E_2   .
$$
The global set of plausible causal coefficients~$\Gamma(\E)$ is, in other words, shrinking as we enlarge the
set~$\E$ of possible experimental settings. 

The null hypothesis $H_{0,S}(\E)$ in~\eqref{eq:H0S} can be simplified. Writing
\begin{equation} \label{eq:betapred}
\beta^{\pred,\e}(S) \;:=\; \mbox{argmin}_{\beta\in \mathbb{R}^{p}: \beta_k=0 \text{ if } k \notin S} \;  E (Y^\e-X^\e \beta)^2
\end{equation}
for the least-squares population regression coefficients when regressing the target of
interest onto the variables in $S$ in experimental setting $\e\in\E$,
we obtain the equivalent formulation of the null hypothesis for set $S\subseteq\{1,\ldots,p\}$,
\begin{align}   
H_{0,S}(\E): \quad 
\left\{
\begin{array}{l}
\exists \beta \in
  \mathbb{R}^p \text{ and } \exists F_\varepsilon \mbox{ such that for all }\e \in \E \text{ we have }
\\ 
\beta^{\pred,\e}(S)\equiv\beta\mbox{  and  }  Y^\e = X^\e \beta + \varepsilon^\e, \mbox{  where }
\varepsilon^\e \independent X^\e_{S} \mbox{ and }  \varepsilon^\e \sim
F_\varepsilon.
\end{array}
\right.
\label{eq:H0Sregr}
\end{align}
We conclude that 
\begin{equation} \label{eq:GS} 
{\Gamma}_S(\E) =\left\{ \begin{array}{cl} \emptyset & \mbox{if } H_{0,S}(\E)
    \mbox{ is false} \\ \beta^{\pred,\e}(S) &
    \mbox{otherwise}. \end{array}\right.
\end{equation}
In other words,
the set of plausible causal coefficients for a set
$S$ is either empty or contains only the population regression vector.
We will make use of this fact further below in Section~\ref{sec:conf} when computing empirical estimators.

\fi

\ifgamma 
\section{Confidence sets for linear causal coefficients}\label{sec:conf}
We would like to get confidence intervals for the linear causal
coefficients when
observing the distribution of $(X^\e,Y^\e)$ under different
experimental conditions $\e \in \E$. 

Recall again the definition~\eqref{eq:GIS} of the plausible causal
coefficients $\Gamma_S(\E)$ for the set $S\subseteq\{1,\ldots,p\}$ of variables. 
Suppose for the moment that confidence sets $\hat{\Gamma}_S(\E)$ for $\Gamma_S(\E)$
are available. Then the construction of the confidence sets for the
causal coefficients can work as follows. 

\begin{mdframed}[roundcorner=10pt,frametitle=Generic method to generate confidence
  sets with invariant prediction]
\begin{enumerate}[1)]
\item For each set $S\subseteq\{1,\ldots,p\}$, construct a set
$\hat{\Gamma}_S(\E)$ (we will discuss later concrete examples).
\item Set \begin{equation} \label{eq:Gammahatc} \hat{\Gamma}(\E) := \bigcup_{S\subseteq\{1,\ldots,p\}}
  \hat{\Gamma}_S(\E).\end{equation}
\item Define $\hat{S}(\E)$ as 
\begin{equation} \label{eq:hatcausal}
\hat{S}(\E) \; := \; 
\bigcap_{\gamma \in \hat \Gamma(\E)} \{k: \gamma_k \neq 0\}.  
\end{equation}
\end{enumerate}
\end{mdframed}
Note that if $\hat{\Gamma}(\E)$ is equivalent to inverting a test for $H_{0,\gamma,S}(\E)$, then~\eqref{eq:hatcausal} can be written as 
\begin{equation} \label{eq:hatcausal2} 
\hat{S}(\E) \; = \; \bigcap_{\gamma: H_{0,\gamma,\mathrm{supp}(\gamma)}(\E) \text{ not rejected}} \{k:\gamma_k\neq 0\} \; = \; \bigcap_{S: H_{0,S}(\E) \text{ not rejected}} S,
\end{equation}
where the latter equality follows 
as not rejecting $H_{0,S}(\E)$ is
equivalent to the event that for a $\gamma\in \mathbb{R}^p$
with
$\mathrm{supp}(\gamma)=S$, the null $H_{0,\gamma,S}(\E)$ is not
rejected.  
\Jonas{mhh. with our formulation we have that $H_{0,S}$ with $S \supseteq S^*$ is always true since we just require $\gamma$ to vanish outside $S$ and don't require $S = supp(\gamma)$.
to be honest, I don't get the whole paragraph. somehow we have to impose conditions on the empirical tests, don't we?
}

\Nicolai{The condition that the confidence intervals are based on a
  test inversion was never spelled out explicitely (and it should
  be). But I dont see why $H_{0,S}$ is always accepted. You need to
  find a $\gamma$ which vanishes outside of $S$ for which the null
  $H_{0,\gamma,S}$ is accepted and I dont see why this would hold for
  every set $S$? Probably I am just missing something here... }\\
  \Jonas{sorry, i meant ``with $S \supseteq S^*$''. somehow, my problem is that we need to specify many things because now we are talking about empirical tests, no? concretely: when is $H_{0,S}$ rejected? how does it relate to the rejection of $H_{0,\gamma,S}$ for a $\gamma$?}
\vspace{0.5cm}\\
\Jonas{Vorschlag 1 von 3: 
Note that if $\hat{\Gamma}_S(\E)$ is equivalent to inverting a test, i.e., if 
$\hat{\Gamma}_S(\E) = \{ \gamma \in \mathbb{R}^p \, :\, H_{0,\gamma, S}(\E) \text{ is not rejected} \}$
and if $H_{0,S}(\E)$ is accepted if and only if
$H_{0,\gamma, S}(\E)$ is accepted for some $\gamma$ (GEHT DAS UEBERHAUPT?!), then~\eqref{eq:hatcausal} can be written as 
\begin{equation} \label{eq:hatcausal2} 
\hat{S}(\E) \; = \; \bigcap_{\gamma: H_{0,\gamma,S}(\E) \text{ not rejected for some } S} \{k:\gamma_k\neq 0\} \; = \; \bigcap_{S: H_{0,S}(\E) \text{ not rejected}} S,
\end{equation}
where for the latter equality,
``$\supseteq$'' follows from the fact that every set $\{k:\gamma_k\neq 0\}$ appearing on the left hand side also appears on the right hand side\footnote{Here, we have assumed that the tests are such that accepting 
$H_{0,\gamma, S}(\E)$
implies 
accepting 
$H_{0,\gamma, \mathrm{supp}(\gamma)}(\E)$
and 
that
$H_{0,\gamma, S}(\E)$
is always rejected if $\mathrm{supp}(\gamma) \not \subseteq S$. 
}.}
\Nicolai{ 
Wegen der footnote: der Teil ``that
$H_{0,\gamma, S}(\E)$
is always rejected if $\mathrm{supp}(\gamma) \not \subseteq S$'' ist
doch eigentlich in der Null schon enthalten, oder? Das muss man doch
nicht extra verlangen?} 
\Jonas{Hm, ich bin da nicht so sicher. Denn die Leute koennten ja alle moeglichen (teils dumme) Tests kosntruieren, oder? - Es ist halt wieder der Uebergang von math. Aussage zu emp. tests!?}
\Nicolai{Der Teil  ``we have assumed that the tests are such that accepting 
$H_{0,\gamma, S}(\E)$
implies 
accepting 
$H_{0,\gamma, \mathrm{supp}(\gamma)}(\E)$'' ist natuerlich strengt
genommen schon eine extra Annahme, die wir brauchen... Ist ja aber
wirklich nicht besonders stringent beziehungsweis man muss sich schon
anstrengen, dass es nicht stimmt, oder?
}
\Jonas{ja, sehe ich auch so. aber wenn man's nicht hinschreibt, ist es eigentlich falsch, oder?}
\vspace{0.5cm}\\
\else
\section{Estimation of identifiable causal predictors}\label{sec:conf}
We would like to estimate the set $S(\E)$ of identifiable causal predictors~\eqref{eq:ident} when
observing the distribution of $(X^\e,Y^\e)$ under different
experimental conditions $\e \in \E$. At the same time, we might be interested in obtaining confidence intervals for the linear causal
coefficients.

Recall again the definition~\eqref{eq:H0S} of the null hypothesis $H_{0,S}(\E)$.
Suppose for the moment that a statistical test for $H_{0,S}(\E)$ with size smaller than a significance level $\alpha$ is available. Then the construction of an estimator $\hat S(\E)$ and confidence sets $\hat{\Gamma}(\E)$ for the
causal coefficients can work as follows. 

\begin{mdframed}[roundcorner=10pt,frametitle=Generic method for invariant prediction]
\begin{enumerate}[1)]
\item For each set $S\subseteq\{1,\ldots,p\}$, test whether $H_{0,S}(\E)$ holds at level $\alpha$ (we will discuss later concrete examples).
\item Set $\hat{S}(\E)$ as 
\begin{equation} \label{eq:hatcausal}
  \hat{S}(\E) \; := \; \bigcap_{S: H_{0,S}(\E) \text{ not rejected}} S.
\end{equation}
\item For the confidence sets, define 
\begin{equation} \label{eq:Gammahatc} 
\hat{\Gamma}(\E) := \bigcup_{S\subseteq\{1,\ldots,p\}}
  \hat{\Gamma}_S(\E),
\end{equation}
where
\begin{equation} \label{eq:Gammahatsc} 
\hat{\Gamma}_S(\E)\; := \; \left\{ \begin{array}{cl} \emptyset & H_{0,S}(\E)
    \mbox{ can be rejected at level $\alpha$} \\ \hat{C}(S) &
    \mbox{otherwise}. \end{array}\right.  
\end{equation}
Here, $\hat{C}(S)$ is a $(1-\alpha)$-confidence set for the
regression vector $\beta^{\pred}(S)$ that is obtained by pooling the data.
\end{enumerate}
\end{mdframed}
\fi
As an example, consider again
  Figure~\ref{fig:examplesK}. Taking the example in the bottom left
  panel, we cannot reject $H_{0,S}(\E)$ for $S=\{\mathit{YPL273W}\}$. Hence we can see already from this plot that $\hat{S}(\E)$ is
  either empty or that $\hat{S}(\E)=\{\mathit{YPL273W}\}$. The
  latter case happens 
if no further set of variables is accepted that does not include the activity of
gene $\mathit{YPL273W}$ as predictor.

\ifgamma
The properties of the procedure depend on the specific form
$\hat{\Gamma}_S(\E)$ we choose to get a confidence region for the set of
plausible causal coefficients. To guarantee coverage of the true causal
coefficient and the true causal predictors we just need the following property.
\else
A justification for pooling the data in~\eqref{eq:Gammahatsc} is given in Section \ref{sec:pooling}.
(The construction  is also valid if the
confidence set is based only on data from a single environment, but a confidence set for the pooled data will
be smaller in general.)
This defines a whole family of estimators and confidence sets
as we have flexibility in the test we are using for the null hypothesis~\eqref{eq:H0S} and how the confidence interval $\hat{C}(S)$ is
constructed.

If the test and pooled confidence interval have the
claimed size and coverage probability, we can guarantee
coverage of 
the true causal predictors
and 
the true causal
coefficient, 
as shown below in Theorem~\ref{theo:main}.
\fi
\ifgamma
\begin{theorem}\label{theo:main}
Consider any $\gamma^*$ and $S^*$ such that Assumption 1 holds.  
Assume that $\hat{\Gamma}_S(\E)$, $S\subseteq\{1,\ldots,p\}$ are
constructed in such a way that
 $\gamma^*$ is contained in
$\hat{\Gamma}_S(\E)$ with probability $1 - \alpha$ if we choose 
$S=S^{\caus}$, that is
\begin{equation} \label{eq:coverageS} 
P\big[  \gamma^{\caus} \in  \hat{\Gamma}_{S^{\caus}}(\E) \big]\ge 1-\alpha.
\end{equation}
Then, the confidence sets $\hat{\Gamma}(\E)$ in~\eqref{eq:Gammahatc} and $\hat{S}(\E)$
in~\eqref{eq:hatcausal}
have  the desired coverage:
\begin{align*} P\big[\gamma^{\caus} \in \hat{\Gamma}(\E) \big]  & \ge
  \; 1- \alpha \quad \text{ and} \\ P\big[\hat{S}(\E)\subseteq S^{\caus} \big]  & \ge
  \; 1- \alpha.
\end{align*}
\end{theorem}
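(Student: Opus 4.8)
The plan is to reduce the theorem to two elementary containments between events, using nothing beyond the definitions~\eqref{eq:Gammahatc} of $\hat\Gamma(\E)$ and~\eqref{eq:hatcausal} of $\hat S(\E)$, together with the assumed coverage~\eqref{eq:coverageS} and monotonicity of $P$.

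First I would establish the coverage of $\hat\Gamma(\E)$. Since $\hat\Gamma(\E) = \bigcup_{S\subseteq\{1,\ldots,p\}}\hat\Gamma_S(\E)$ is a union that in particular contains the set $\hat\Gamma_{S^*}(\E)$, we have the event inclusion $\{\gamma^* \in \hat\Gamma_{S^*}(\E)\} \subseteq \{\gamma^* \in \hat\Gamma(\E)\}$, so~\eqref{eq:coverageS} immediately gives $P[\gamma^* \in \hat\Gamma(\E)] \ge P[\gamma^* \in \hat\Gamma_{S^*}(\E)] \ge 1-\alpha$, which is the first assertion.

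Next I would obtain the coverage of $\hat S(\E)$ from the pointwise implication that on the event $\{\gamma^* \in \hat\Gamma(\E)\}$ one has $\hat S(\E) \subseteq S^*$. On that event $\gamma^*$ is one of the vectors over which the intersection in~\eqref{eq:hatcausal} is taken, hence $\hat S(\E) = \bigcap_{\gamma \in \hat\Gamma(\E)}\{k : \gamma_k \neq 0\} \subseteq \{k : \gamma^*_k \neq 0\} = S^*$. Thus $\{\gamma^* \in \hat\Gamma(\E)\} \subseteq \{\hat S(\E) \subseteq S^*\}$, and combining this with the first part yields $P[\hat S(\E)\subseteq S^*] \ge 1-\alpha$.

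The argument is purely set-theoretic, so I do not expect any genuine analytic obstacle; the points that deserve care are conceptual rather than technical. First, one should remark why the hypothesis~\eqref{eq:coverageS} is something one can actually arrange: this is precisely where Assumption~\ref{assum:invariant} is used, since it guarantees that the true submodel $(\gamma^*,S^*)$ satisfies the null $H_{0,\gamma^*,S^*}(\E)$, so an honest level-$\alpha$ construction of $\hat\Gamma_{S^*}(\E)$ will contain $\gamma^*$ with probability at least $1-\alpha$. Second, $S^*$ need not be unique, so the statement should be read as holding simultaneously for every $(\gamma^*,S^*)$ satisfying Assumption~\ref{assum:invariant}; the proof is automatically uniform in that choice, because $\hat\Gamma(\E)$ and $\hat S(\E)$ are defined without any reference to $\gamma^*$.
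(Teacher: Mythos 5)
Your proof is correct and follows essentially the same route as the paper's: the first claim by monotonicity of the union over the guaranteed coverage of $\hat{\Gamma}_{S^{\caus}}(\E)$, and the second from the observation that $\gamma^{\caus} \in \hat{\Gamma}(\E)$ forces $\hat{S}(\E) \subseteq S^{\caus}$ via the intersection in~\eqref{eq:hatcausal}. Your added remarks on the role of Assumption~\ref{assum:invariant} and the non-uniqueness of $S^{\caus}$ are consistent with the discussion surrounding the theorem and do not change the argument.
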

\begin{proof} The first property follows immediately since
\begin{align*} 
P\big[\gamma^{\caus} \in \hat{\Gamma}(\E)\big] 
&\;=\; P\big[\gamma^{\caus}
\in \bigcup_{S\subseteq\{1,\ldots,p\}}  \hat{\Gamma}_S(\E) \big]   
\; \ge\; P\big[\gamma^{\caus} \in \hat{\Gamma}_{S^{\caus}}(\E)\big] 
\; \ge\; 1-\alpha,
\end{align*}
where the last inequality follows by assumption~\eqref{eq:coverageS}. The
second property follows then
since $\gamma^{\caus} \in \hat{\Gamma}(\E)$ implies 
$\hat{S}(\E)\subseteq S^{\caus}$.
\end{proof}
\else
\begin{theorem}\label{theo:main}
Assume that the estimator
$\hat{S}(\E)$ is
constructed according to~\eqref{eq:hatcausal} with a valid test for~$H_{0,S}(\E)$  for all sets $S
\subseteq \{1, \ldots, p\}$ at level $\alpha$ in the sense that for all $S$,
$\sup_{P\,:\,H_{0,S}(\E) \text{ true}} \; P[H_{0,S}(\E) \text{ rejected}]
\leq \nolinebreak \alpha$.
Consider now a distribution $P$ over $(Y,X)$ and consider any $\gamma^*$ and $S^*$ such that Assumption~\ref{assum:invariant} holds.  
Then, $\hat{S}(\E)$
satisfies
\begin{equation*} 
P\big[\hat{S}(\E)\subseteq S^{\caus} \big]   \ge
  \; 1- \alpha.
\end{equation*}
If, moreover,
for all $(\gamma,S)$ that satisfy Assumption~\ref{assum:invariant},
the confidence set $\hat{C}(S)$ in~\eqref{eq:Gammahatsc} satisfies
$P[\gamma \in \hat{C}(S)] \geq 1-\alpha$
then 
the set $\hat{\Gamma}(\E)$~\eqref{eq:Gammahatc} has coverage at least level
$1 - 2\alpha$:
\begin{equation*} 
P\big[\gamma^{\caus} \in \hat{\Gamma}(\E) \big]   \ge
  \; 1- 2\alpha.
\end{equation*} 
\end{theorem}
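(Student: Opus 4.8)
The plan is to prove the two conclusions separately, and in both cases to reduce the statement to an event about the single ``true'' set $S^*$, exploiting that $H_{0,S^*}(\E)$ holds under Assumption~\ref{assum:invariant}. For the first conclusion, observe that if the test for $H_{0,S^*}(\E)$ does not reject, then $S^*$ is one of the sets appearing in the intersection~\eqref{eq:hatcausal}, so that $\hat S(\E) = \bigcap_{S: H_{0,S}(\E)\text{ not rejected}} S \subseteq S^*$. Hence the event $\{\hat S(\E)\not\subseteq S^*\}$ is contained in the event $\{H_{0,S^*}(\E)\text{ rejected}\}$. Since Assumption~\ref{assum:invariant} guarantees that $H_{0,\gamma^*,S^*}(\E)$ is true, and therefore $H_{0,S^*}(\E)$ is true (take $\gamma = \gamma^*$ in~\eqref{eq:H0S}), the validity of the test at level $\alpha$ gives $P[H_{0,S^*}(\E)\text{ rejected}] \le \alpha$, and the first bound follows.

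For the second conclusion I would use a union bound over two bad events. Define $A := \{H_{0,S^*}(\E)\text{ rejected}\}$ and $B := \{\gamma^* \notin \hat C(S^*)\}$. On the complement $A^c \cap B^c$, the set $S^*$ is not rejected, so by~\eqref{eq:Gammahatsc} we have $\hat\Gamma_{S^*}(\E) = \hat C(S^*)$, and since $\gamma^* \in \hat C(S^*)$ on $B^c$ we get $\gamma^* \in \hat\Gamma_{S^*}(\E) \subseteq \bigcup_{S} \hat\Gamma_S(\E) = \hat\Gamma(\E)$ by~\eqref{eq:Gammahatc}. Therefore $\{\gamma^* \notin \hat\Gamma(\E)\} \subseteq A \cup B$. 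Now $P(A) \le \alpha$ as argued above, and $P(B) \le \alpha$ by the assumed coverage of $\hat C(S)$ applied to the pair $(\gamma^*, S^*)$, which satisfies Assumption~\ref{assum:invariant}. A union bound gives $P[\gamma^* \notin \hat\Gamma(\E)] \le 2\alpha$, which is the claim.

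The only slightly delicate point is to make sure the reduction to $S^*$ is airtight: for the first part one must note that the intersection in~\eqref{eq:hatcausal} is over a nonempty collection whenever $H_{0,S^*}(\E)$ is not rejected (so the intersection is a genuine subset of $S^*$ rather than, say, the full index set by convention), and for the second part one must check that $\hat\Gamma_{S^*}(\E)$ really equals $\hat C(S^*)$ on the good event and not $\emptyset$ — this is exactly the case distinction in~\eqref{eq:Gammahatsc}. Neither is a real obstacle; the argument is essentially a two-line union bound once these bookkeeping facts are in place. I would write the proof in roughly the length of the already-displayed proof of the $\ifgamma$ branch, since the structure is identical apart from splitting off the extra $\alpha$ for the confidence set $\hat C(S^*)$.
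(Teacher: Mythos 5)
Your proof is correct and follows essentially the same route as the paper's: reduce both claims to the single set $S^{\caus}$, use that $H_{0,S^{\caus}}(\E)$ is true under Assumption~\ref{assum:invariant} so the intersection in~\eqref{eq:hatcausal} contains $S^{\caus}$ with probability at least $1-\alpha$, and for the coverage of $\hat{\Gamma}(\E)$ apply a union bound over the events $\{H_{0,S^{\caus}}(\E) \text{ rejected}\}$ and $\{\gamma^{\caus}\notin\hat{C}(S^{\caus})\}$. The bookkeeping points you flag (non-rejection of $S^{\caus}$ making the intersection a subset of $S^{\caus}$, and $\hat{\Gamma}_{S^{\caus}}(\E)=\hat{C}(S^{\caus})$ on the good event by~\eqref{eq:Gammahatsc}) are exactly what the paper uses implicitly, so no gap remains.
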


\begin{proof} 
The first property follows immediately since
\begin{align*} 
P\big[\hat{S}(\E)\subseteq S^{\caus}\big]
\; = \;
P\Big[\bigcap_{S: H_{0,S}(\E) \text{ not rejected}} S 
\subseteq S^{\caus}
\Big]
\; \ge\; 
P\big[H_{0,S^*}(\E) \text{ not rejected} \big] 
\; \ge\; 1-\alpha,
\end{align*}
where the last inequality follows by the assumption that the test for
$H_{0,S}$ is
valid at level~$\alpha$ for all sets $S\subseteq\{1,\ldots,p\}$. The
second property follows 
since 
$$
P\big[\gamma^{\caus} \notin \hat{\Gamma}(\E)\big]
\; \le \;
P\big[
H_{0,S^*}(\E) \text{ rejected  or } 
\gamma^* \notin \hat{C}(S^*)
\big]
\; \le \;
\alpha + \alpha = 2\alpha.
$$
\end{proof}
\fi
The confidence sets thus have the correct (conservative) coverage.
The estimator of the causal predictors will, with probability at
least $1-\alpha$, not erroneously
include non-causal predictors. Note that the statement is true for any
set of experimental or intervention settings. 
In the worst case, the set $\hat{S}(\E)$ might be empty but the
error control is valid nonetheless.

Since Theorem \ref{theo:main} holds for any $\gamma^*, S^*$ which fulfil
Assumption 1, and assuming the setting of Proposition \ref{propos:sem}, we
obtain the corresponding confidence statements for the causal
coefficients and causal variables in a linear structural equation model,
that is for $\gamma^* = \beta_{1,\cdot}$ and $S^* = \PA{1}$ in the notation of
Proposition \ref{propos:sem}.
\ifgamma
\Jonas{Vorschlag 2: 
Note that if we are 
only interested in estimating the set $S^*$ and not in confidence intervals, we may, instead of~\eqref{eq:hatcausal}, directly define 
$$
\hat{S}(\E) := \bigcap_{S:H_{0,S} \text{ not rejected}} S
$$
and obtain the correct coverage $P[\hat{S}(\E) \subseteq S^*] \geq 1 - \alpha$ as long as $P(H_{0,S^*} \text{ rejected}) \leq \alpha$.
Precise proof:
$$
\inf_{P \in H_{0,S^*}} 
P[\hat{S}(\E) \subseteq S^*]
= 
1 - \sup_{P \in H_{0,S^*}} 
P[\hat{S}(\E) \not \subseteq S^*]
\geq 
1 - \sup_{P \in H_{0,S^*}} P(H_{0,S^*} \text{ rejected}) 
\geq 1- \alpha.
$$
}
\Nicolai{ Wie waere es, wenn wir uns oben bei Vorschlag 1 oben
  einigen, und den Vorsclag 2 hier nochmals bringen als verkuerzte und
  verinfachende Version von Theorem 1? Allerdings finde ich die
  Notation $\inf_{P \in H_{0,S^*}} $ und $\sup_{P \in H_{0,S^*}}$   eher verwirrend. Sollen wir das
  nicht einfach weglassen? 
}
\Jonas{Sicher! :). Der precise proof war nur fuer mich um zu sehen, dass es stimmt.}
\else
\begin{remark}
\begin{itemize}
\item[(i)] We obtain the following empirical version of~\eqref{eq:ident}:
\begin{equation} \label{eq:hatcausal2} 
\hat{S}(\E) 
\; = \;
\bigcap_{\gamma \in \hat{\Gamma}(\E)} \{k:\gamma_k\neq 0\} 
\; = \; 
\bigcap_{S: {H}_{0,S}(\E) \text{ not rejected at }\alpha} S
\end{equation}
provided that 
if ${H}_{0,S}(\E)$ is not rejected, then
for all  $\gamma \in \hat{\Gamma}_S(\E)$ 
we have
$\mathrm{supp}(\gamma) \subseteq S$ 
and  
$H_{0,\mathrm{supp}(\gamma)}(\E)$ is not rejected either. 
\if0
\Jonas{done: write down proof (just for me)
\begin{align*}
\subseteq: &\text{ for each set (take a $\gamma$) on the lhs, there is a larger set on the rhs: } \\
&
\exists S: \gamma \in \hat \Gamma_S(\E) \text{ and } H_{0,S}(\E) \text{ not rejected; then use } \mathrm{supp}(\gamma) \subseteq S
\end{align*}
\begin{align*}
\supseteq: &\text{ for each set $S$ on the rhs, $\exists$ a smaller set $\tilde{S}$ s.t. there is larger set on the lhs: } \\
&
\text{ take } \gamma \in \hat \Gamma_S(\E) \text{ then } H_{0,supp(\gamma)}(\E) \text{ not rejected; then there is } \tilde{S} = supp(\gamma) \text{ on both sides}
\end{align*}}
\fi
\item[(ii)] In~\eqref{eq:Gammahatsc}, we have constructed confidence sets $\hat{\Gamma}_S(\E)$ based on a test for $H_{0,S}(\E)$. Alternatively, confidence sets
$\hat{\Gamma}_S(\E)$
may be available that are not based on a test procedure for
$H_{0,S}(\E)$. In this case, we may take them as a starting point and
define 
$\hat{S}(\E)$ using the first equality in~\eqref{eq:hatcausal2}, instead of~\eqref{eq:hatcausal}.
Analogously to Theorem~\ref{theo:main}, the correct coverage property of 
$\hat{\Gamma}_{S^*}(\E)$ then
implies confidence statements for $\hat{\Gamma}(\E)$ and $\hat{S}(\E)$.
\end{itemize}
\end{remark}
\fi

\subsection{Two concrete proposals}\label{sect:invariant}
\ifgamma
The missing piece in the generic procedure given
by~\eqref{eq:hatcausal} and~\eqref{eq:Gammahatc} is a confidence set $\hat{\Gamma}_S(\E)$ for a
given set of variables $S\subseteq\{1,\ldots,p\}$ that fulfils 
property~\eqref{eq:coverageS},
\[P\big[  \gamma^{\caus} \in  \hat{\Gamma}_{S^{\caus}}(\E) \big]\ge
1-\alpha.\]
\else
The missing piece in the generic procedure given
by~\eqref{eq:hatcausal} and~\eqref{eq:Gammahatc} is a test for
$H_{0,S}(\E)$ that is valid at level $\alpha$ for any
given set of variables $S\subseteq\{1,\ldots,p\}$ and thus implies 
\[
P\big[  H_{0,S^{\caus}}(\E) \text{ rejected}\big]\leq \alpha.
\]
\fi
To specify a concrete procedure and derive its statistical properties, we assume
throughout the paper that the data consist of $n$ independent  
observations.
Within each experimental setting $e$,
we assume that we receive $n_{\e}$ independent and identically distributed data points 
from $(X^\e, Y^\e)$
and
thus, $\sum_{\e \in \E} n_{\e} = n$.

We now propose  a way to construct such a test, but acknowledge that different choices are possible. 
Our construction  will be based on the fact that the
causal coefficients are identical to the regression 
effects in all experimental settings $e\in \E$ if we consider only variables in the
set $S^\caus$ of causal predictors.  

For experimental setting $\e \in \E$ and a subset
$S$ of variables, define the regression coefficients
  $\beta^{\pred,\e}(S) \in \mathbb{R}^{p}$ as above in~(\ref{eq:betapred}).
Define further the population residual standard deviations when regressing $Y^\e$ on variables $X_S^\e$ as 
\begin{align*}  
\sigma^\e(S) \; &:= [E(Y^\e - X^\e \beta^{\pred,\e}(S))^2]^{1/2}.
\end{align*}
These definitions are population quantities. The corresponding
sample quantities are denoted with a hat.
As mentioned above, under Assumption~\ref{assum:invariant}, for $S=S^\caus$,
the
regression effects are identical to the causal coefficients: for all $e \in E$, 
\[ \beta^{\pred,\e}(S^\caus)\equiv \gamma^\caus \qquad \mbox{and} \qquad 
\sigma^\e(S^\caus) \equiv \var(F_{\varepsilon})^{1/2}.\]
To get a test valid at level $\alpha$ for all subsets $S$ of predictor
variables,
we first
weaken $H_{0,S}(\E)$ in~\eqref{eq:H0Sregr} to
\begin{equation}   \label{eq:H0tSregr}
\tilde H_{0,S}(\E):\quad
\exists (\beta,\sigma)\in
  \mathbb{R}^p\times \mathbb{R}_+ \mbox{
    such that }
  \beta^{\pred,\e}(S)\equiv\beta\mbox{  and  } \sigma^\e(S)
  \equiv\sigma\mbox{ for all } \e\in \E.
\end{equation}
The null hypothesis $\tilde H_{0,S}(\E)$ is true whenever the original null hypothesis~\eqref{eq:H0Sregr} is
true. 
\ifgamma
Finally, we set 
\else
As in~\eqref{eq:Gammahatsc}, we set
\fi
\begin{equation*} 
\hat{\Gamma}_S(\E) :=\left\{ \begin{array}{cl} \emptyset & \tilde  H_{0,S}(\E)
    \mbox{ can be rejected at level $\alpha$} \\ \hat{C}(S) &
    \mbox{otherwise}. \end{array}\right.
\end{equation*}
\ifgamma
Here, $\hat{C}(S)$ is a $(1-\alpha/2)$-confidence set for the
regression vector $\beta^{\pred}(S)$ that is obtained by pooling the data.
A justification for pooling the data is given in Section \ref{sec:pooling}.
(The construction  is also valid if the
confidence set is based only on data from a single environment, but a confidence set for the pooled data will
be smaller in general.)
This defines a whole family of confidence sets
as we have flexibility in the test we are using for the null hypothesis~\eqref{eq:H0tSregr} and how the confidence interval $\hat{C}(S)$ is
constructed.   If the test and pooled confidence interval have the
claimed level and coverage probability, then we guarantee
property~\eqref{eq:coverageS} and the whole procedure will have the
correct coverage, as shown above in Theorem~\ref{theo:main}. 
\fi

\ifgamma
\Jonas{... und Vorschlag 3: 
Note that for this choice of $\hat{\Gamma}_S(\E)$,~\eqref{eq:hatcausal} can be written as 
\begin{equation} \label{eq:hatcausal2} 
\hat{S}(\E) \; = \; 
\bigcap_{\gamma \in \hat{\Gamma}(\E)} \{k:\gamma_k\neq 0\} \; = \;
\bigcap_{S: \tilde{H}_{0,S}(\E) \text{ not rejected at }\alpha/2} S,
\end{equation}
where for the latter equality we assume that 
if $\tilde{H}_{0,S}(\E)$ is not rejected, then
for all  $\gamma \in \hat{C}(S)$ 
we have
$\mathrm{supp}(\gamma) \subseteq S$ 
and  
$\tilde{H}_{0,\mathrm{supp}(\gamma)}(\E)$ is not rejected either. 
}
\fi

We now give a concrete example which we will use in the numerical
examples under the assumption of Gaussian errors and that the design matrix $\B{X}_\e$  of
all $n_\e$ samples in experimental setting $\e\in\E$ has full rank.
(We write the design matrix in bold letters, as opposed to the random variables $X^e$.)
The whole procedure is then a specific version of the general procedure
given further above, where we use a specific test in the first step (the second step is unchanged).
{\small 
\begin{mdframed}[roundcorner=10pt,frametitle={Method I: 
Invariant prediction using test on regression coefficients
} ] 
\begin{enumerate}[1)]
\item For each $S\subseteq\{1,\ldots,p\}$ and $\e\in \E$:
\begin{enumerate}
\item Let $I_\e$ with $n_e = |I_\e|$ be the set of observations where experimental setting
$\e\in \E$ was active. Likewise, let $I_{-\e}=\{1,\ldots,n\}\setminus
I_\e$ with $n_{-\e} := |I_{-\e}| $ be the set of observations when using only observations where
experimental setting $\e\in \E$ was \emph{not} active.   Let
$\B{X}_{\e,S}$ be the $n_\e \times (1+|S|)$-dimensional matrix when
using all samples in $I_\e$ and  all
predictor variables in $S$, adding an intercept term to the
design matrix as mentioned previously. If $S=\emptyset$, the matrix consists only of a single
intercept column. Analogously, $\B{X}_{-e,S}$ is defined
with the samples in $I_{-\e}$. 
Let  $\hat{Y}_\e$ be the predictions for observations
in set $I_\e$ when using the OLS estimator computed on samples in
$I_{-\e}$ and let $D:=Y_e - \hat Y_e$ be the difference between the actual
observations $Y_e$ on $I_\e$ and the predictions.

\item 
Under Gaussian errors, if~\eqref{eq:H0tSregr} is true for a set $S$, 
then \citep{chow1960tests}  
\begin{equation}\label{eq:chow} 
\frac{D^t  \Sigma_D^{-1} D
}{\hat{\sigma}^2 \, n_\e} \sim \; F(n_\e,n_{-\e} - |S|-1),
\end{equation}
where $\hat{\sigma}^2$ is the estimated variance on the set $I_{-\e}$
on which the OLS estimator is computed. The covariance matrix
$\Sigma_D$ is given by
\[ \Sigma_D = {1}_{n_\e} +  \B{X}_{\e,S} (\B{X}_{-\e,S}^t \B{X}_{-\e,S})^{-1} \B{X}_{\e,S}^t,\]
letting  $1_n$ be the identity
matrix in $n$-dimensions.
For any set $S$, we reject the null hypothesis $\tilde H_{0,S}(\E)$
if the $p$-value of~\eqref{eq:chow} is below 
$\alpha/|\E|$ for any $\e\in \E$. 
\end{enumerate}
\item[2)] As in the generic algorithm, using~\eqref{eq:hatcausal}.
\item[3)] If we do reject a set $S$ we set
  $\hat{\Gamma}_S(\E)=\emptyset$. Otherwise, we set
  $\hat{\Gamma}_S(\E)$ to be a 
$(1-\alpha)$-confidence
  interval for $\beta^\pred(S)$ when using all data
  simultaneously. For
  simplicity, we will use a rectangular confidence 
  region where
the constraint for $\beta^\pred(S)_k$ is 
  identically 
  0 if $k\notin S$ and for coefficients in $S$ given by 
  $(\hat{\beta}^{\pred} (S))_S \pm  t_{1-\alpha/(2|S|),n-|S|-1} \cdot \hat{\sigma} \, \mbox{diag}((\B{X}_S^t \B{X}_S)^{-1})$,
  where
$\B{X}_S$ is the design matrix of the pooled data when using variables in
$S$, $t_{1-\alpha;q}$ is the $(1-\alpha)$-quantile of a
t-distribution with $q$ degrees of freedom, and $\hat{\sigma}^2$ the
estimated residual variance. 
\end{enumerate}
\end{mdframed} }

A justification of the pooling in step 3 is given in Section
  \ref{sec:pooling}.
The
procedure above has some shortcomings. For example, the inversion of the covariance matrix
in~\eqref{eq:chow} might be too slow if we have to search many sets
and the sample size is large. One can then just work with a random
subsample of the set $I_\e$ of size, say, a few hundred, to speed up the
computation. It also depends on the assumption of Gaussian errors,
although this could be addressed by using rank tests or other nonparametric
procedures. Lastly, it is not straightforward to
extend this approach to classification and nonlinear models. 

We thus provide a second possibility.
The fast approximate version below is not fitting a model on each
experimental setting separately as in Method I, but is just fitting
one global model to all data and comparing the distribution of the
residuals in each experimental setting. This is ignoring the sampling
variability of the coefficient estimates but leads to a faster
procedure.
{\small
\begin{mdframed}[roundcorner=10pt,frametitle={Method II: 
Invariant prediction using fast(er) approximate test on residuals
}]
\begin{enumerate}[1)]
\item For each $S\subseteq\{1,\ldots,p\}$ and $\e\in \E$:
\begin{enumerate}
\item Fit a linear regression model on all
  data to get an estimate $\hat \beta^\pred(S)$ of the optimal coefficients
  using set $S$ of variables for linear prediction in regression. Let $R=Y- X \hat \beta^\pred(S)$.
\item Test the null hypothesis that the mean of $R$ is identical for each
set $I_\e$ and $\e\in \E$, using a two-sample t-test for residuals in
$I_\e$ against residuals in $I_{-\e}$ and combing via Bonferroni
correction across all $\e\in \E$.  Furthermore, test whether
the variances of $R$ are identical 
in $I_\e$ and $I_{-\e}$, using an F-test,  and combine again via Bonferroni correction for all $\e\in
\E$. Combine the two $p$-values of equal variance and equal mean by
taking twice the smaller of the two values. 
If the $p$-value for the set $S$ is smaller than 
$\alpha$,
we
  reject the set $S$.
\end{enumerate}
\item As in the generic algorithm, using~\eqref{eq:hatcausal}.
\item If we do reject a set $S$ we set
  $\hat{\Gamma}_S(\E)=\emptyset$. Otherwise, we set
  $\hat{\Gamma}_S(\E)$ to be the conventional 
  $(1-\alpha)$-confidence
  region for $\beta^\pred(S)$ when using all data
  simultaneously. For simplicity, we will use rectangular confidence
  regions, exactly as in step 3 of Method I. 
\end{enumerate}
\end{mdframed}}
Besides a computational advantage, the method can also easily be
extended to nonlinear and logistic regression models. For logistic
regression, one can test the residuals $R=Y-\hat{f}(X)$
for equal mean across the experimental settings, for example.

\subsection{Data pooling} \label{sec:pooling}
So far, we have assumed that the set $\E$ of experimental settings is
given and fixed. An experimental setting $\e\in \E$ can for example correspond to
\begin{enumerate}[(i)]
\item observational data;
\item a known intervention of a certain
type at a known variable;
\item a random intervention at an
unknown and random location;
\item  observational data in a changed environment.
\end{enumerate} 
We have used data pooling in Methods~I and~II to get confidence intervals for the regression coefficients (which is not necessary but increases power in general). 
A justification of this pooling is in order. The
joint distribution of $(X_{S^\caus}^\e,Y^\e)$ will vary in general
with $e\in \E$. 
Under
Assumption~\ref{assum:invariant}, however,
the conditional distribution  $Y^\e \given X_{S^\caus}^e$
is constant as a function of $\e\in \E$, see Section~\ref{sec:nonlin}.
As long as our tests and confidence intervals require only an
invariant conditional distribution for $S^\caus$ (which is the case
for the procedures given above), we can pool data from various $\e\in
\E$.

To make it more precise, assume there is a set of countably many
experimental settings or interventions $\I$ and $(X^\i,Y^\i)$ follow
a certain distribution $F_\i$ for each $\i \in \I$. Then each
encountered experimental setting $\e$ can be considered to be equivalent to
a probability mixture distribution over the experimental settings in $\I$, that is 
\[ F_\e = \sum_{\i\in \I} w_\i^\e F_\i,\]
where $w^\e_\i$ corresponds to the probability that an observation
under setting $\e$ follows the distribution~$F_\i$. 
We can then pool two experimental settings $\e_1$ and $\e_2$, for example, thereby
creating a new experimental setting with the averaged weights
$(w^{\e_1}+w^{\e_2})/2$. 

Pooling is a trade-off between identifiability and statistical power,
assuming that Assumption~\ref{assum:invariant} holds for the settings from
${\cal J}$.
The richer the set $\E$
of experimental settings, the smaller the set $\Gamma(\E)$ of
plausible causal coefficients will be and the larger the set of
identifiable causal predictors $S(\E)$. By pooling data, we make the set of identifiable causal
variables smaller, that is 
$ S(\E)$ is shrinking as we reduce the number $|\E|$ of different settings. 
The trade-off can either be settled a-priori (for example if we know
that we have ``sufficiently'' many observations in each known
experimental setting, we would typically not pool data) or one can try
various pooling procedures and combine all results, after adjusting
the level $\alpha$ to account for the increased multiplicity of the
associated testing problem. 
Section~\ref{sec:iden} discusses conditions on the interventions
under which all true causal effects are identifiable.

\subsection{Splitting purely observational data}
In the case of purely observational data, the null
hypothesis~\eqref{eq:H0gS} is correct for $\gamma = 0$ and $S = \emptyset$.
Therefore, $S(\E) = \emptyset$ and $\hat S(\E) = \emptyset$ with high probability, i.e.,  
our method stays conservative and does not make any causal claims.

In a reverse operation to data pooling across experiments, 
the question arises whether we can identify the causal predictors by
artificially separating data into several blocks although the data
have been generated under only one experimental setting (e.g. the data
are purely observational). 
If the distribution is generated by a SEM  (see Section~\ref{sec:lgsem}), we may
consider a 
variable $\II$ 
that is not $Y$ and known to be a 
non-descendant of the target variable $Y$, that is, there is no
directed path from $Y$ to $\II$, for example as it precedes $Y$
chronologically. 
(This is similar as in an instrumental variable setting, see Section~\ref{subsec.instrvar}.)
We may now split the data by conditioning on this
variable $\II$ or any function $h(\II)$. 
Our method
then still has the correct coverage for any function $h(\II)$ as long as
$\II$ is a non-descendant of $Y$, because the conditional distribution
of $Y$ given its true causal predictors $X_{S^{\caus}}$ does not
change and for all $z$ in the image of $h$,
\begin{equation} \label{eq:datasplittingworks}
Y\given X_{S^{\caus}} 
\quad\overset{d}{=}\quad
Y\given X_{S^{\caus}}, h(\II)=z
\end{equation}
Note that $\II$ might or might not be part of the set
$X_{S^*}$ but we expect the method to have more power if it is not. 
Equation~\eqref{eq:datasplittingworks} is a direct implication of the local Markov property that is satisfied for a SEM \citep[][Theorem~1.4.1]{Pearl2009}.
The confidence intervals remain valid but 
the implication on (partial) identifiability  of the causal predictors 
remains as an open question.

Even without data splitting, there might still be some directional information in the data set that is not exploited by our method; this may either be information in the conditional independence structure \citep{Spirtes2000, Chickering2002}, information from non-Gaussianity \citep{Shimizu2006}, nonlinearities \citep{Hoyer2008, Peters2014JMLR, Buhlmann2014annals}, 
equal error variances \citep{Peters2012} or shared information between regression function and target variable \citep{Janzing2012}. Our method does not exploit these sources of identifiability.
We believe, however, that it might be possible to 
incorporate the identifiability
based on non-Gaussianity or
nonlinearity. 

\subsection{Computational requirements} \label{sec:highdim}
 The construction of the confidence regions for the set
of plausible causal coefficients and the identifiable causal predictors
requires to go through all possible sets of variables in step 1) of
the procedures given above. The
computational complexity of the brute force scheme seems to grow super-exponentially
with the number of variables. 

There are several aspects to this issue. Firstly, we often do not have
to go through all sets of variables. If we are looking
for a non-empty set $\hat{S}(\E)$, it is worthwhile in general to
start generating the confidence regions $\hat{\Gamma}_S(\E)$ for the
empty set $S=\emptyset$, then for all singletons and so forth. If the
empty set is not rejected, we can stop the search immediately, as
then $\hat{S}(\E)=\emptyset$. If the empty set is rejected, we can
stop early as soon as we have accepted more than one set $S$ and the
sets have an empty overlap (as $\hat{S}=\emptyset$ in this case no
matter what other sets are accepted). The method can thus finish
quickly if $\hat{S}=\emptyset$.
However, in a positive case (where we do hope to get a non-empty
confidence set) we will still have to go through all sets of variables
eventually. There are two options to address the computational
complexity. 

The first option is to limit a-priori the size of the
set of causal predictors. Say we are willing to make the
assumption that the set of causal variables is at most 
$s<p$. Then we just have to search over all subsets of size at most
$s$ and incur a computational complexity that grows like $O(p^s)$ as a
function of 
the number of variables. 

A second option (which can be combined with the first one) is an adaptation of the confidence
interval defined above, in which the number of variables is first
reduced to a subset of small size that contains the causal predictors with high probability. 
Let $\hat{B}\subseteq\{1,\ldots,p\}$ be, for the pooled data, an
estimator of the variables with non-zero regression coefficient when
using all variables as predictors. For example, $\hat{B}$ could be
the set of variables with non-zero regression coefficient with
square-root Lasso estimation \citep{belloni2011square}, Lasso
\citep{tibshirani1996regression} or boosting
\citep{schapire1998boosting,fried01,buhlmann2003boosting}
with cross-validated penalty parameter. If the initial 
screening is chosen such that the causal predictors are contained with
high probability,
$ P\big[ S^{\caus} \subseteq
  \hat{B}\big]\ge 1-\alpha,$
and we construct the confidence set $\hat{S}(\E)$ as above,
but just letting $S$ be a subset of $\hat{B}$ instead of
$\{1,\ldots,p\}$, it will have coverage at least $1-2\alpha$.
Sufficient assumptions of such a coverage (or screening) condition are
discussed in the literature \citep[e.g.][]{Buhlmann2011book}.
If the second option is combined with the first option, the computational complexity
would then scale like $O(q^s)$ instead of $O(p^s)$, where $q$ is the
maximal size of the set $\hat{B}$ of selected variables. 
For the sake of simplicity, we will not develop this argument further here
but rather focus on the  identifiability  results
for the low(er)-dimensional case. 

\if0
\subsection{Beyond linearity} \label{sec:beyondlin}
If the proposed method rejects all sets $S$, we  have ${\Gamma(\E)}=\emptyset$. 
As we have seen in Section~\ref{sec:intervv}, this can alert
us to the presence of hidden variables. 
Another reason for the
rejection of the whole model could be sufficiently strong
nonlinearities in the data.
This is because whenever we
approximate a nonlinear function by a linear one, the residuals remain
dependent on the input variable and the null
hypothesis~\eqref{eq:H0gS} will be rejected. 

While we have restricted ourselves to linear models, the ideas are
readily generalizable to nonlinear
settings. Assumption~\ref{assum:invariant} could be rephrased as
requiring the existence of a function
$f:\mathbb{R}^p \rightarrow \mathbb{R}$ of some appropriate function class
such that 
\begin{equation}\label{eq:nonlincausal} 
\forall \e \in \E: \qquad Y^\e = f(X_{S^{\caus}}^\e) + \varepsilon^\e ,
\end{equation}
with $\varepsilon^\e \sim F_{\varepsilon}$ and $\varepsilon^\e$ independent of $X^e_{S^{\caus}}$ as in Assumption \ref{assum:invariant}.
The invariance could be tested analogously to the linear case. We can, for
example,
fit a function $\hat{f}$ on the pooled data and estimate
whether the residuals adhere to the assumptions in the various
settings in $\E$. Note that the nonlinearity alone can be sufficient
to identify the causal relationships, even if just observational data
are available \citep{Peters2014JMLR}.
We believe this could be an interesting and helpful
extension to the linear model. 
\fi

\section{Identifiability results for structural equation models} \label{sec:iden}
The question arises whether the proposed confidence sets for
the causal predictors can recover an assumed true set of causal
predictors. 
  Such identifiability issues are
  discussed next. 
Sections~\ref{sec:lgsem} and~\ref{sec:intervv} describe possible data generating mechanisms and Section~\ref{sec:idres} provides corresponding identifiability results.

\subsection{Linear Gaussian SEMs} \label{sec:lgsem}
We consider linear Gaussian structural equation models (SEMs)
\citep[e.g.][]{Wright1921, Duncan1975}.
We assume that each element $\e \in
\E$ represents a different interventional setup.  
Let the first block of data ($e=1$) always correspond to an ``observational'' (linear) Gaussian SEM. Here, a distribution over $(X^1_1, \ldots, X^1_{p+1})$ is said to be generated from a Gaussian SEM if
\begin{equation} \label{eq:semmmmm}
X_j^1 = \sum_{k \neq j} \beta_{j,k}^1 X^1_k + \varepsilon^1_j, \qquad j = 1, \ldots, p+1,
\end{equation}
with $\varepsilon_j^1 \iids \mathcal{N}(0,\sigma_j^2)$, $j=1, \ldots, p+1$. 
The corresponding directed graph is obtained by drawing arrows from
variables $X^1_k$ on the right-hand side of~\eqref{eq:semmmmm} with
$\beta_{jk}^1 \neq 0$ to the variables $X^1_j$ of the left-hand
side. This graph is assumed to be acyclic.
Without loss of generality
let us assume that $Y^1 := X^1_{1}$ is the target variable and we write
$X := (X_2, \ldots, X_{p+1})$.
 We further assume that all variables are
observed; this assumption can be weakened, see
Proposition~\ref{propos:semb} in Appendix~\ref{app:propsemb} and Section~\ref{subsec.instrvar}.

The parents of $Y$ are given by 
$$
\PA{Y} = \PA{1} = \{k \in \{2, \ldots, {p+1}\}\,:\, \beta_{1, k}^1 \neq 0\}.
$$
Here, we adapt the usual notation of graphical models
\citep[e.g.][]{Lauritzen1996}. For example, we write $\PA{j}$,
$\DE{j}$, $\AN{j}$
and $\ND{j}$ for the parents, descendants, ancestors and non-descendants of $X_j$,
respectively.  

Let us assume that the other data blocks are generated by a linear SEM, too: 
\begin{equation} \label{eq:sem}
X_j^\e = \sum_{k \neq j} \beta_{j,k}^\e X^\e_k + \varepsilon^\e_j, \qquad j = 1, \ldots, p+1,\quad e \in \E.
\end{equation}
Assumption~\ref{assum:invariant} states that the influence of
the causal predictors remains the same under interventions, that is 
$Y^\e = X^\e \gamma^{\caus} + \varepsilon^1_{1}$ for $\gamma^{\caus} =
(\beta_{1,2}^1, \ldots, \beta_{1, p+1}^1)^t $ 
  and $\varepsilon^\e_{1}\overset{d}{=} \varepsilon^1_{1}$ for $\e \in
\E$. The other coefficients $\beta_{j,k}^\e$ and noise variables
$\varepsilon^\e_j$, $j \neq 1$, however, 
may be different from the ones in the observational setting~\eqref{eq:semmmmm}.
Within this setting, we now define various sorts of interventions. 

\subsection{Interventions} \label{sec:intervv}
We next  discuss three different types of interventions that
all lead to identifiability of the causal predictors for the target variable. 
\subsubsection{Do-interventions} \label{sec:idfirst}
These types of interventions correspond to the classical do-operation from
\citet[e.g.]{Pearl2009}. 
In the $e$-th experiment, we intervene on variables $\C{A}^\e \subseteq \{2, \ldots, p+1\}$ and set them to values $a^\e_j \in \mathbb{R}$, $j \in \C{A}^\e$.
For the observational setting $e=1$, we have $\C{A}^1 = \emptyset$. We
specify the model~\eqref{eq:sem}, for $\e \neq 1$, as follows:
$$
\beta_{j,k}^\e = 
\left\{
\begin{array}{cl}
\beta_{j,k}^1 & \text{ if } j \notin \C{A}^\e\\
0 & \text{ if } j \in \C{A}^\e,
\end{array}\right.
$$
and
$$
\varepsilon_{j}^\e \equald 
\left\{
\begin{array}{cl}
\varepsilon_{j}^1 & \text{ if } j \notin \C{A}^\e\\
a^\e_j & \text{ if } j \in \C{A}^\e.
\end{array}\right.
$$
The do-interventions correspond to fixing the intervened variable at a
specific value.  The following two types of interventions consider
``softer'' forms of interventions which might be more realistic for
certain applications.

\subsubsection{Noise interventions} \label{sec:det}
Instead of fixing the intervened variable at a specific value,
noise interventions correspond to ``disturbing'' the variable by
changing the distribution of the noise 
variable. 
This is an instance of what is sometimes called a
``soft intervention'' \citep[e.g.][]{Eberhardt2007}. 
We now consider a kind of soft intervention, in which we scale the noise distributions of variables $\C{A}^\e \subseteq \{2, \ldots, p+1\}$ by a factor $A^\e_j$, $j \in \C{A}^\e$. 
Alternatively, we may also shift the error distribution by a variable $C^e_j$.
More precisely, we specify the model in~\eqref{eq:sem}, for $\e \neq 1$, as follows:
$$
\beta_{j,k}^\e = \beta_{j,k}^1 \quad \text{for all }j,
$$
and
$$
\varepsilon_{j}^\e \equald 
\left\{
\begin{array}{cl}
\varepsilon_{j}^1 & \text{ if } j \notin \C{A}^\e\\
A^\e_j \varepsilon_{j}^1 & \text{ if } j \in \C{A}^\e,
\end{array}\right.  
\quad \text{ or } \quad 
\varepsilon_{j}^\e \equald 
\left\{
\begin{array}{cl}
\varepsilon_{j}^1 & \text{ if } j \notin \C{A}^\e\\
\varepsilon_{j}^1 + C^\e_j& \text{ if } j \in \C{A}^\e.
\end{array}\right.
$$
The factors $A^\e_j$ and the shifts $C^e_j$ are considered as random but
may be constant with probability one. They are assumed to be independent of
each other and independent of all other random variables considered in the
model except for $X_k^e$ for $k \in \DE{j}$.
\subsubsection{Simultaneous noise interventions} \label{sec:idlast}

The noise interventions above operate on clearly defined variables
$\C{A}^\e$ which can vary between different experimental settings
$\e\in \E$. In
some applications, it might be difficult to change or influence the
noise distribution at a single variable but instead one could imagine
interventions that change the noise distributions at many variables
simultaneously. 
As a third example, we thus consider a special case of the preceding Section~\ref{sec:det}, in which
we pool all interventional experiments into a single data set. That is, $|\E| = 2$ and, for all $j \in \{2, \ldots, p+1\}$, 
\begin{equation} \label{eq:simnoise}
\beta_{j,k}^{e=2} = 
\beta_{j,k}^{e=1}
\end{equation}
and
$$
\varepsilon_{j}^{e=2} \equald A_j \varepsilon_{j}^{e=1}
\quad \text{ or } \quad
\varepsilon_{j}^{e=2} \equald \varepsilon_{j}^{e=1} + C_j
.
$$
The random variables $A_j \geq 0$ are assumed to have a distribution that
  is absolutely continuous w.r.t.\ Lebesgue measure with $\mean A_j^2 < \infty$
  and to be independent of all other variables and among themselves.
The pooling can either happen explicitly or, as stated above, as we
cannot control the target of the interventions precisely and a given
change in environment might lead to changes in the error distributions
in many variables simultaneously. As an example we mention gene knock-out
experiments with off-target effects in biology 
\citep[e.g.][]{Jackson2003, kulkarni2006evidence}. 

\subsection{Identifiability results} \label{sec:idres}
The following Theorem~\ref{prop:1} gives sufficient conditions for identifiability of the causal
predictors. We then discuss some conditions under which the assumptions can
or cannot be relaxed further below. Proofs can be found in  
Appendix~\ref{app:proofs}. 
\begin{theorem} \label{prop:1}
Consider a (linear) Gaussian SEM as in~\eqref{eq:semmmmm} and~\eqref{eq:sem} with interventions. 
Then, with $S(\E)$ as in~\eqref{eq:ident}, all causal predictors are identifiable, that is
\begin{equation} \label{eq:mapro}
S(\E) = \PA{Y} = \PA{1}
\end{equation}
if one of the following three assumptions is satisfied:
\begin{enumerate}[i)]
\item The interventions are {\bf do-interventions}
  (Section~\ref{sec:idfirst}) with $a^\e_j \neq \mean ( X^1_j)$ and there is at least one single intervention on each variable other than $Y$, that is for each $j \in \{2, \ldots, p+1\}$ there is an experiment $e$ with $\C{A}^\e = \{j\}$.
\item The interventions are {\bf noise interventions}
  (Section~\ref{sec:det}) with $1 \neq \mean (A^\e_j)^2 < \infty$, and again, there is at least one single intervention on
  each variable other than $Y$. 
  If the interventions act additively rather than multiplicatively, we require 
$\mean C^\e_j \neq 0$ or $0 < \var \, C^\e_j < \infty$.
\item The interventions are {\bf simultaneous noise interventions}
  (Section~\ref{sec:idlast}). 
This result still holds if we allow changing linear coefficients
$\beta_{j,k}^{e=2} \neq \beta_{j,k}^{e=1}$
in~\eqref{eq:simnoise} with (possibly random) coefficients $\beta_{j,k}^{e=2}$. 
\end{enumerate}
The statements remain correct if we replace the null hypothesis~\eqref{eq:H0Sregr} with its weaker version~\eqref{eq:H0tSregr}.
\end{theorem}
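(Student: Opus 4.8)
The plan is to prove the two inclusions in~\eqref{eq:mapro} separately, and the first one is essentially free. By Proposition~\ref{propos:sem}, whose hypotheses are met by all three intervention schemes of Section~\ref{sec:intervv} (do-interventions are do-interventions; noise and simultaneous noise interventions are soft interventions), the set $\PA{1}$, with $\gamma^\ast=\beta_{1,\cdot}$, satisfies Assumption~\ref{assum:invariant}, so $H_{0,\PA{1}}(\E)$ holds and hence $\PA{1}$ is one of the sets appearing in the intersection~\eqref{eq:ident}; thus $S(\E)\subseteq\PA{1}$. (The same argument, with $H_{0,\PA1}(\E)$ replaced by the weaker $\tilde H_{0,\PA1}(\E)$ it implies, handles the weak-null version.) The content of the theorem is the reverse inclusion $\PA{1}\subseteq S(\E)$. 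Using the characterisation $S(\E)=\bigcap_{S:\,H_{0,S}(\E)\text{ true}}S$, this holds if and only if every set $S$ with $H_{0,S}(\E)$ true contains $\PA{1}$; equivalently, by contraposition, \emph{every $S$ that omits some parent $j^\ast\in\PA1\setminus S$ must violate $H_{0,S}(\E)$}. I would in fact aim to violate the weaker $\tilde H_{0,S}(\E)$, since that is a stronger conclusion and therefore settles the strong null, the weak null, and the last sentence of the theorem simultaneously. So the whole proof reduces to one statement: if $j^\ast\in\PA1$ and $j^\ast\notin S$, then the population regression of $Y$ onto $X_S$ -- its coefficient vector $\beta^{\pred,\e}(S)$ or its residual standard deviation $\sigma^\e(S)$ -- is not the same in all environments $\e\in\E$.

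To produce such a discrepancy I would always compare the observational environment $\e=1$ with a second environment $\e^\ast$ that intervenes on the omitted parent $j^\ast$; one exists by hypothesis (a single intervention on each non-$Y$ variable in cases i and ii, and the pooled interventional block in case iii). The rest is a computation inside a linear Gaussian SEM, where everything is explicit: writing $B^\e$ for the matrix of structural coefficients and $\mu^\e,D^\e$ for the mean vector and diagonal noise covariance in environment $\e$, one has $\Var(X^\e)=(I-B^\e)^{-1}D^\e(I-B^\e)^{-\top}$ and $E[X^\e]=(I-B^\e)^{-1}\mu^\e$, from which $\beta^{\pred,\e}(S)$, $\sigma^\e(S)$ and the residual mean are read off by the usual Schur-complement formulas. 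In each scheme, passing from $\e=1$ to $\e^\ast$ changes the $j^\ast$-th noise coordinate in a controlled rank-one fashion: its variance drops to $0$ (do-intervention), is multiplied by $\mean (A^\e_{j^\ast})^2\neq1$, or is increased by $0<\var C^\e_{j^\ast}<\infty$ (noise interventions), and/or its mean is moved (by $a^\e_{j^\ast}\neq\mean X^1_{j^\ast}$, or by $\mean C^\e_{j^\ast}\neq0$). I would feed this perturbation through the $j^\ast$-th column $v$ of $(I-B)^{-1}$ -- whose $Y$-component is $v_1=\beta_{1,j^\ast}+(\text{sum over longer directed paths }j^\ast\to\cdots\to Y)$ -- and show by a Sherman--Morrison-type expansion that the conditional law $Y^\e\mid X_S^\e$ cannot coincide for $\e=1$ and $\e=\e^\ast$.

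The hard part will be exactly that last step: ruling out an accidental algebraic cancellation in which the change in the distribution of $X_{j^\ast}$ becomes invisible after conditioning on $X_S$. This genuinely can happen for specific parameter values -- the total path weight $v_1$ may vanish, and, worse, $S$ may contain descendants of $Y$, so conditioning on $X_S$ opens non-causal paths and the $Y$-on-$X_S$ residual is no longer $\varepsilon_Y$. This is precisely where the non-degeneracy hypotheses on the interventions enter, and they are tailored to it: $a^\e_j\neq\mean X^1_j$, $1\neq\mean(A^\e_j)^2<\infty$, and $\mean C^\e_j\neq0$ or $0<\var C^\e_j<\infty$ force the relevant first or second moment of the $j^\ast$-noise genuinely to move, and because there is a \emph{separate} single intervention on every non-$Y$ variable, if the second-moment discrepancy accidentally vanishes for one omitted parent one still has a discrepancy via its mean or via another omitted parent. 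I expect the cleanest organisation is a proof by contradiction: assume $\tilde H_{0,S}(\E)$ holds, so $\beta:=\beta^{\pred,\e}(S)$ and $\sigma:=\sigma^\e(S)$ are constant in $\e$; expand the identities $\Cov(Y^\e,X^\e_k)=\beta^\top\Cov(X^\e_S,X^\e_k)$ for $k\in S$ and $\sigma^2=\Var(Y^\e)-\beta^\top\Cov(X^\e_S,Y^\e)$ using the explicit SEM formulas above; and deduce $\gamma^\ast_{j^\ast}=\beta_{1,j^\ast}=0$, contradicting $j^\ast\in\PA1$. Hence $\PA1\subseteq S$ for every admissible $S$, giving $\PA1\subseteq S(\E)$ and, combined with the first inclusion, the equality~\eqref{eq:mapro}.

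Case iii (simultaneous noise interventions) needs one extra idea, since there all noise coordinates are perturbed at once and a single variable cannot be isolated. Here I would use that the scaling factors $A_j\geq0$ are absolutely continuous with respect to Lebesgue measure (and $\mean A_j^2<\infty$): the invariance of $\beta^{\pred}(S)$ and $\sigma(S)$ under $\tilde H_{0,S}(\E)$ forces a fixed polynomial relation in the $A_j$'s, with coefficients built from $B$ and the $\sigma_j^2$, to hold on a set of positive measure, hence identically; matching coefficients of that polynomial identity again yields $\gamma^\ast_{j^\ast}=\beta_{1,j^\ast}=0$ for every $j^\ast\in\PA1\setminus S$, so $\PA1\subseteq S$. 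The stated extension in which the linear coefficients $\beta^{e=2}_{j,k}$ may also change (possibly at random) is absorbed by treating those as further independent variables entering the same polynomial identity. Finally, since in each case the contradiction is reached already from the invariance of $\beta^{\pred,\e}(S)$ and $\sigma^\e(S)$ that $\tilde H_{0,S}(\E)$ postulates, the closing sentence of the theorem comes along at no extra cost.
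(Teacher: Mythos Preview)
Your high-level scaffolding matches the paper exactly: $S(\E)\subseteq\PA1$ via Proposition~\ref{propos:sem}, then assume some $S\nsupseteq\PA1$ satisfies the null and derive a contradiction with the residual distribution. The execution, however, diverges in two places, and in the first of them your plan has a real gap.

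\textbf{The choice of intervened node.} You propose to fix an omitted parent $j^\ast\in\PA1\setminus S$ and compare $\e=1$ with the single intervention on $j^\ast$. The paper instead writes the residual as
\[
R^{1}(S)=Y^1-X^1\beta^{\pred,1}(S)=\sum_{k=2}^{p+1}\alpha_k X_k^1+\varepsilon_1^1,\qquad \alpha_k:=\gamma^\ast_k-\beta^{\pred,1}(S)_k,
\]
and then picks a \emph{youngest} node $k_0$ among \emph{all} $k$ with $\alpha_k\neq0$ (not necessarily a parent of $Y$, possibly a node in $S$). Because nothing with nonzero $\alpha$ lies downstream of $k_0$, the single intervention on $k_0$ changes exactly one summand in $R^{\e}(S)$; the discrepancy in mean (case~i) or variance (case~ii) is then immediate from the non-degeneracy hypotheses, with no matrix algebra and no possibility of the cancellation you worry about. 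Your $j^\ast$ need not be youngest, so intervening on it perturbs its descendants as well---which is precisely where your Sherman--Morrison route would have to fight cancellations, and you do not explain how the hypothesised system of covariance identities forces $\beta_{1,j^\ast}=0$ in that situation. The paper's ``youngest node'' trick is the missing idea that dissolves the difficulty.

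\textbf{Case~iii.} Here the paper does not use a polynomial-identity/measure argument. It again isolates the youngest $k_0$ with $\alpha_{k_0}\neq0$, expands $R^1(S)$ and $R^2(S)$ in the independent noise variables, and observes that $R^1(S)$ is Gaussian. If $R^2(S)$ has the same distribution, Cram\'er's theorem forces the summand $\alpha_{k_0}A_{k_0}\varepsilon_{k_0}^1$ to be Gaussian; a fourth-moment computation then gives $\var(A_{k_0}^2)=0$, contradicting the assumed density of $A_{k_0}$. Your ``polynomial identity in the $A_j$'' idea is a different route and might be made to work for the variance-only null $\tilde H_{0,S}$, but it is not what the paper does, and you have not shown that the coefficient-matching actually pins down $\beta_{1,j^\ast}$ rather than some combination that could vanish.
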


These are examples for sufficient conditions for identifiability but
there may be many more. For example, one may also consider random
coefficients or changing graph structures (only the parents of $Y$ must
remain the same). 

\paragraph{Remark.}
In general, the conditions given above are not necessary. The following
remarks, however, provide two specific counter examples that show the necessity of some conditions.
\begin{enumerate}[i)]
\item
We cannot remove the condition $a^\e_j \neq \mean ( X^1_j)$ from
Theorem~\ref{prop:1} i): the following SEMs correspond to
observational data in experiment $e=1$, interventional data with $do(X_2 = 0)$ in experiment $e=2$,  and interventional data with  $do(X_3 = 0)$ in experiment $e=3$:
\begin{alignat*}{4}
e&=1:& \qquad Y^1&= X_2^1 + X_3^1 + \varepsilon_Y,\qquad&X_2^1 &= \varepsilon_2, \qquad&X_3^1 &= -X_2^1 + \varepsilon_3,\\
e&=2:& \qquad Y^2&= X_2^2 + X_3^2 + \varepsilon_Y,\qquad&X_2^2 &= 0,    \qquad& X_3^2 &= -X_2^2 + \varepsilon_3, \\
e&=3:& \qquad Y^3&= X_2^3 + X_3^3 + \varepsilon_Y,\qquad&X_2^3 &= \varepsilon_2, \qquad&X_3^3 &= 0,
\end{alignat*}
with $\varepsilon_2$ and $\varepsilon_3$ having the same distribution.
Then, we cannot identify the correct set of parents $S^{\caus} = \{1, 2\}$.
The reason is that even $S = \emptyset$ leads to a correct null hypothesis~\eqref{eq:H0Sregr}.
\item 
If we only check the null hypothesis~\eqref{eq:H0tSregr}
instead of the stronger version~\eqref{eq:H0Sregr} 
(namely whether the residuals have the same variance rather than the same distribution), 
the condition $\mean (A^\e_j)^2 \neq 1$ is essential.
Consider a two-dimensional
observational distribution from experiment $e=1$ and
an intervention distribution from experiment $e=2$:
\begin{alignat*}{3}
e=1:& \qquad &X^1 &= \varepsilon_X, \qquad &
Y^1 &= X^1 + \varepsilon_Y ,\\
e=2:& \qquad &X^2&= A \cdot \varepsilon_X, \qquad &
Y^2 &= X^2 + \varepsilon_Y, 
\end{alignat*}
with $\mean (A)^2 = 1$ and $\varepsilon_X, \varepsilon_Y \iids \mathcal{N}(0,1)$.
Then we cannot identify the correct set of parents $\PA{Y} = \{X\}$ because again $S=\emptyset$ leads to the same residual variance and therefore a correct null hypothesis~\eqref{eq:H0tSregr}.
If we use hypothesis~\eqref{eq:H0Sregr}, however, condition $\mean (A^\e_j)^2 \neq 1$ can be weakened (if densities exist), see the proof of Theorem~\ref{prop:1} (iii). 
\end{enumerate}
In practice, we expect stronger identifiability results than
Theorem~\ref{prop:1}. Intuitively, intervening on (some of) the ancestors of $Y$ should 
be sufficient for identifiability in many cases.
Note that the two counter-examples above are non-generic in the way
that they violate faithfulness \citep[e.g.][]{Spirtes2000}.
The following theorem shows for some graph structures (which need not
to be known) that even one interventional setting with an intervention
on a single node 
may be sufficient, as long as the data generating model is chosen ``generically'' (see Appendix~\ref{app:example} for an example).
\begin{theorem}\label{prop:onlyone}
Assume 
a linear Gaussian SEM as in~\eqref{eq:semmmmm} and~\eqref{eq:sem} with all non-zero parameters 
drawn from a joint
density w.r.t.\ Lebesgue measure. Let $X_{k_0}$ be a youngest
parent of target variable $Y= 
X_{1}$, that is there is no directed path from $X_{k_0}$ to any other
parent of $Y$. Assume further that there is an edge from any
other parent of $Y$ to $X_{k_0}$.  
Assume that there is only one intervention setting, where the intervention
took place on $X_{k_0}$, that is $|\mathcal{E}| = 2$ and $\mathcal{A}^{e=2}
= \{k_0\}$ ($k_0$ does not need to be known). 

Then, with probability one, all causal predictors are identifiable, that is
\begin{equation*} 
S(\E) = \PA{Y} = \pa{1}  
\end{equation*}
if one of the following two assumptions is satisfied:
\begin{enumerate}[i)]
\item The intervention is a {\bf do-intervention} (Section~\ref{sec:idfirst}) with $a^{e=2}_{k_0} \neq \mean X^1_{k_0}$.
\item The intervention is a {\bf noise intervention} (Section~\ref{sec:det}) with $1 \neq \mean (A^{e=2}_{k_0})^2 < \infty$ or $\mean C^{e=2}_{k_0} \neq 0$, respectively.
\end{enumerate}
\end{theorem}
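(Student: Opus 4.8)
The plan is to show that any set $S$ which is not equal to $\PA{1}$ leads to a violation of the null hypothesis~\eqref{eq:H0Sregr} (equivalently~\eqref{eq:H0tSregr}), so that the only surviving set in the intersection defining $S(\E)$ is $\PA{1}$ itself; combined with Proposition~\ref{propos:sem}, which guarantees $H_{0,\PA{1}}(\E)$ is true, this gives $S(\E)=\PA{1}$ with probability one. Since there are only two environments ($e=1$ observational, $e=2$ with a single intervention on $X_{k_0}$), the null $H_{0,S}(\E)$ asks exactly that the population regression of $Y$ on $X_S$ has the same coefficient vector and the same residual distribution in both environments. I would split the argument into the usual two regimes: sets $S$ that omit some parent of $Y$, and sets $S\supseteq\PA{1}$ that additionally include a non-parent (children, descendants, or other non-ancestors of $Y$).

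First I would handle $S \not\supseteq \PA{1}$. If some parent $X_\ell \in \PA{1}\setminus S$ is missing, then in the observational environment the regression of $Y$ on $X_S$ has a residual that still depends on $X_\ell$, hence the invariance can only hold if, loosely speaking, the ``leaked'' contribution of $X_\ell$ is reproduced identically in both environments. The key structural hypothesis — $X_{k_0}$ is a youngest parent with an incoming edge from every other parent — is what forces the intervention on $X_{k_0}$ to genuinely perturb the joint law of $(X_S, Y)$ in a way no regression on a strict subset of the parents can absorb. Concretely, intervening on $X_{k_0}$ changes the marginal law of $X_{k_0}$, and through the edges $X_\ell \to X_{k_0}$ this alters the conditional law of the other parents given whatever is in $S$; a do-intervention with $a^{e=2}_{k_0}\neq \mean X^1_{k_0}$ shifts a mean, and a noise intervention with $\mean(A^{e=2}_{k_0})^2\neq 1$ or $\mean C^{e=2}_{k_0}\neq 0$ shifts a variance or a mean. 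I would compute the population regression coefficients and residual moments in both environments as rational (indeed polynomial-in-the-parameters) functions of the SEM coefficients and noise variances, and argue that equality of the $e=1$ and $e=2$ versions is a nontrivial polynomial constraint on the parameter vector; since the parameters are drawn from a density w.r.t.\ Lebesgue measure, the bad set has measure zero. This genericity/measure-zero step is the analogue of what appears in the proof of Theorem~\ref{prop:1}(iii).

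Second I would handle $S\supseteq\PA{1}$ but $S\neq\PA{1}$, so $S$ contains some $X_m$ that is not a parent of $Y$. Because $\varepsilon_Y$ is independent of $X_{\PA{1}}$ and $Y = X_{\PA{1}}\gamma^* + \varepsilon_Y$ in both environments, the population regression of $Y$ on any superset $X_S$ of $X_{\PA{1}}$ recovers $\gamma^*$ on $\PA{1}$ and zero elsewhere \emph{provided} $X_m$ is uncorrelated with $\varepsilon_Y$ given $X_{\PA{1}}$; if $X_m$ is a descendant of $Y$ this fails and the regression coefficient on $X_m$ is nonzero and generically environment-dependent, which already breaks invariance. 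If instead $X_m$ is a non-descendant, the coefficient on $X_m$ is zero in both environments, so to get a violation I must show the residual \emph{distribution} (or, for the weaker~\eqref{eq:H0tSregr}, the residual \emph{variance}) still differs across environments — but the residual is exactly $\varepsilon_Y$, which has the same distribution in both environments, so this particular $S$ is \emph{not} rejected. This is fine: such an $S$ still contains $\PA{1}$, so it does not remove any parent from the intersection $\bigcap S$; I only need that \emph{no} accepted $S$ fails to contain a parent, which is precisely what step one delivers. I would make sure the write-up states this cleanly, perhaps by invoking the simple lemma that $S(\E)=\bigcap_{S:\,H_{0,S}(\E)\text{ true}} S \supseteq \PA{1}$ automatically (since supersets of $\PA{1}$ containing only non-descendants are accepted) and $\subseteq \PA{1}$ by step one.

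The main obstacle I expect is step one: rigorously certifying that the invariance constraint for a parent-deficient set $S$ is a \emph{nontrivial} polynomial in the SEM parameters, uniformly over all such $S$ and over both the do- and noise-intervention cases. The danger is a non-generic cancellation — as in the faithfulness-violating counterexamples after Theorem~\ref{prop:1} — where the leaked dependence happens to be invariant for structural reasons rather than parameter-specific ones. The hypothesis that $X_{k_0}$ is a youngest parent with edges from all other parents is precisely the device that rules this out: it ensures the intervention's effect propagates into every parent that could be the ``missing'' one, and into $Y$ only through the parents, so the only way to match residuals is the true full parent set. I would verify this by exhibiting, for each deficient $S$, at least one parameter specialization where the constraint visibly fails (e.g.\ setting most off-path coefficients to generic values and tracking a single mean or variance), which suffices to conclude the constraint is not identically zero and hence holds only on a Lebesgue-null set; a union bound over the finitely many sets $S$ finishes the proof.
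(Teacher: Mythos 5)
Your overall architecture matches the paper's at the top level: $H_{0,\PA{1}}(\E)$ is true by Proposition~\ref{propos:sem}, so $S(\E)\subseteq\PA{1}$; accepted supersets of $\PA{1}$ are harmless; and everything reduces to showing that, with probability one, no parent-deficient set $S$ (i.e.\ $S^{\caus}\setminus S\neq\emptyset$) satisfies the null~\eqref{eq:H0Sregr}. The gap is that you never actually establish this last step, and it is exactly where the entire difficulty of the theorem sits. Because only the single node $X_{k_0}$ is intervened on, a deficient set $S$ gets rejected only if the residual $Y-X\beta^{\pred,1}(S)$, rewritten through the structural equations, carries a nonzero \emph{total} coefficient on $X_{k_0}$ (the quantity $\alpha_{k_0}$ of~\eqref{eq:resy}); if the regression leak falls only on other parents or on non-descendants of $k_0$, or is absorbed through descendants of $k_0$ that belong to $S$, then both environments produce the same residual law and the set is accepted. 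Your genericity certificate --- ``exhibit one parameter specialization per deficient $S$'' --- is not carried out, and it is not a routine verification: it has to work uniformly over all admissible graphs and all deficient $S$ (including sets containing descendants of $Y$ and of $k_0$, through which cancellations can occur), on the parameter space constrained by the given zero pattern, and despite the degenerate design in the do-intervention environment. The paper supplies precisely this missing ingredient as Lemma~\ref{lem:2}: if $\alpha_{k_0}=0$ then necessarily $k_0\in S$ and $X_{k_0}\independent Y_{k_0}\given X_{\tilde S\setminus\{k_0\}}$, with $Y_{k_0}=\sum_{k\in S^{\caus}\setminus\{k_0\}}\gamma^*_kX_k+\varepsilon_{1}$ and $\tilde S=S\cap\ND{k_0}$; this contradicts faithfulness (which holds almost surely) because the assumed edges from every other parent of $Y$ into $X_{k_0}$ create a $d$-connecting path from $X_{k_0}$ to $Y_{k_0}$ through the missing parents. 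The proof of Lemma~\ref{lem:2} itself needs a careful ``no fine-tuning of $\sigma_s^2$'' argument to iteratively strip descendants of $k_0$ out of $S$; none of this, nor a workable substitute, appears in your sketch.

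A smaller but real issue: your intuition for why the structural hypotheses help is backwards. The extra edges run \emph{from} the other parents of $Y$ \emph{into} $X_{k_0}$, so the intervention on $X_{k_0}$ does not propagate into the other parents at all (they are non-descendants of $k_0$); the role of those edges is solely to guarantee the unblocked path used in the faithfulness contradiction, i.e.\ to force any regression that omits a parent to load on the intervened variable. Once $\alpha_{k_0}\neq 0$ is secured, the conclusion does follow by the same mean/variance comparison as in Theorem~\ref{prop:1} (i)--(ii), as you anticipate.
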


It is, of course, also sufficient for identifiability if the
interventional setting $\mathcal{A}^{e=2}
= \{k_0\}$ is just a member of a larger number of interventional settings.
We anticipate that more identifiability results of similar type can be
derived in specific settings. Theorem~\ref{prop:onlyone} shows that interving on the youngest
parent can reveal the whole set of parents of the target variable so
this intervention is in a sense the most informative intervention
under the made assumptions. 
Intervening on
descendants of $Y$ will, in contrast, only rule out these variables as parents of
$Y$. Some interventions are also completely non-informative; intervening on a
variable that is independent of all other variables (including the target) will, for example,
not help with identification of the set of parents of the target
variable.

\section{Instrumental and hidden variables with confounding}\label{subsec.instrvar}

We now discuss an extension of the invariance idea that is suitable 
in the presence of hidden variables. 
Instrumental variables can sometimes be used when the causal
relationship of interest is confounded and there are no randomised
experiments available
\citep{Wright1928,bowden1990instrumental,angrist1996identification,didelez2010assumptions}.
For simplicity,
let us assume that $I$ is binary. 
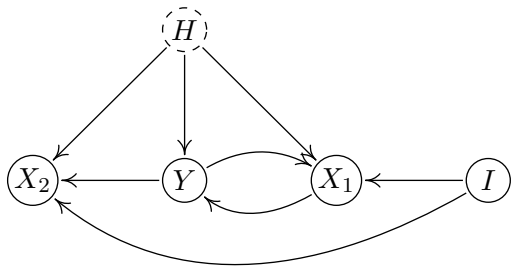
\begin{figure}[ht]
\begin{center}
\begin{tikzpicture}[scale=1, line width=0.5pt, minimum size=0.58cm, inner sep=0.3mm, shorten >=1pt, shorten <=1pt]
    \normalsize
    \draw (6,0) node(i) [circle, draw] {$I$};
    \draw (0,0) node(x2) [circle, draw] {$X_2$};
    \draw (2,0) node(y) [circle, draw] {$Y$};
    \draw (4,0) node(x1) [circle, draw] {$X_1$};
    \draw (2,2) node(w) [circle, dashed, draw] {$H$};
    \draw [-arcsq] (i) to (x1);
    \draw [-arcsq] (i) to [ out=210,in=320] (x2);
    \draw [-arcsq] (w) to  (x2);
    \draw [-arcsq] (w) to (y);
    \draw [-arcsq] (w) to (x1);
    \draw [-arcsq] (y) to (x2);
    \draw [-arcsq] (y) to [out=30,in=150] (x1);
    \draw [-arcsq] (x1) to [out=210,in=320] (y);
   \end{tikzpicture}
\end{center}
\caption{In this example of a graph of model that satisfies~\eqref{eq:ivfeedback}, variable $Y$ has a
  direct causal 
  effect only on $X_2$,  while there is a feedback between $Y$ and
  $X_1$. 
 }
\label{fig:iv}
\end{figure}
We assume that the SEM for a $p$-dimensional predictor $X$, a univariate target variable $Y$ of
interest and a $q$-dimensional hidden variable $H$ can be written as 
\begin{align}
 X &= f(I, H, Y, \eta),  \nonumber \\
Y & = X \gamma^* + g(H,\varepsilon), \label{eq:ivfeedback}
\end{align}
where $\gamma^*$ is the unknown vector of causal coefficients, $f,g$ are unknown real-valued functions and $\eta$ and $\varepsilon$ are
random noise variables in $p$ dimensions and one dimension
respectively. As it is commonly done for SEMs, we require the noise
variables $H, \eta, \varepsilon, I$ to be jointly independent. 
Figure~\ref{fig:iv} shows an example of a SEM that satisfies~\eqref{eq:ivfeedback}.

Again, we are interested in the causal coefficient $\gamma^*$. Because of the
hidden variable~$H$, 
however, regressing $Y$ on $X$ does not yield a consistent estimator
for $\gamma^*$. 

Two remarks on the model~\eqref{eq:ivfeedback} are in place. First,
the model requires that $I$ has no direct 
effect on $Y$, which is standard assumption for instrumental variable models. For a discussion on why a violation of this
assumption usually leads to no false conclusions (only a reduction in
power), see Section~\ref{sec:modelmismain}.
 Second, the model~\eqref{eq:ivfeedback} allows for feedback between $X$
  and $Y$, that is the corresponding graph in a SEM is not required
  to be acyclic. If feedback exists, the solutions are typically understood
  to be stable equilibrium solutions of~\eqref{eq:ivfeedback}  but we
  will here only require that the solutions satisfy
  equations~\eqref{eq:ivfeedback}. 

We can use $I$ as an instrument in a classical sense and  estimate~$\gamma^*$ by
the following well-known two-stage least squares procedure
\citep{angrist1996identification}: first we estimate
the influence of $I$ on $X$ and then we regress $Y$ on the predicted
values of~$X$ given~$I$. For non-linear models one can use two-stage
predictor substitution or two-stage residual inclusion; see
\citep{terza2008two} for an overview. If we strive for identification
of
$\gamma^*$, three limitations with this approach are:
\begin{enumerate}[(i)]
\item The target $Y$ is
not allowed to be a parent of any component of $X$, i.e.,
  $f(I,H,Y,\eta)=f(I,H,\eta)$. This also excludes the possibility of
  feedback between $X$ and $Y$.
\item The conditional expectation $E(X \given I)$ is not allowed to be constant for $I\in
  \{0,1\}$. 
\item The predictor $X$ has to be univariate for a
  univariate instrument $I$, that is $p=1$ is required.
\end{enumerate}

What happens if we interpret the two different values of $I$ as two
experimental settings? In other words: what happens if $I$ plays the
role of the indicator of environment (that we call $E$ at the end of
Section~\ref{sec:nonlin}) and we apply the method described above? 
We can define  $\E$ as two distinct environments by collecting all samples with $I=0$
in the first
environment and  all samples with $I=1$ in the second environment.  Of
course, another split into distinct environments is also possible and
allowed 
as long as the split into distinct environments is not a
function of $Y$, a descendant of $Y$ or the hidden variables
$H$.

We  stated in Proposition~\ref{propos:sem}  that
SEMs (with interventions) satisfy the assumptions of invariant
predictions if there are no hidden variables between the target
variable and the causal predictors. Because here there is the hidden
variable $H$ we cannot justify our method using
Proposition~\ref{propos:sem} (nor with Proposition~\ref{propos:semb} in
general). However, the invariant prediction procedure~\eqref{eq:lincausal} 
can be extended to cover models of the form~\eqref{eq:ivfeedback} as 
these models fulfil
\begin{align}\label{invariance-nonlin-hidden}
\mbox{for all } \e \in \E:\quad X^e &\mbox{ has an  arbitrary
  distribution } \nonumber   \\ 
Y^\e &=  X^\e \gamma^*+ g( H^\e,\varepsilon^\e), 
\end{align}
with unknown causal coefficients $\gamma^*\in
\mathbb{R}^p$ and unknown function $g:\mathbb{R}^q\times \mathbb{R} \rightarrow \mathbb{R}$.  


In the absence of hidden variables, 
the residuals
$Y^\e - X^\e\gamma^*$ 
are independent of the causal predictors 
$X^\e_{S^*} = X^\e_{\mathrm{supp}(\gamma^*)}$
and have the same distribution across all environments. 
In the presence of hidden variables, we cannot require
independence of the residuals and the causal predictors $X_{S*}$ but can
 adapt the null hypothesis $H_{0,S}$ in~\eqref{eq:H0S}
to the weaker form
\begin{align}
H_{0,S,\mathit{hidden}}(\E): \quad & \exists \gamma \in \mathbb{R}^p \text{ such
  that }\gamma_k = 0 \text{ if } k \notin S \mbox{  and } \nonumber \\ & \mbox{ the
  distribution of } Y^\e - X^e \gamma \mbox{  is identical for all  }\e\in \E .\label{eq:H0_hidden}
\end{align} 
Testing the null hypothesis~\eqref{eq:H0_hidden}
  is computationally more challenging than for the corresponding null hypothesis in the absence of
  hidden confounders~\eqref{eq:H0S}. In contrast to~\eqref{eq:H0S}, we
  cannot attempt to find for a given set $S$ the vector $\gamma$ by regressing $Y^e$ on $X^e$. The
  reason is that even if~\eqref{eq:H0_hidden} holds, it does not require the residuals $Y^\e - X^e \gamma$ to be independent of $X_{\mathrm{supp}(\gamma)}^e$. 
  
Suppose nevertheless that we have a test for the null hypothesis
${H}_{0,S,\mathit{hidden}}(\E) $  and define in analogy
to~\eqref{eq:hatcausal} the estimated set of causal predictors as 
\begin{equation}\label{eq:hatScausalIV} \hat{S}(\E) = \bigcap_{S: H_{0,S,\mathit{hidden}}(\E) \text{ not
    rejected}} S .\end{equation}
Then the coverage property follows immediately in the following sense.
\begin{prop} \label{prop:IV}
Consider model~\eqref{eq:ivfeedback} and let $S^*=\{k:\gamma^*_k\neq 0\}$. Suppose the test for ${H}_{0,S,\mathit{hidden}}(\E) $ is conducted at
level $\alpha$ and $\hat{S}$ is defined as
in~\eqref{eq:hatScausalIV}. Then
\[ P[\hat{S}(\E)\subseteq S^*]\;\ge\; 1-\alpha .\]
\end{prop}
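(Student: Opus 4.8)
The plan is to follow verbatim the structure of the proof of Theorem~\ref{theo:main}, with the null hypothesis $H_{0,S,\mathit{hidden}}(\E)$ from~\eqref{eq:H0_hidden} playing the role that $H_{0,S}(\E)$ played there. The whole argument rests on one observation: under model~\eqref{eq:ivfeedback} the true causal set $S^*$ satisfies its own null hypothesis, i.e.\ $H_{0,S^*,\mathit{hidden}}(\E)$ is true. Once this is in hand, validity of the test at level $\alpha$ together with the definition~\eqref{eq:hatScausalIV} finish the argument.

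First I would verify that $H_{0,S^*,\mathit{hidden}}(\E)$ holds. Take $\gamma = \gamma^*$; by the definition $S^* = \{k : \gamma^*_k \neq 0\}$ we trivially have $\gamma_k = 0$ for $k \notin S^*$. By~\eqref{invariance-nonlin-hidden}, for every environment $\e \in \E$ we have $Y^\e - X^\e \gamma^* = g(H^\e, \varepsilon^\e)$, so it remains to check that this residual has the same distribution for all $\e \in \E$. The environments are formed by splitting the samples according to a variable (for instance the instrument $I$, or more generally any variable that is not a function of $Y$, of a descendant of $Y$, or of the hidden variables $H$); by the joint independence of the noise variables $H, \eta, \varepsilon, I$ assumed for~\eqref{eq:ivfeedback}, conditioning on the value defining the environment leaves the joint law of $(H, \varepsilon)$ unchanged. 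Hence the law of $g(H^\e, \varepsilon^\e)$ is constant across $\e \in \E$, and $H_{0,S^*,\mathit{hidden}}(\E)$ is true.

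The rest is the coverage bookkeeping. Since the test for $H_{0,S,\mathit{hidden}}(\E)$ has level $\alpha$ and $H_{0,S^*,\mathit{hidden}}(\E)$ is true, $P[H_{0,S^*,\mathit{hidden}}(\E) \text{ rejected}] \le \alpha$. On the complementary event, $S^*$ is one of the sets appearing in the intersection in~\eqref{eq:hatScausalIV}, so $\hat{S}(\E) = \bigcap_{S: H_{0,S,\mathit{hidden}}(\E) \text{ not rejected}} S \subseteq S^*$. Therefore
\[
P[\hat{S}(\E) \subseteq S^*] \;\ge\; P\big[H_{0,S^*,\mathit{hidden}}(\E) \text{ not rejected}\big] \;\ge\; 1 - \alpha,
\]
which is the claim. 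I expect the only real content to be the first step — checking that the residual $g(H^\e, \varepsilon^\e)$ is environment-invariant — since this is exactly where model~\eqref{eq:ivfeedback} and, crucially, the restriction on how the environments may be formed enter: if the split were allowed to depend on $Y$, a descendant of $Y$, or $H$, the conditional law of $g(H,\varepsilon)$ could change and the null would fail. Everything after that is the same union/intersection bound as in Theorem~\ref{theo:main}, and in particular no identifiability- or faithfulness-type hypothesis is needed, which is why the conclusion is only the one-sided inclusion $\hat{S}(\E) \subseteq S^*$ rather than equality.
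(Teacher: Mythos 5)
Your proposal is correct and follows essentially the same route as the paper: you verify that $H_{0,S^*,\mathit{hidden}}(\E)$ holds because the residual $Y^\e - X^\e\gamma^* = g(H^\e,\varepsilon^\e)$ has an environment-invariant distribution (the environments being defined by a split independent of $Y$, its descendants and $H$), and then apply the same level-$\alpha$ coverage argument as in Theorem~\ref{theo:main}. The paper's proof is just a more compressed version of exactly this argument.
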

\begin{proof}
The hypothesis
  $H_{0,S,\mathit{hidden}}(\E)$ is obviously true for $S^*$ as  $Y^\e-X^\e\gamma^*=
  g(H^\e,\varepsilon^\e)$ and the distribution of
  $g(H^\e,\varepsilon^\e)$ is invariant across the environments $\e\in
  \E$ (defined by $I$) as $I$
  is independent of $H$ and $\varepsilon$. 
\end{proof}
 
The method has thus guaranteed coverage  for model~\eqref{eq:ivfeedback}
even if the necessary assumptions (i)-(iii) for identification
under a two-stage
instrumental-variable approach are violated.
The power of the procedure  depends again on the type of interventions, the function class 
 and the chosen test for the null hypothesis.
We can  ask for specific examples whether $\hat{S}(\E)=S^*$ in the
population limit. 
\begin{prop}\label{prop:popIV}
Assume as a special case of~\eqref{eq:ivfeedback}  a
shift in the variance of $X$ under $I=1$ compared to $I=0$
observations:
\begin{align}
 X &= f(H, \eta) +  Z\cdot 1_{I=1}  \nonumber \\
Y & = X \gamma^* + g(H,\varepsilon), \label{eq:ivfeedbackspecial}
\end{align}
where the $p$-dimensional mean-zero random variable $Z$  is independent of
$H,\varepsilon,\eta$ and $I$
  and has a full-rank covariance
matrix. Then $\gamma^*$ and $S^*$ are identifiable in a population
sense. Specifically, if the test of ${H}_{0,S,\mathit{hidden}}(\E) $
has power 1 against any alternative, then 
\[ P[\hat{S}(\E) = S^*] \ge 1-\alpha .\]
\end{prop}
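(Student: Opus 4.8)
The plan is to prove a population identifiability statement and then read off the coverage claim from it together with Proposition~\ref{prop:IV}. The population statement I would establish is: under model~\eqref{eq:ivfeedbackspecial}, the hypothesis $H_{0,S,\mathit{hidden}}(\E)$ holds if and only if $S\supseteq S^*$, and the only vector $\gamma$ that can witness $H_{0,S,\mathit{hidden}}(\E)$ for \emph{any} $S$ is $\gamma=\gamma^*$. This already yields identifiability of $\gamma^*$ and of $S^*=\mathrm{supp}(\gamma^*)$ in the population sense, and it is the crux of the argument.

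To prove the population statement, write $e=1$ for the observations with $I=0$ and $e=2$ for those with $I=1$, so that $X^1=f(H,\eta)$, $X^2=f(H,\eta)+Z$, and $Y^e=X^e\gamma^*+g(H,\varepsilon)$ in both environments. Since $H,\eta,\varepsilon$ are independent of $I$, the pair $\big(f(H,\eta),g(H,\varepsilon)\big)$ has the same joint law under $e=1$ and $e=2$, and $Z$ is independent of this pair. Fix $S$ and any $\gamma$ with $\mathrm{supp}(\gamma)\subseteq S$, and put $\delta:=\gamma^*-\gamma$. The residuals are $R^1=Y^1-X^1\gamma=f(H,\eta)\delta+g(H,\varepsilon)$ and $R^2=Y^2-X^2\gamma=f(H,\eta)\delta+g(H,\varepsilon)+Z\delta$, so $R^2\equald R^{1}+B$ where $B:=Z\delta$ is independent of a copy of $R^1$. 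Hence $H_{0,S,\mathit{hidden}}(\E)$ forces $R^1\equald R^1+B$ with $B$ independent; passing to characteristic functions, $\phi_{R^1}(t)=\phi_{R^1}(t)\,\phi_B(t)$, and since $\phi_{R^1}$ is continuous with $\phi_{R^1}(0)=1$ it is nonzero on a neighbourhood of the origin, forcing $\phi_B\equiv 1$ there. Because $Z$ has a full-rank (in particular finite) covariance, $B$ has a finite second moment, so $E[B^2]=-\phi_B''(0)=0$, i.e.\ $B=Z\delta=0$ almost surely; therefore $\delta^t\Cov(Z)\delta=E[(Z\delta)^2]=0$, and full rank of $\Cov(Z)$ gives $\delta=0$, i.e.\ $\gamma=\gamma^*$. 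Conversely, for every $S\supseteq S^*$ the choice $\gamma=\gamma^*$ does satisfy $H_{0,S,\mathit{hidden}}(\E)$, since then $Y^e-X^e\gamma^*=g(H^e,\varepsilon^e)$ has a law not depending on $e$. This proves the population statement.

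From here I would conclude as follows. By Proposition~\ref{prop:IV} (applicable because $H_{0,S^*,\mathit{hidden}}(\E)$ is true and the test has level $\alpha$), $P[\hat S(\E)\subseteq S^*]\ge 1-\alpha$; concretely, with probability at least $1-\alpha$ the set $S^*$ is not rejected and hence $\hat S(\E)=\bigcap_{S\,:\,\text{not rejected}}S\subseteq S^*$. For the reverse inclusion, the population statement shows that every one of the finitely many sets $S$ with $S\not\supseteq S^*$ makes $H_{0,S,\mathit{hidden}}(\E)$ false, so the data distribution is an alternative for each of these tests; by the assumed power $1$, each such $S$ is rejected with probability one, and on that probability-one event every non-rejected $S$ contains $S^*$, whence $S^*\subseteq\hat S(\E)$. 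Intersecting the two events yields $P[\hat S(\E)=S^*]\ge 1-\alpha$.

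I expect the one genuinely delicate point to be deducing $B=0$ a.s.\ from $R^1\equald R^1+B$ without imposing moment conditions on $R^1$, which the model does not provide: the characteristic-function route above sidesteps this because only $B$, not $R^1$, needs a finite second moment, and that is supplied by the full-rank covariance assumption on $Z$. The remaining steps — the residual bookkeeping across the two environments and the combination of the two high-probability events — are routine.
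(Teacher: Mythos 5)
Your proof is correct and follows essentially the same route as the paper's: write the residuals $Y^e-X^e\gamma$ in the two environments defined by $I$, use the independence of $Z$ from $(H,\eta,\varepsilon,I)$ together with the full-rank covariance of $Z$ to force $\gamma=\gamma^*$, and then combine the level-$\alpha$ acceptance of $H_{0,S^*,\mathit{hidden}}(\E)$ with the power-one rejection of the false nulls. The only difference is that you spell out the details the paper leaves implicit — the convolution identity $R^1\equald R^1+Z\delta$ resolved via characteristic functions, and the observation that supersets of $S^*$ also satisfy the null but harmlessly so — which are refinements rather than a different argument.
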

A proof is given in Appendix \ref{app:popIV}.
Note that the causal variables and coefficients can be identified for
\eqref{eq:ivfeedbackspecial}, even though the model violates the 
above-mentioned assumptions (ii) and (iii) for identifiability with a classical two-stage
instrumental variable analysis: $X$ can be of arbitrary
dimension even though the instrumental variable $I$ is univariate  and
there is no shift in $E(X \given I)$ between $I=1$ and $I=0$.

A further advantage of the invariance approach might be that no test
for a weak influence of $I$ on $X$ is necessary. A weak instrument can lead to amplification of
biases in conventional instrumental variable regression
\citep{hernan2006instruments}. With the invariance approach, the
confidence intervals for $\gamma^*$ are naturally wide in case of a weak influence of
$I$ on $X$, leading to small sets $\hat{S}$ of selected causal variables. 

 Ignoring the computational
difficulties, this shows that
the approach can be generalised to include hidden variables that violate assumption (ii) c) in Proposition~\ref{propos:semb}, for example by
replacing~\eqref{eq:H0S} with the null hypothesis~\eqref{eq:H0_hidden}.
As a possible implementation of the general approach we 
must therefore test~\eqref{eq:H0_hidden} for every set
$S\subseteq\{1,\ldots,p\}$. We are faced with a formidable
computational challenge because the coefficients $\gamma^*$ cannot be found by simple linear regression anymore. 
One possibility is to place a stricter constraint on the form of
allowed interventions. For shifted soft interventions from
Section~\ref{sec:idlast}, for example, such an approach is described in
\citet{rothenhausler2015backshift}. 
For general interventions, we can test~\eqref{eq:H0_hidden}  in a
brute-force way by testing the invariance of the
distribution over a grid of $\gamma$-values. However, the computational
complexity of this approach is exponential in the predictor dimension and it would be valuable to
identify computationally more efficient ways of testing the null
hypothesis~\eqref{eq:H0_hidden}.

\section{Further extensions and model misspecification} \label{sec:furtherext}

\subsection{Nonlinear models} \label{sec:nonlin}
\if0
\Nicolai{
We have shown an approach to obtain confidence intervals for the
causal coefficients in linear models. 
Going back to the more general nonlinear
model~\eqref{invariance-nonlin}, we assume that for the set $S^*$ of
causal variables,
\begin{equation} \label{eq:invariancenonlinear}
\mbox{for all } \e \in \E:\quad Y^\e = g(X^\e_{S^*},  \varepsilon^\e),
\quad \varepsilon^\e \sim F_{\varepsilon} \mbox{ and } \varepsilon^\e \independent X^\e_{S^*}.
\end{equation}
where $g:\mathbb{R}^{|S^*|}\times \mathbb{R} \rightarrow \mathbb{R}$ is a real-valued function in a suitable function class.
We might be interested in identifying the set of causal
predictors $S^*$ in this more general setting. The equivalent
null-hypothesis to~\eqref{eq:H0S} is then 
\begin{align}\label{eq:H0S-nonlin}
  H_{0,S,\mathit{nonlin}}(\E):\quad   
\begin{array}{l}  
  \forall x, \mbox{ the distributions } Y^\e \given X_S^\e = x \mbox{ are identical for
  all environments } \e\in\E   
  \end{array}
  \end{align} 
[Regarding the independence part: I think its ok to drop in
  the null hypothesis as we just
  need that $H_{0,S}$ is fulfilled for $S^*$ or $S^0$... so we can
  always ``weaken'' the null hypothesis as we pay the price in the
  identifiability not in the coverage property...]}
\fi 

We have shown an approach to obtain confidence intervals for the
causal coefficients in linear models. 
We might be interested in identifying the set of causal
predictors $S^*$ in the more general nonlinear setting~\eqref{invariance-nonlin}. The equivalent
null-hypothesis to~\eqref{eq:H0S} is then 
\begin{align}
\label{eq:H0S-nonlin}
  H_{0,S,\mathit{nonlin}}(\E):\quad   
\begin{array}{rl}  
\quad  & \mbox{There exists } g:\mathbb{R}^{|S|}\times \mathbb{R} \rightarrow \mathbb{R} \text{ and } \varepsilon^\e \text{ such that}  \\  
&  Y^\e = g(X^\e_{S},  \varepsilon^\e), \quad \varepsilon^\e
\sim F_{\varepsilon} \mbox{  and } \varepsilon^\e \independent X^\e_{S} \mbox{ for all } \e \in \E.
  \end{array}
  \end{align} 
It is interesting to note that $S$ satisfies~\eqref{eq:H0S-nonlin} if and only if it satisfies
\begin{align}\label{eq:H0S-nonlin-cond}
  H_{0,S,\mathit{nonlin}}(\E):\quad   
\begin{array}{l}  
  \forall \e, f \in\E \mbox{ the conditional distributions } Y^\e \given X_S^\e = x \text{ and } Y^f \given X_S^f = x\\
  \text{are identical for all } x \text{ such that both cond.\ distr.\  are well-defined}.
  \end{array}
  \end{align} 
The ``only if'' part is immediate and 
for the ``if'' part 
we can use a similar idea as in 
\citep[][Prop.~9]{Peters2014JMLR},
for example, and choose a Uniform$([0,1])$-distributed 
$\varepsilon$ and $g(a,b)= 
g^e(a,b):= F^{-1}_{Y^e \given X_S^e = a}(b)$, where $F_{Y^e\given X_S^e = a}$ is the cdf of $Y^e\given X_S^e = a$.

As in the linear case, 
we can consider a SEM with environments corresponding to different interventions and, 
again, the parents of $Y$ satisfy the null hypothesis.
 More precisely, we have the following remark.
 \begin{remark} \label{rem:extpropos:sem}
Proposition~\ref{propos:sem} and Proposition~\ref{propos:semb} still hold
if we replace   
linear SEMs~\eqref{eq:semmmmm} with
nonlinear SEMs 
$$
Y_j = f_j(X_{\PA{j}}, \varepsilon_j), \quad j = 1, \ldots, p + 1
$$
and replace Assumption~\ref{assum:invariant} with 
the assumption that there exists $S^*$ satisfying~\eqref{eq:H0S-nonlin}.
\end{remark}
\begin{proof}
Again, the proof is immediate. Only the case with hidden variables requires an argument. From the SEM, we are given $Y^e = f(X_{S^0}^e, X^e_{S_H^0}, \tilde \varepsilon^e)$ with $S_H^0$ being the hidden parents of $Y$ and $(X^e_{S_H^0}, \tilde \varepsilon^e) \independent X_{S^0}^e$. We can then write $Y^e = g(X_{S^0}, \varepsilon^e)$ for a uniformly distributed $\varepsilon^e$ that is independent of $X_{S^0}$ and $g(x,n) := F^{-1}_{f(x,X^e_{S_H^0}, \tilde \varepsilon^e)}(n)$. The function $g$ does not depend on $e$ because $X^e_{S_H^0}$ and $\tilde \varepsilon^e$ have the same distribution for all $e \in \E$.
\end{proof}

Assume we have a test for the null hypothesis $
H_{0,S,\mathit{nonlin}}(\E)$. Then, testing all possible sets
$S\subseteq\{1,\ldots,p\}$, we can get a confidence set for $S^*$ in a
similar way as in the linear setting~\eqref{eq:hatcausal2} by 
\begin{equation} \label{eq:hatcausal-nonlin}
\hat{S}(\E) \; := \; 
\bigcap_{S: H_{0,S,\mathit{nonlin}}(\E) \text{ not rejected}} S.
\end{equation}
If all tests are conducted individually at level $\alpha$, we have
again the property that for any $S^*$ which
  fulfills~\eqref{eq:H0S-nonlin} or \eqref{eq:H0S-nonlin-cond}, $P( \hat{S}(\E) \subseteq S^*) \ge
1-\alpha$ since the null hypothesis for $S^*$ will be accepted with
probability at least $1-\alpha$.

Constructing suitable tests for~\eqref{eq:H0S-nonlin-cond} is easier if we are  willing to assume that the function $g$ in~\eqref{eq:H0S-nonlin} is additive in the noise component, that is 
\begin{equation} \label{eq:nonlincausal} 
  H_{0,S,\mathit{additive}}(\E):\quad   
\begin{array}{rl}  
\quad  & \mbox{there exists } g:\mathbb{R}^{|S|} \rightarrow \mathbb{R} \text{ and } \varepsilon^\e \text{ such that}  \\  
&  Y^\e = g(X^\e_{S}) +  \varepsilon^\e, \quad \varepsilon^\e
\sim F_{\varepsilon} \mbox{  and } \varepsilon^\e \independent X^\e_{S} \mbox{ for all } \e \in \E.
  \end{array}
\end{equation}
Then, we can construct tests for the null hypothesis~\eqref{eq:H0S-nonlin}
that are similar as in the linear case. Analogously to Method~I in
  Section~\ref{sect:invariant}, we
can perform nonlinear regression in each environment and test whether
the regression functions are identical \citep[e.g.][for 
isotonic regression functions]{durot2013testing}. As an alternative, we can also fit
a regression model on the pooled data set and test whether the residuals have the
same distribution in each environment, see Method~II in
  Section~\ref{sect:invariant}.

We may also test~\eqref{eq:H0S-nonlin-cond}
without assuming additivity of the noise component. This could be
addressed by introducing an environment variable~$E$ and then
performing a conditional independence test for $Y \independent E
\given X_S$, see also Appendix~\ref{app:proofmodelmis}.   The details of these approaches lie beyond the scope of this paper.

\subsection{Interventions on the target variable and its causal mechanism} \label{sec:violations}
So far, we have assumed that the error distribution of the target
variable is unchanged across all environments $\e\in\E$, see
Assumption~\ref{assum:invariant} for linear models. This
precludes interventions on $Y$ and precludes a change of the causal
mechanism for the target variable. For the gene-knockout experiments
mentioned in Section~\ref{sec:model} and treated in detail in Section~\ref{sec:geneknockout}, we
would for example know whether we have intervened on the target gene
or not. In other situations, we might not be sure whether an
intervention on the target variables occurred or not. 

If interventions are sparse, other approaches are possible, too.
For any given target variable~$Y$, we
might not be sure whether an intervention on $Y$ occurred or not,
but we can assume that an intervention on $Y$ happened in at
  most $V\ll |\E|$ different 
environments, even if we do not know in which of the
  environments it occurred, see 
    \citet{kang2015instrumental} for a related setting in instrumental variable regression.
The null hypothesis~\eqref{eq:H0S-nonlin-cond} in the general nonlinear
case can then be weakened to 
\begin{equation}  
H'_{0,S,\mathit{nonlin}}(\E):
\; 
\begin{array}{l}  
\exists \E'\subseteq \E \mbox{
    with } |\E'| \ge  |\E|-V
\text{ s.t. }
  \forall \e, f \in\E' \mbox{ the cond.\ distr.\ } 
  Y^\e \given X_S^\e = x \\
  \text{and } Y^f \given X_S^f = x
  \text{ are identical } \forall x \text{ s.t.\ both cond.\ distr.\ are well-defined}.
  \end{array}
\label{eq:H0S-nonlin-weak} 
\end{equation}  
  
  The null hypothesis $H'_{0,S^*,\mathit{nonlin}}$ is then still 
true even when interventions happen on $Y$ in some environments, where $S^*$ is the causal set of variables
that satisfies the
invariance assumption in the absence of interventions on $Y$. 
 Any test
for~\eqref{eq:H0S-nonlin-cond} can be extended as a test for the weaker
null hypothesis~\eqref{eq:H0S-nonlin-weak} by testing all subsets
$\E'$ with $|\E'| \ge |\E|- V$ at level $\alpha$, e.g. using a test for~\eqref{eq:H0S-nonlin},  and
rejecting~\eqref{eq:H0S-nonlin-weak} only if we can reject all such
subsets. We can then treat $H_{0,S, \mathit{nonlin}}(\E)$ as
  being ``accepted'' if we find one subset $\E'$ whose corresponding
  null hypothesis
  cannot be rejected.

\subsection{Model misspecification} \label{sec:modelmismain}

We have shown how the approach can be extended to cover hidden variables, nonlinear models and interventions on the target variable. The question arises how the original approach behaves if these model assumptions are violated but we use the original approach instead of the proposed extensions. 
We again write $\hat{S}(\E)$ as in~\eqref{eq:hatcausal2} as
\begin{equation*} 
\hat{S}(\E) \; := \; 
\bigcap_{S: H_{0,S} \text{ not rejected}} S.
\end{equation*}
Our approach still satisfies the coverage property $P(\hat{S}(\E) \subseteq S^*) \ge 1-\alpha$ for any set $S^*$ that satisfies Assumption~\ref{assum:invariant}. 
Let $S_c^{*}$ be a set that is considered to be causal, for example, because it is the set of observed parents of $Y$ in a SEM. 
Under no model misspecification, Proposition~\ref{propos:sem} shows that this set will satisfy Assumption~\ref{assum:invariant} or, in the general case Equation~\eqref{eq:H0S-nonlin-cond}.
If the model assumptions are violated, however, then either $H_{0,S_c^*}$ is still true (in which case the desired confidence statements $P(\hat{S}(\E) \subseteq S_c^*) \ge 1-\alpha$ is still valid) or $H_{0,S_c^*}$ is not longer true. 
The latter case thus warrants our attention. There are two  possibilities. If $H_{0,S}$ is also false for all other sets $S\subseteq\{1,\ldots,p\}$, then $\hat{S}(\E)=\emptyset$ 
for a test that has maximal power to reject false hypotheses.
Thus, the desired coverage property $P(\hat{S}(\E) \subseteq S_c^*) \ge 1-\alpha$ is still valid, even though the method will now have no power to detect the causal variables. It could happen, on the other hand, 
that 
there exists some set $S'\subseteq\{1,\ldots,p\}$ with $S'\setminus S^*_c \neq \emptyset$ for which $H_{0,S'}$ is true.
Proposition~\ref{prop:modelmis} in Appendix~\ref{app:proofmodelmis} 
shows that under some assumptions even in this case, the mistake is
not too severe: 
then there exists a different set $\tilde{S}$,  for which $H_{0,S'}$ is true, and that contains only ancestors of
the target $Y$ and no descendants. 
Then, by construction, the same also holds for $\hat{S}(\C{E})$, with probability greater than $1-\alpha$.

\section{Numerical results}\label{sec:numerical}

We apply the method to simulated data, gene perturbation experiments from
biology with interventional data and and an instrumental variable type setting
from educational research.  

\subsection{Simulation experiments}\label{sec:simul}
For the simulations, we generate data from randomly chosen linear Gaussian structural equation
models (SEMs) and compare various approaches to recover the causal predictors of
a target variable. 

The generation of linear Gaussian SEMs is
described in Appendix~\ref{app:pars}.
We sample 100 different settings and for each of those 100 settings, we
generate 1000 data sets. 
We tried to cover a wide range of scenarios, some (but not all of which)
correspond to the 
theoretical results developed in Section~\ref{sec:idres}.  
After randomly choosing a node as target variable, we can then test how well
various methods recover the parents (the causal predictors) of this target. We check whether false variables were selected as parents (false
positives) or whether the correct parents were recovered (true positives).   

For the proposed invariant prediction method, we divide the data into a block of
observational data and a block of data with interventions. Some other
existing methods make use of the exact nature of the interventions but
for our proposed method
this information is discarded or presumed unknown.
The estimated causal predictors $\hat{S}(\E)$ at confidence 
$95\%$, computed as in Method~I in Section~\ref{sect:invariant},
are then  compared to 
the true causal predictors $S^{\caus}$ of a target variable in 
the causal graph (which can sometimes be the empty set).  
The results of
Method~II are very similar in the simulations and are not shown
separately. We record whether any errors were made
($\hat{S}(\E) \nsubseteq S^{\caus}$) and whether the correct set was
recovered ($\hat{S}(\E)= S^{\caus}$). 
We compare the proposed confidence intervals with point estimates
given by several procedures for linear SEMs:
\begin{enumerate}
\item \emph{Greedy equivalence search (GES)} \citep{Chickering2002}. 
In the case of purely observational data, we can identify the so-called Markov equivalence 
class of the correct graph from the joint distribution, i.e. we can find its skeleton and orient the v-structures, i.e. some of the edges \citep{Verma1991}. 
Although, many directions remain ambiguous in the general case, it might be that we can orient some 
connections of the target variable $X_j - Y$. If the edge is pointing towards $Y$, we identify $X_j$ as a direct cause of $Y$.
The GES searches greedily over equivalence classes of graph structures in
order to maximise a penalised likelihood score. 
Here, we apply GES on the pooled data set, pretending that all data are observational.

\item \emph{Greedy interventional equivalence search (GIES) with known
    intervention targets} \linebreak \citep{Hauser2012}. 
The greedy interventional equivalence search (GIES) considers soft
interventions (at node $j$) where the conditional $p(x_j\given x_{\PA{j}})$ is
replaced by a Gaussian density in $x_j$. One can identify interventional
Markov equivalence classes from the available distributions that are
usually smaller than the Markov equivalence classes obtained from
observational data. GIES is a search procedure over interventional Markov
equivalence classes maximising a penalised likelihood score.
In comparison, a benefit of our new approach is that we do not need to
specify the different experimental conditions. More precisely, we do not
need to know which nodes have been intervened on.  

\item \emph{Greedy interventional equivalence search (GIES) with unknown
    intervention targets}.
To obtain a more fair comparison to the other methods, we hide the
intervention targets from the GIES algorithm and pretend that every
variable has been intervened on. 

\item \emph{Linear non-Gaussian acyclic models (LiNGAM)} \citep{Shimizu2006}.
The assumption of non-Gaussian distributions for the structural equations
leads to identifiability. We use an \texttt{R}-implementation \citep{R} of LiNGAM which is based
on independent component analysis, as originally proposed by \citet{Shimizu2006}.
In the observational setting, the structural equation of a specific
variable $X_j$ reads 
$$
X_j^1 = \sum_{k \in \PA{j}} \beta_{j,k} X_k^1 + \varepsilon_j^1,
$$
whereas in the interventional setting (if the coefficients $\beta_{j,k}$ remain the same), we have 
$$
X_j^2 = \sum_{k \in \PA{j}} \beta_{j,k} X_k^2 + \varepsilon_j^2.
$$
One may want to model the pooled data set as coming from a structural
equation model of the form
$$
\tilde X_j = \sum_{k \in \PA{j}} \beta_{j,k} \tilde X_k + \tilde \varepsilon_j,
$$
where $\tilde \varepsilon_j$ follows a distribution of the mixture of
$\varepsilon_j^1$ and $\varepsilon_j^2$ and thus has a
  non-Gaussian distribution (Kun Zhang mentioned this idea to JP in a private discussion). 
The new noise variables $\tilde \varepsilon_{1}, \ldots, \tilde \varepsilon_{p}$ are not independent of each other: if, for any $j\neq k$, $\tilde \varepsilon_j$ comes from the first mixture, then $\tilde \varepsilon_k$ does so, too.
We can neglect this violation of LiNGAM and apply
the method nevertheless. There is no theoretical result which would justify
LiNGAM for interventional data. 
\item \emph{Regression}. We pool all data and use a linear least-squares regression and
  retain all variables which are significant at level $\alpha/p$, in
  an attempt to control the family-wise error rate (FWER) of falsely
  selecting at least a single variable at level $\alpha$ in a
  regression (not causal) sense. As a regression technique, this method
  cannot correctly identify causal predictors. 

\item \emph{Marginal regression}. We pool all data and retain all variables that have a
  correlation with the outcome at significance level $\alpha/p$. As above,
  this regression method cannot correctly identify causal predictors. 

\end{enumerate}

\begin{figure}
\begin{center}
\includegraphics[width=0.85\textwidth]{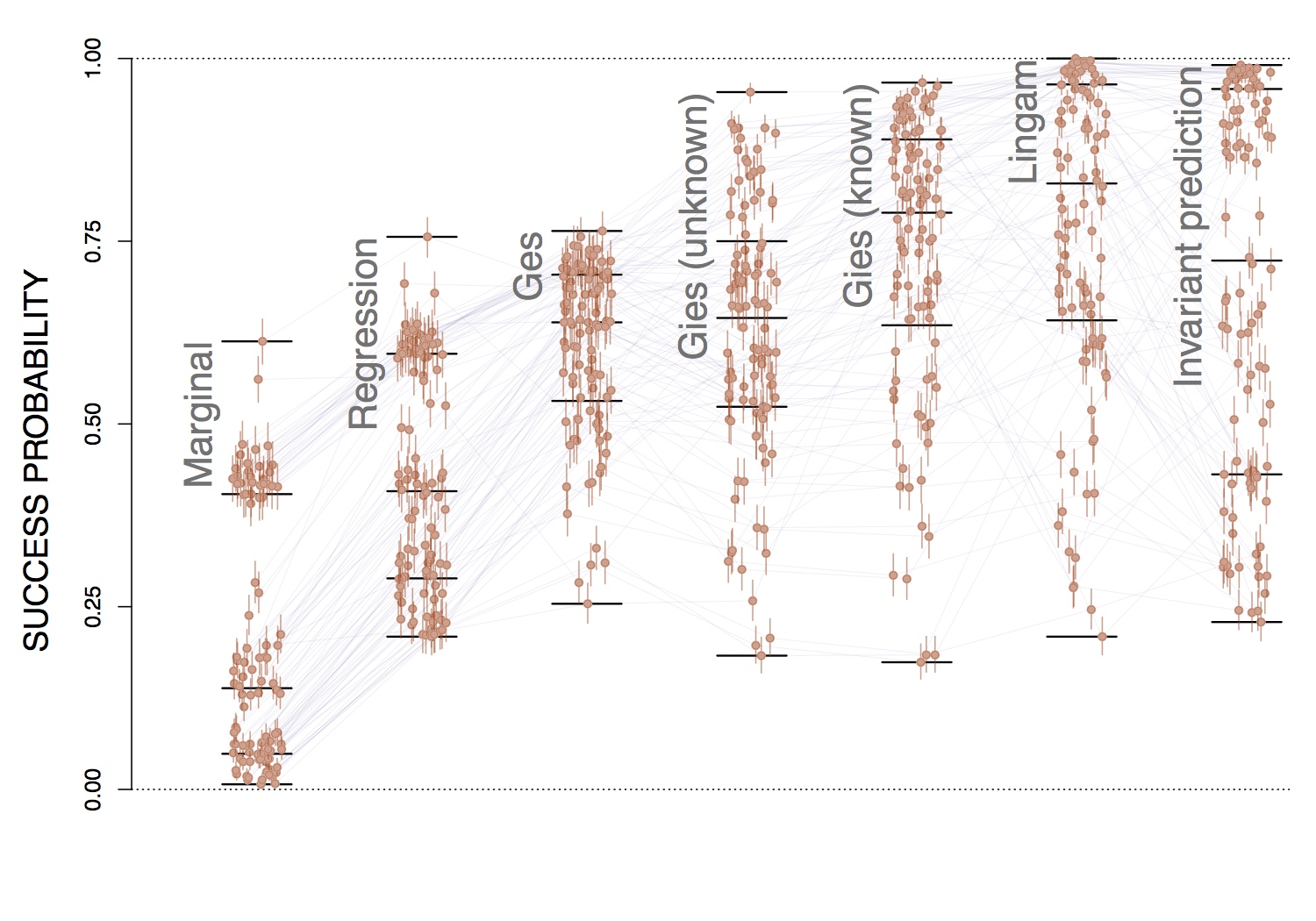}
\caption{ \label{fig:comp} The probability of success, defined as
  $P(\hat{S}(\E)=S^{\caus})$ for various methods, including our new proposed
  invariant prediction in the rightmost column. 
Each dot within a column (the x-offset within a column is uniform) corresponds to one of the 100 simulation
 scenarios.
 The dot's height shows 
 the empirical probability of success
  over 1000 simulations and the small bars indicate a 95\% confidence for
the true success probability. 
Identical scenarios are connected by grey solid lines. 
For each method,
the 
maximal and minimal values along with the quartiles of each
distribution are indicated by horizontal solid bars.}
\end{center}
\end{figure}

\begin{figure}
\begin{center}
\includegraphics[width=0.85\textwidth]{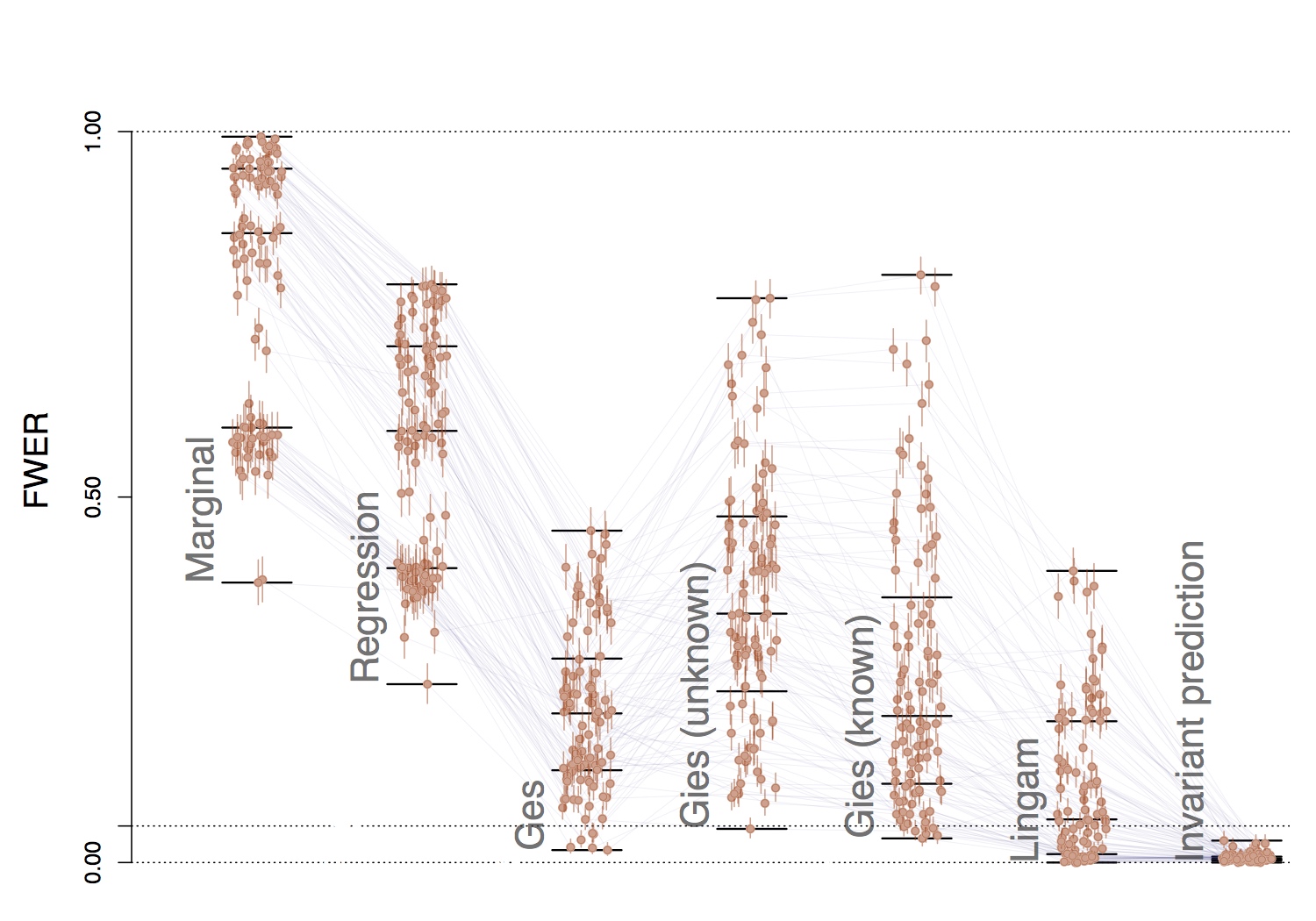}
\caption{ \label{fig:compF}  The probability of erroneous selections
  $P( \hat{S}(\E) \nsubseteq S^{\caus})$ (FWER) for the considered
  methods, including the proposed invariant prediction to the right. The
  figure is otherwise analogously generated as Figure~\ref{fig:comp}. 
  The dotted line indicates
the $0.05$ level at which the invariant prediction method was (successfully)
controlled. All other methods do not offer FWER control. }
\end{center}
\end{figure}

We show the (empirical) probability of false selections, $P(\hat{S}(\E) \nsubseteq S^{\caus})$, in
Figure~\ref{fig:compF} for all methods.  The probability of success, $P(\hat{S}(\E)=S^{\caus})$,
is shown in Figure~\ref{fig:comp}.

The success probabilities show some interesting patterns. First, there
is (as expected) not a method that performs uniformly best overall
scenarios. However, \emph{regression} and \emph{marginal regression} are dominated across all
100 scenarios by \emph{GIES} (both with known and unknown
interventions), \emph{LiNGAM} and the proposed {\emph{invariant
    prediction}). Among the 100 settings, there were 3 where \emph{GES}
  performed best on the given criterion, 14 where \emph{GIES} (with known interventions)
  performed best, 54 for \emph{LiNGAM} and 23 where the proposed
  \emph{invariant prediction} were optimal for exact recovery. There is
  no clear pattern as to which parameter is driving the differences in
  the performances: Spearman's correlation between the parameter settings
and the differences in performances between all pairs of methods
  was less than 0.3 for all parameters. The interactions between the parameter
  settings seem responsible for the relative merits of one method over
  another. 

The pattern for false selections in Figure~\ref{fig:compF} is very
clear on the other hand. The proposed invariant prediction method controls
the rate at which mistakes are made at the desired $0.05$ (and often
lower due to a conservativeness of the procedure). All other methods
have FWE rates that reach 0.4 and higher. No other method offers a
control of FWER and the results show that the probability of erroneous
selections can indeed be very high. The control of the FWER (and the
associated confidence intervals) is the key advantage of the
proposed \emph{invariant prediction}.

\subsection{Gene perturbation experiments}\label{sec:geneknockout}
\paragraph{Data set.}
We applied our method to a yeast (Saccharomyces cerevisiae) data set
\citep{Kemmeren2014}. Genome-wide mRNA expression levels in yeast were
measured and we therefore have data for $p=6170$ genes. There are $n_{obs} = 160$
``observational'' samples of wild-types and $n_{int} = 1479$ data points for
the ``interventional'' setting where each of them corresponds to a
  strain for
which a single gene $k \in K := \{k_1, \ldots, k_{1479}\} \subset \{1,
\ldots, 6170\}$ has been deleted 
(meanwhile, there is an updated data set with five more mutants). 
If the method suggests, for example, gene $5954$ as a cause of gene $4710$, and there is a deletion strain corresponding to gene $5954$, we can use this data point to determine whether gene $5954$ indeed has a (possibly indirect) causal influence on $4710$. We
say that the pair is a true positive if the expression level of gene $4710$ after intervening on $5954$ lies in the $1\%$ lower or upper tail of the observational distribution of gene $4710$, see also Figure~\ref{fig:genes} below. (We  
additionally require that the intervention on gene $5954$ appears to be
``successful'' in the sense that the expression level of gene $5954$ after intervening on this gene $5954$ lies in the $1\%$
lower or upper tail of the observational distribution of gene $5954$. This was not the case
for $38$ out of the $1479$ interventions.) With this criterion, there are
about $9.2\%$ relevant effects, which corresponds
to the proportion of true positives for a random guessing method.

\paragraph{Separation into observational and interventional data.}
For predicting a causal influence of, say, gene $5954$ on another gene we do not want to
use interventions on the same gene $5954$ (this would use information about the ground truth).  We therefore apply the following procedure:
for each $k \in K$ we consider the observational data as $e=1$ and the
remaining $1478 = 1479-1$ data points corresponding to the deletions of
genes in $K \setminus \{k\}$ as the interventional setting $e=2$. Since this
would require $n_{int} \times p$ applications of our method, we instead
separate $K$ into $B=3$ subsets of equal size, consider the two subsets
not containing $k$ as the interventional data, and do not make any use of the
  subset containing $k$. This leaves some information in the data unused but yields a huge
computational speed-up, since we need to apply our method in total only $3
\times p$ times. Additionally, when looking for potential causes of gene
$4710$, we do not consider data points corresponding to interventions on
this gene (if it exists), see Proposition~\ref{propos:sem}.

\paragraph{Goodness of fit and $p$-values.}
If we would like to avoid making a single mistake on the data set with
high probability $1-\alpha$, we can set the significance level to
for each gene to $\alpha/n_{int}$, using a Bonferroni correction in order to
take into account the $n_{int} = 1479$ genes that have been intervened
on. We work with $\alpha=0.01$ if not mentioned otherwise. The guarantee requires,
however, that the model is correct (for example the linearity
assumption is correct and there are no hidden variables with strong
effects on both genes of interest). These assumptions are likely
violated, and the implications have been partially discussed in the
previous Section~\ref{sec:furtherext}.
To further guard against false positives that are due to model
misspecification we require that there is at least one model (one
subset $S\subseteq\{1,\ldots,p\}$) for which the model fits reasonably
well: we define this by requiring a $p$-value above $0.1$  for
  testing $H_{0,S}(\E)$
  for the best-fitting set $S$ of variables (the set with the highest $p$-value), if not mentioned otherwise (but we also vary the
threshold to test how sensitive our method is with regard to parameter
settings). If no set of variables attains this threshold, we discard the
models and make no prediction.

\paragraph{Method.}
We use $L_2$-boosting \citep{fried01,buhlmann2003boosting} from the
R-package \texttt{mboost} \citep{Hothorn2010} with shrinkage $0.1$ as a way
to preselect for each response variable ten potentially causal variables,
to which we then apply the causal inference methods. We primarily use
 Method II as Method I requires subsampling  for computational
 reasons. Subsampling can lead to a loss of power as there is a
 not-negligible probability of loosing the few informative data points
 in the subsampling process.
For a computational speed-up we only consider subsets of size $\leq 3$ as
candidate sets $S$. 
Furthermore, we only retain results where just a single variable has been
shown to have a causal influence to avoid testing more difficult
scenarios where one would have to intervene on multiple genes
simultaneously.

\paragraph{Comparisons.} As alternative
methods we consider IDA
\citep{Maathuis2009} based 
on the PC algorithm \citep{Spirtes2000}
and a method that ranks the absolute value of marginal
correlation ($j_1 \rightarrow j_2$ and $j_2 \rightarrow j_1$ obtain the same score and are ranked randomly), both of which make use only of the observational data.    
We also compare with IDA based on greedy interventional equivalence search
(GIES) \citep[][]{hapb14} and a correlation-based method that ranks pairs according to correlation on the pooled observational and interventional
data. 
It was not feasible to run LiNGAM \citep{Shimizu2011} on this data
set.

\paragraph{Results.}
 \begin{figure}
\begin{center}
\includegraphics[width = 0.9 \textwidth]{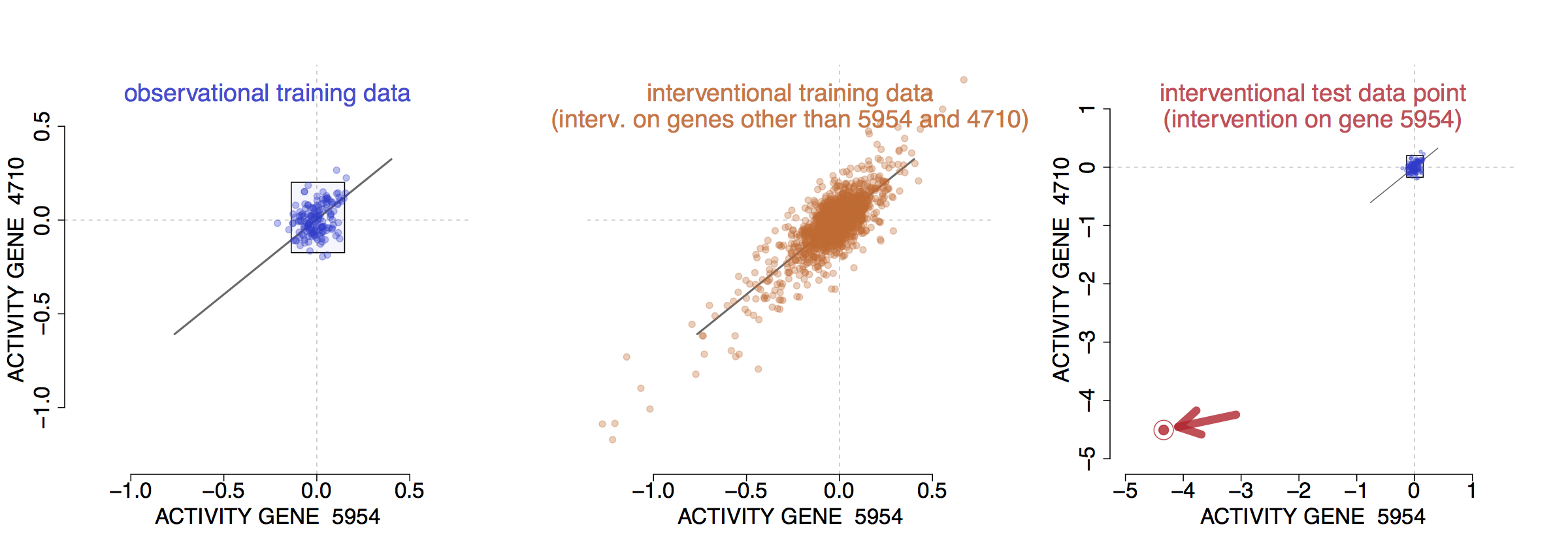}
\includegraphics[width = 0.9 \textwidth]{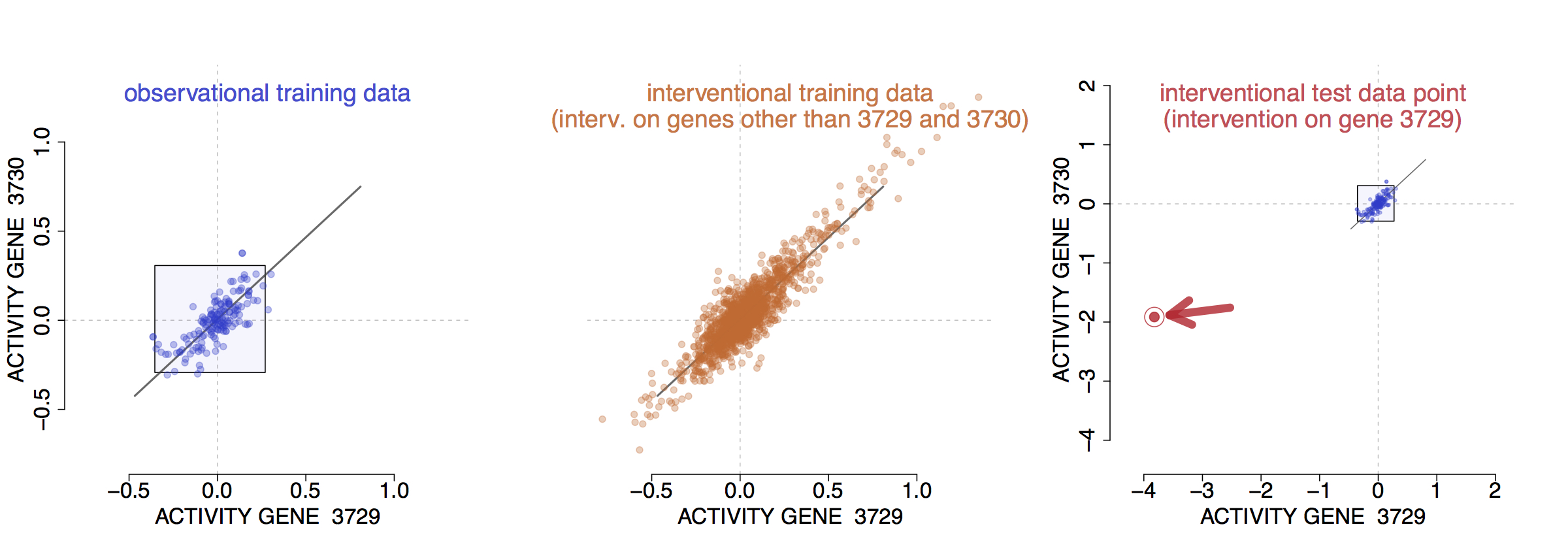}
\includegraphics[width = 0.9 \textwidth]{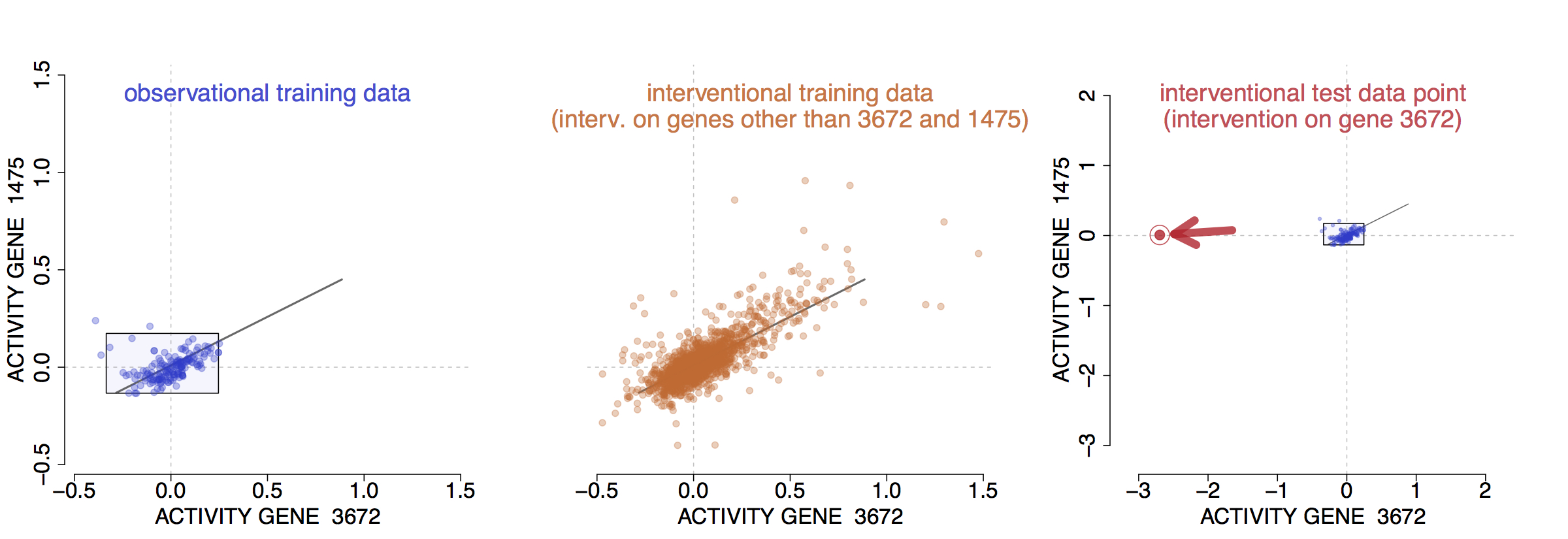}
\caption{ The three rows correspond to the three most significant
  effects found by the proposed method (with the most significant
  effect on top, suggesting a causal effect of gene 5954 on gene 4710). The left column shows the
  observational data,
while the second column shows the interventional data (that are
neither using interventions on the target variable itself nor using
interventions on the examined possible causal predictors of the target
variable); these two data sets are used as two environments for training the invariant prediction model. The regression line for a joint model of observational and
interventional data, as proposed in Method II, is shown in both plots;
we cannot reject the hypothesis that the regression is different
for observational and interventional data here. The third column
finally shows the test data (with the 1\%-99\% quantile-range of the observational data
shown as a shaded box as in the first column). There, we use the
intervention data point on the
chosen gene and look at the effect on the target variable. The first
two predicted causal effects can be seen to be correct (true
positives) in the following sense: after successfully intervening on
the predicted cause, the target gene shows reduced activity; the third
suggested pair is unsuccessful (false positive) since the intervention reduces the activity of the cause but the target gene remains as active as in the observational data.}
\label{fig:genes}
\end{center}
\end{figure}
The proposed method (Method II) outputs eight gene pairs that can be checked because the corresponding interventional experiments are available.
There are in total eight causal effects that are significant at level $0.01$ after a Bonferroni correction.
Out of these eight pairs, six are correct (random guessing has a success probability of $9.2\%$). 
Figure~\ref{fig:genes} shows the three pairs that obtained the highest
rank, i.e. smallest $p$-values. The rows in the figure therefore
correspond to the three causal effects in the data set that were
regarded as most significant by our method.  One note regarding the
plot: we plot all available data even though only two-thirds of it was
effectively used for training due to the discussed cross-validation
scheme. Many outlying points in the interventional training data of the
false
positive (second column of third row in
Figure~\ref{fig:genes}) are in particular not part of the training
data and the method might have performed better with a more
computationally-intensive validation scheme that would split the data into
$B$ blocks with $B$ larger than the currently used $B=3$.

In order to compare with other methods (none of which provide a measure of significance), we always consider the eight highest-ranked pairs. Table~\ref{tab:genes} summarises the results. In this data set, the alternative methods were not able to exceed random guessing. 

\begin{table}
\caption{\label{tab:genes} The number of true effects among the
  strongest $8$ effects that have  been found in the interventional
  test data (the number $8$ has been chosen to correspond to the
  number of significant effects under the proposed Method II). Method
  I is based on $1000$ samples and required roughly $10$ times more
  computational time than Method II.}
\fbox{%
\begin{tabular}{*{8}{c}}
\multirow{2}{*}{method} & \multirow{2}{*}{Method I} & \multirow{2}{*}{Method II}  & \multirow{2}{*}{GIES} & \multirow{2}{*}{IDA} & \multicolumn{2}{c}{marginal corr.} & random\\ 
\cline{6-7} &  & &  & &  observ. & pooled &  guessing\\ 
\cline{1-8}  \# of true & \multirow{3}{*}{6}  & \multirow{3}{*}{6} & \multirow{3}{*}{2} & \multirow{3}{*}{2} & \multirow{3}{*}{1} & \multirow{3}{*}{2}& 2 ($95\%$ quantile) \\ positives& &&&&&&3 ($99\%$ quantile)\\
(out of $8$)&&&&&&&4 ($99.9\%$ quantile)
\end{tabular}}
\end{table}
To test sensitivity of the results to the chosen implementation
details of the method, 
the variable pre-selection, the goodness-of-fit cutoff have also all
been varied (for example using Lasso instead of boosting as
pre-selection and using a cutoff of 0.1 instead of 0.01).
For Method II, variable selection with Lasso instead of boosting leads to a true positive rate of 
$0.63$ ($5$ out of $8$).
Choosing the goodness-of-fit cutoff at $0.01$ rather than $0.1$ leads to true positive rates of 
$0.43$ ($9$ out of $21$) for boosting and
$0.47$ ($8$ out of $17$) for Lasso. 
Method I without forcing eight decisions leads to a true positive rate of
$0.75$ (3 out of 4) for boosting and 
$1.00$ (1 out of 1) for Lasso.
Choosing the goodness-of-fit cutoff at $0.01$ rather than $0.1$ leads to true positive rates of 
$0.86$ (6 out of 7) for boosting and
$0.75$ (3 out of 4) for Lasso.
(Using $500$ instead of $1000$ subsamples for Method~I leads to increased speed and worse performance.)
We regard it as encouraging that the true positive rate is always larger than random guessing, irrespective of
the precise implementation of the method.

Among the reasons for false
positives (e.g. $2$ out of $8$ for Method II in Table~\ref{tab:genes}, 
there are at least
the following options: (a) noise fluctuations, (b) nonlinearities, (c) hidden variables, (d) issues with
the experiment (for example the intervention might have changed other
parts of the network) and~(e) the pair is a true positive but is -by chance- classified as a false positive by our criterion (see ``Data set'' above). Missing causal variables in the pre-screening by
boosting or Lasso falls under category (c).  We control (a) and have provided
arguments why (b) and (c) will lead to rejection of the whole model
rather than lead to false positives.
Lowering the goodness-of-fit-threshold seemed indeed to lead to more
spurious results, as expected from the discussion in the previous
Section~\ref{sec:modelmismain}. Validating a potential issue with the experiment as in
reason (d) is beyond our possibilities. We could address (e) if we had access to multiple repetitions of the intervention experiments.

\subsection{Educational attainment}\label{sec:educ}
\begin{figure}
\begin{center}
\includegraphics[width=0.85\textwidth]{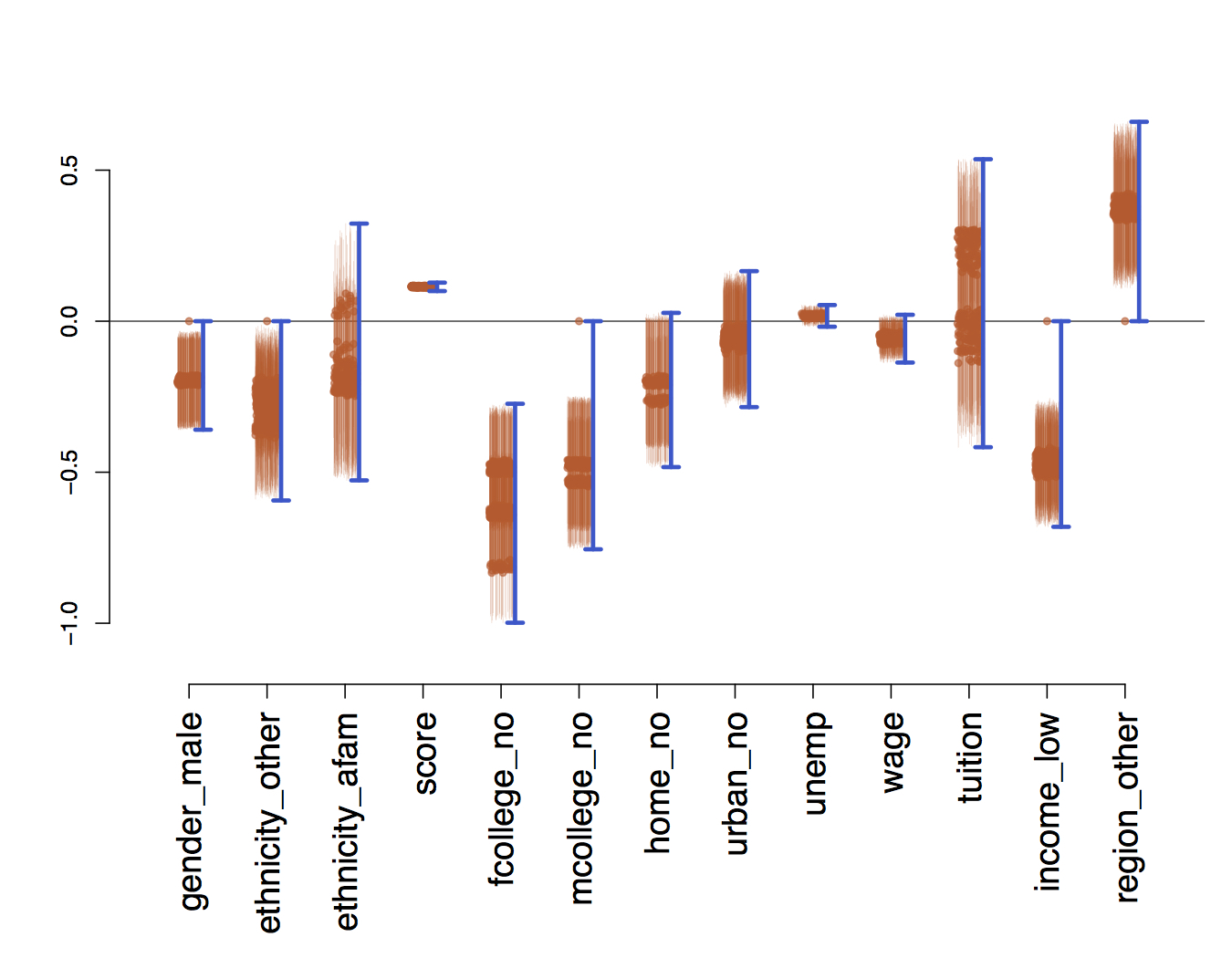}
\caption{\label{fig:edu}  The 90\% confidence intervals for the influence of various variables
on the probability of receiving a BA degree (or higher) are shown in
blue. Of all 8192 possible sets $S$, we accept 1565 sets (the empty
set is not accepted as the probability of receiving a degree is
sufficiently different for people within a close distance to a
4-year college and further away). The 
point-estimates for the coefficients are shown for these 1565 
sets as 
red dots and the corresponding confidence intervals as vertical red
bars. The blue confidence intervals are then the union of all 1565
confidence intervals, as in our proposed procedure. The variables
\emph{score} (test score) and 
\emph{fcollege\_no} (active if father did not receive a college
degree) show significant effects.
}
\end{center}
\end{figure}

We look at a data set about educational attainment of
teenagers \citep{rouse1995democratization}. For 4739 pupils from approximately
1100 US high schools, 13 attributes are recorded, including gender,
race, scores on relevant achievement tests, whether the parents
are college graduates, or family income. 
Here we work with the data as provided in \citet{stock2003introduction},
where we can see the length of education pupils received. We
make a binary distinction into whether pupils received a BA degree or
higher (equivalent to at least 16 years of education in the
classification used in \citet{stock2003introduction}) and ask whether
we can identify a causal predictive model that allows to forecast
whether students will receive a BA degree or not and this forms a
binary target $Y$. 

 The distance to the nearest
4-year college is recorded in the data and we use it to split the dataset into two
parts in the sense of~\eqref{eq:datasplittingworks}; we assume that this variable has no \emph{direct} influence on the target variable.  
As discussed, this variable does not have to
satisfy the usual assumptions about instrumental variables for our
analysis but just has to be independent of the noise in the outcome
variable (it must be a non-descendant of the target), which seems satisfied in this dataset as the distance to the
4-year college precedes the educational attainment chronologically. One set of observations are thus all pupils who live closer
to a 4-year college than the median
distance of 10 miles. The second set are all other pupils, who live at
least 10 miles from the nearest 4-year college. We ask for a
classification that is invariant in both cases in the sense that the conditional
distribution of $Y$, given $X$, is identical for both groups, where $X$
are the set of collected attributes and $Y$ is the binary outcome of
whether they attained a BA degree or higher. We use the fast
approximate Method II of
Section~\ref{sect:invariant}, with the suggested extension to logistic
regression.

Figure~\ref{fig:edu} shows the outcome of the analysis, which is also
included as an example in the \texttt{R}-package \texttt{InvariantCausalPrediction}. Factors were split into dummy variables 
so that ``ethnicity\_afam'' is~1 if the ethinicity is african-american
and~0 otherwise, ``fcollege\_no'' is~1 if the father did not receive a
college degree and so forth. We provide
90\% confidence intervals. All of them include~0 except
for the confidence interval for the influence of the test score
(positive effect) and
the indicator that the father did not receive a college degree
(negative effect). A high score on the
achievement test thus seems to have a positive causal influence on the
probability of obtaining a BA degree, which seems plausible. 

As it is difficult to verify the ground truth in this case, we refrain
from comparisons with other possible approaches to the same data set
and just want to use it as an example of a possible practical application.
The example shows that we can use instrumental-variable-type variables
to split the data set into different ``experimental'' groups. If the
distributions of 
the outcome are sufficiently different in the created groups, we
can potentially have power to detect invariant causal prediction
effects. 

\section{Discussion and Future Work}\label{sec:discussion}

An advantage of causal predictors compared to non-causal ones is that
their influence on the target variable remains invariant under
different changes of the environment (which arise for example through interventions). 
We have described this invariance and exploit it
for the identification of the causal predictors. Confidence sets for the causal predictors and confidence intervals for relevant parameters follow naturally in this framework.  In the
special case of Gaussian structural equation models with interventions we
have proved identifiability guarantees for the set of causal predictors.  
We discussed some of the questions that require more work: suitable tests
for equality of conditional distributions for nonlinear models, feedback
models and increased computational efficiency both in the absence and 
presence of hidden variables.  

The approach of invariant prediction provides new concepts and
  methods for causal inference, and also relates to many known concepts 
but considers them
from a different angle. It constitutes a new understanding of 
causality that 
opens the way to a novel class of theory and methodology in causal inference.

\section*{Acknowledgements}
The research
leading to these results has received funding from the People Programme (Marie Curie
Actions) of the European Union's Seventh Framework Programme (FP7/ 2007--2013)
under REA grant agreement no 326496.
The authors would like to thank seven anonymous referees for their helpful comments on an earlier version of the manuscript and would like to thank Alain Hauser, Thomas Richardson, Bernhard Sch\"olkopf and Kun Zhang for helpful discussions.

\bibliographystyle{plainnat}
\bibliography{bibliography}

\appendix

\section{An Example} \label{app:example}
\ifgamma
We illustrate here in Figure~\ref{fig:examplesSEM} the concepts and methodology
which have been developed in Sections~\ref{sec:21},~\ref{sec:22} 
and~\ref{sec:conf}.
\else
We illustrate here in Figure~\ref{fig:examplesSEM} the concepts and methodology
which have been developed in Sections~\ref{sec:21new},~\ref{sec:22new} 
and~\ref{sec:conf}.
\fi
The figure shows an example of two environments whose data were generated from observational and interventional structural equation models.

\definecolor{newblue}{RGB}{62,76,209}
\definecolor{newblueLight}{RGB}{184,189,238}

\if0
\hfsetbordercolor{gray!80}
\hfsetfillcolor{newblue!30}
\hfsetfillcolor{newblueLight!100}

\begin{figure}
\begin{center}
\tikzmarkin{glw3}(0.1,-2.1)(0.2,2.2)
\hfsetbordercolor{gray!80}
\hfsetfillcolor{gray!40}
\begin{minipage}[][][c]{0.33\textwidth}
{\small
environment $e=1$:
\begin{alignat*}{2}
 X_2^1 &= 0.3\varepsilon_2^1 \\
X_3^1 &= X_2^1 + 0.2\varepsilon_3^1\\
 \tikzmarkin{glw}(0.1,-0.2)(-0.15,0.4) Y^1 &= -0.7X_2^1 + 0.6X_3^1 + 0.1\varepsilon_Y^1 \tikzmarkend{glw}\\
X_4^1 &= -0.5Y^1 + 0.5X_3^1 + 0.1\varepsilon_4^1
\end{alignat*}
$\varepsilon_Y^1, \varepsilon_2^1, \varepsilon_3^1, \varepsilon_4^1 \overset{iid}{\sim} \mathcal{N}(0,1)$
}
\end{minipage}
\tikzmarkend{glw3}
\hfill
\begin{minipage}[][][c]{0.14\textwidth}
\begin{tikzpicture}[xscale = 0.74, yscale=1, line width=0.5pt, minimum size=0.58cm, inner sep=0.3mm, shorten >=1pt, shorten <=1pt]
    \small
    \draw (0,2) node(x2) [circle, draw] {$X_2$};
    \draw (2,2) node(x3) [circle, draw] {$X_3$};
    \draw (1,1) node(y) [circle, draw] {$Y$};
    \draw (1,0) node(x4) [circle, draw] {$X_4$};
    \draw[-arcsq, dashed] (x2) -- (x3);
    \draw[-arcsq] (x2) -- (y);
    \draw[-arcsq] (x3) -- (y);
    \draw[-arcsq] (y) -- (x4);
   \end{tikzpicture}
\end{minipage}
\hspace{0.02cm}
\definecolor{neworange}{RGB}{221,120,63}
\definecolor{neworangeLight}{RGB}{242,206,185}
\hfsetbordercolor{gray!80}
\hfsetfillcolor{neworange!30}
\hfsetfillcolor{neworangeLight!100}
\tikzmarkin{glw4}(0.2,-2.1)(0.15,2.2)
\hfsetbordercolor{gray!80}
\hfsetfillcolor{gray!40}
\begin{minipage}[][][c]{0.32\textwidth}
{\small
environment $e=2$:
\begin{alignat*}{2}
X_2^2 &= 0.3\varepsilon_2^2\\
X_3^2 &= 0.4\varepsilon_3^2\\
\tikzmarkin{glw2}(0.1,-0.2)(-0.15,0.4) Y^2 &= -0.7X_2^2 + 0.6X_3^2 + 0.1\varepsilon_Y^2 \tikzmarkend{glw2}\\
X_4^2 &= -0.5Y^2 + 0.5X_3^2 + 0.1\varepsilon_4^2
\end{alignat*}
$\varepsilon_Y^2, \varepsilon_2^2, \varepsilon_3^2, \varepsilon_4^2 \overset{iid}{\sim} \mathcal{N}(0,1)$
}
\end{minipage}
\tikzmarkend{glw4}
\\
\includegraphics[width=0.48\textwidth]{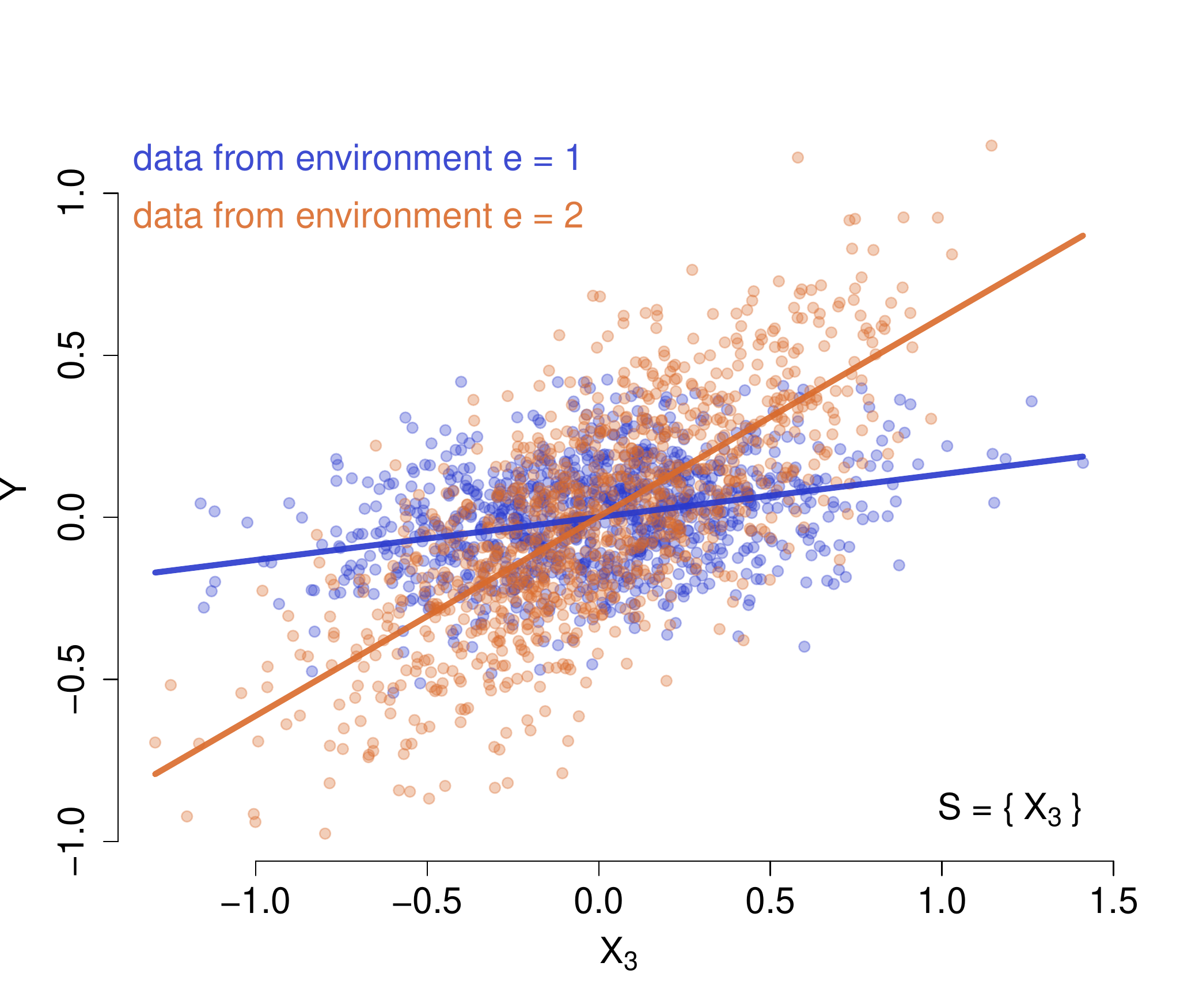}
\hfill
\includegraphics[width=0.48\textwidth]{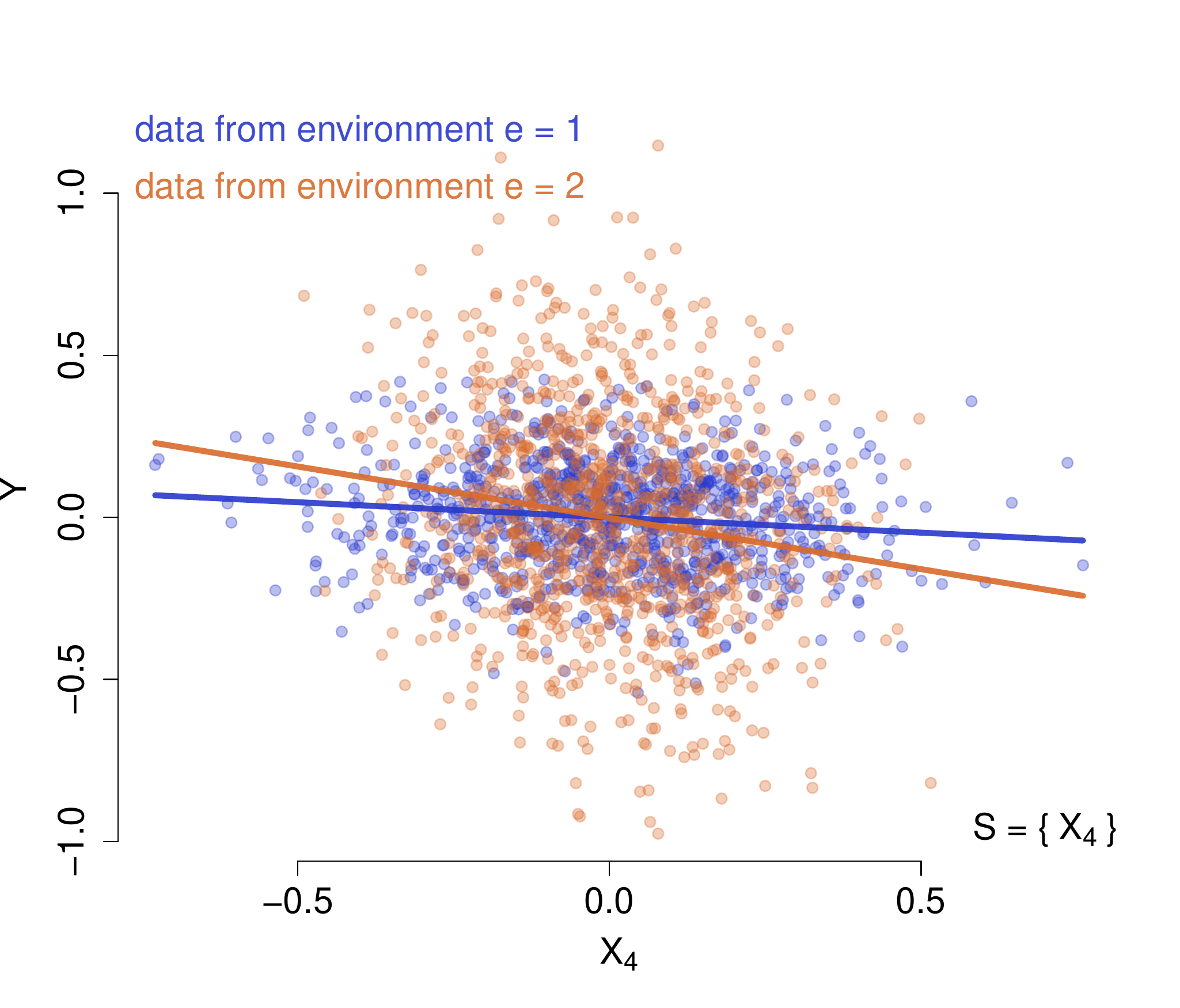}
\caption{ The top row shows the example of two structural equation models (SEMs) entailing the two distributions corresponding to two environments $e=1$ and $e=2$. Here, the first environment corresponds to the graph including the dashed edge, the second environment corresponds to an intervention on $X_3$, the graph excluding the dashed edge. Since the structural equation for $Y$ is unchanged, the set $S^\caus = \{X_2, X_3\} = \PA{1}$ satisfies Assumption~\ref{assum:invariant}, see Proposition~\ref{propos:sem}.
We consider the setup where we  know 
neither $S^\caus$ nor the SEMs (we do not even require the existence of such a SEM). Instead, we are given two finite samples (one from each environment) and provide an estimator $\hat{S}$ for $S^\caus$. 
In the above example, the null hypothesis of invariant prediction gets
rejected for any set $S$ of variables except for $S
  = \{X_2, X_3\}$ and $S= \{X_2, X_3, X_4\}$ (using the methodology described in Section~\ref{sect:invariant}). 
The bottom row shows that for $S = \{X_3\}$, for example, the linear
regression coefficients differ in the two environments. For $S = \{X_4\}$,
the regression coefficients seem similar but the set is rejected because of
varying variances of the residuals. 
We then propose to consider the intersection of the sets of
  variables for which the hypothesis of invariance is not rejected; this leads to the (conservative) estimate 
$\hat{S}$ for the set of identifiable
    predictors $S^*$: $\hat{S} = \{X_2, X_3\} \cap \{X_2, X_3, X_4\} =
    \{X_2, X_3\}$. We thus have for this case $\hat{S} = S^*$, see also Theorem~\ref{prop:onlyone} with $k_0 = 3$.}
\label{fig:examplesSEM}
\end{center}
\end{figure}
\else

\begin{figure}
\begin{center}
\includegraphics[width=0.99\textwidth]{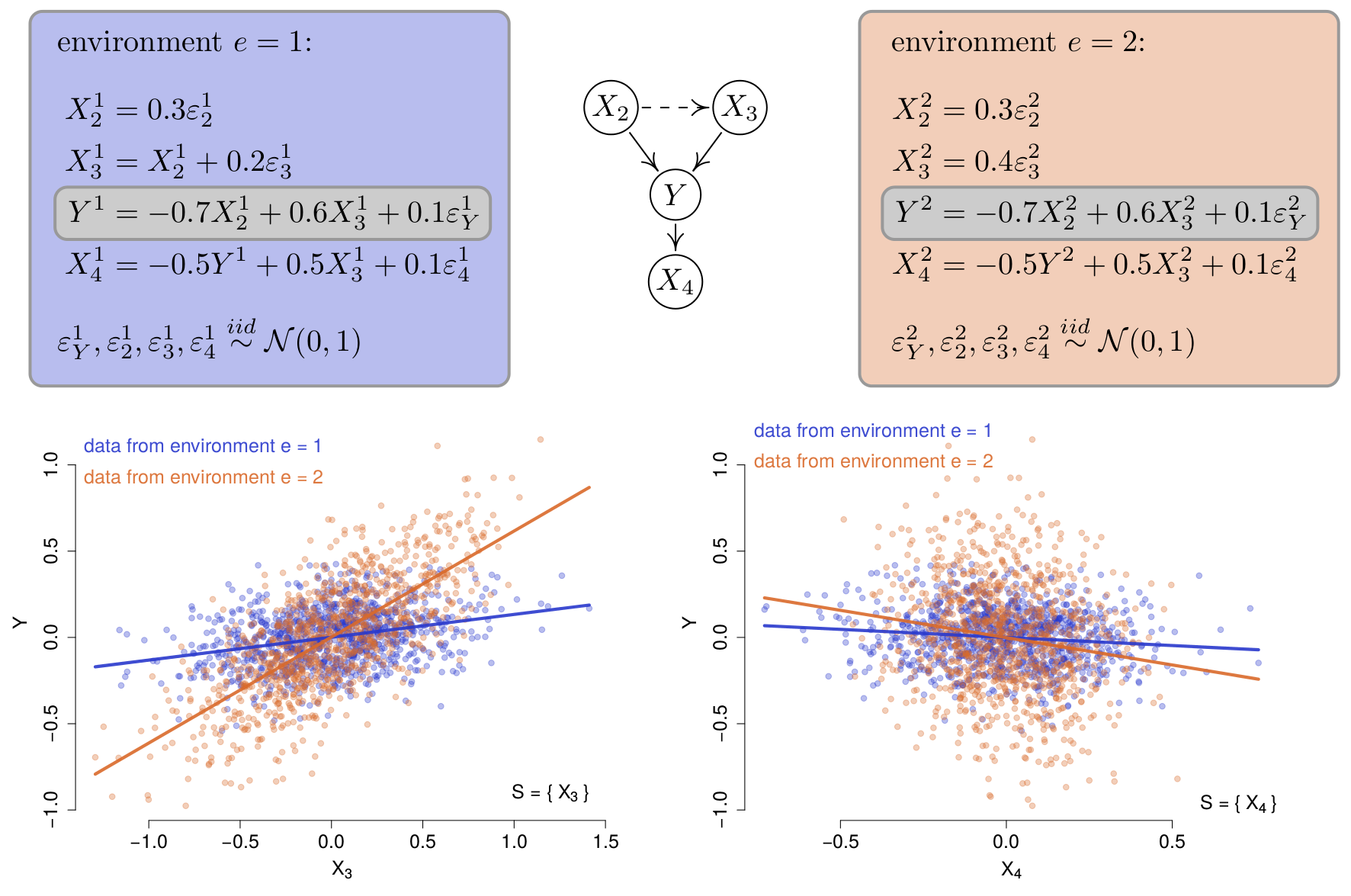}
\caption{ The top row shows the example of two structural equation models (SEMs) entailing the two distributions corresponding to two environments $e=1$ and $e=2$. Here, the first environment corresponds to the graph including the dashed edge, the second environment corresponds to an intervention on $X_3$, the graph excluding the dashed edge. Since the structural equation for $Y$ is unchanged, the set $S^\caus = \{X_2, X_3\} = \PA{1}$ satisfies Assumption~\ref{assum:invariant}, see Proposition~\ref{propos:sem}.
We consider the setup where we  know 
neither $S^\caus$ nor the SEMs (we do not even require the existence of such a SEM). Instead, we are given two finite samples (one from each environment) and provide an estimator $\hat{S}$ for $S^\caus$. 
In the above example, the null hypothesis of invariant prediction gets
rejected for any set $S$ of variables except for $S
  = \{X_2, X_3\}$ and $S= \{X_2, X_3, X_4\}$ (using the methodology described in Section~\ref{sect:invariant}). 
The bottom row shows that for $S = \{X_3\}$, for example, the linear
regression coefficients differ in the two environments. For $S = \{X_4\}$,
the regression coefficients seem similar but the set is rejected because of
varying variances of the residuals. 
We then propose to consider the intersection of the sets of
  variables for which the hypothesis of invariance is not rejected; this leads to the (conservative) estimate 
$\hat{S}$ for the set of identifiable
    predictors $S^*$: $\hat{S} = \{X_2, X_3\} \cap \{X_2, X_3, X_4\} =
    \{X_2, X_3\}$. We thus have for this case $\hat{S} = S^*$, see also Theorem~\ref{prop:onlyone} with $k_0 = 3$.}
\label{fig:examplesSEM}
\end{center}
\end{figure}

\fi

\section{Hidden variables without confounding} \label{app:propsemb}
We discuss first a generalisation of Proposition \ref{propos:sem},
allowing for some hidden variables but excluding confounding between
the observable causal variables and the target variable. Another
setting allowing for such confounding is presented in Section
\ref{subsec.instrvar}. Consider the structural equation model with variables
$X_1=Y,X_2,\ldots ,X_p, X_{p+1} , H_1,\ldots ,H_q$, where the latter
$H_1,\ldots ,H_q$ are unobserved, hidden variables with mean zero.

\begin{prop}\label{propos:semb}
Consider a linear structural equation model including variables \[ (X_1=Y,X_2, \ldots, X_p, X_{p+1} ,H_1,\ldots ,H_q),\]  whose structure is given by a directed acyclic
graph. Denote by 
\begin{eqnarray*}
S^0 := \PA{1} \cap \{2,\ldots ,p+1\}
\end{eqnarray*}
the indices of the observable direct causal variables for $Y$ and by $S_H^0$ the set of indices having a 
directed edge from the hidden variables $H_1,\ldots ,H_q$ to $Y$, i.e., 
${S_H^0}= \PA{1} \setminus S^0$. The 
structural equation for $Y$ is  
\begin{eqnarray*}
Y = \sum_{j \in S^0} \beta_{Y,j} X_j + \sum_{k \in S_H^0} \kappa_{Y,k} H_k
  + \varepsilon_Y,
\end{eqnarray*} 
where $\varepsilon_Y$ is independent of $X_{S^0}$ and $H_{S^0_H}$.\\ 
Then, by choosing $\gamma^* = \{\beta_{Y,j},\ j \in S^0\}$ and $S^* = S^0$,
Assumption 1 holds if one of the following conditions (i) or (ii) is
satisfied.
\begin{enumerate}
\item[(i)] There are no direct causal effects from the hidden variables
  $H_1,\ldots ,H_q$ to the target variable~$Y$, i.e., $S^0_H = \emptyset$,
  and it holds that 
\begin{eqnarray}\label{addPB2}
& &Y^e = \sum_{j \in S^0} \beta_{Y,j} X_j^e + \varepsilon_Y^\e\ \mbox{for all}\ \e
  \in \E,
\end{eqnarray}
where $\varepsilon_Y^\e$ is independent of
$X^\e_{S^0}$ and has the same distribution for all $\e \in \E$.
In particular, this
holds under do- or soft-interventions on the
  variables $\{X_2,\ldots ,X_{p+1}\} \cup \{H_1,\ldots ,H_q\}$ given that~$S^0_H = \emptyset$.
\item[(ii)] There are hidden variables which have a direct effect on the
  target variable $Y$, i.e., $S^0_H \neq \emptyset$. It holds that 
\begin{eqnarray}\label{addPB3}
Y^e = \sum_{j \in S^0} \beta_{Y,j} X_j^e  + \sum_{k \in S^0_H}
    \kappa_{Y,k} H_k^\e + \varepsilon_Y^\e\ \mbox{for all}\ \e
  \in \E,
\end{eqnarray}
where $\sum_{k \in S^0_H}
    \kappa_{Y,k} H_k^\e + \varepsilon_Y^\e$ is
    independent of $X_{S^0}^\e$ and has the same distribution with mean
    zero for all $\e \in \E$. This 
    holds under the
    following conditions (a)-(c):  
\if0
\begin{enumerate}
\item[(a)] the experiments $\e \in \mathcal{E}$ arise as do- or soft-interventions on the
    variables $\{X_2,\ldots ,X_{p+1}\}$;
\item[(b)] the set of descendant variables of $X_2,\ldots ,X_{p+1}$ does not
    intersect with $\{H_1,\ldots ,H_k\}$ (e.g., this is implied by assuming that
    all edges at the hidden variables have arrows
    pointing away from $H_1,\ldots ,H_k$); 
\item[(c)] the set of descendant variables of $H_1,\ldots ,H_k$ does not
    intersect with $S^0$; (note that (b) and (c) together imply that there
    are no edges between the set of variables $H_1,\ldots , H_k$ and $S^0$).
\end{enumerate}
\end{enumerate}
\fi
\begin{enumerate}
\item[(a)] the experiments $\e \in \mathcal{E}$ arise as do- or soft-interventions;
\item[(b)] there are no interventions on $Y$, on nodes in $S^0_{H}$ or on any ancestor of $S^0_{H}$;
\item[(c)] there is no $d$-connecting path between any node in $S^0$ and $S^0_H$.
\end{enumerate}   
\end{enumerate}

\end{prop}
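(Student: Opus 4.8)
The plan is to treat $S^*=S^0$, $\gamma^*=(\beta_{Y,j})_{j\in S^0}$ and the candidate noise
\[
\varepsilon^e\;:=\;\sum_{k\in S^0_H}\kappa_{Y,k}H_k^e+\varepsilon_Y^e,\qquad e\in\E,
\]
and to verify the three ingredients of Assumption~\ref{assum:invariant} one at a time: that $Y^e=X^e\gamma^*+\varepsilon^e$ for all $e$; that $\varepsilon^e$ has a fixed distribution $F_\varepsilon$ with mean zero and finite variance; and that $\varepsilon^e\independent X^e_{S^0}$ for all $e$. Case (i) is the degenerate version with $S^0_H=\emptyset$, so $\varepsilon^e=\varepsilon_Y^e$ and all three requirements are literally the content of the assumed identity~\eqref{addPB2}; nothing remains to do there. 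For the ``in particular'' clause of (i), I would argue that because no intervention acts on $Y$ its structural equation is copied verbatim into every environment, so $\varepsilon_Y^e\equald\varepsilon_Y$; and because the members of $S^0$ are parents, hence non-descendants, of $Y$ in the DAG (a relation that only shrinks when do-interventions delete incoming edges of other nodes, and is untouched by soft interventions), $\varepsilon_Y$ does not enter their structural assignments, so $\varepsilon_Y^e\independent X^e_{S^0}$; this yields~\eqref{addPB2}.

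The substance is case (ii). Point~1 follows from (b): no intervention hits $Y$, so the structural equation of $Y$ transfers to every $e$, giving $Y^e=\sum_{j\in S^0}\beta_{Y,j}X^e_j+\varepsilon^e$. For Point~2, I would note that $\varepsilon^e$ is a fixed deterministic function of $(H^e_{S^0_H},\varepsilon_Y^e)$, and that by (b) no intervention touches $Y$, any node of $S^0_H$, or any ancestor of $S^0_H$; hence every structural equation and every exogenous noise determining $(H_{S^0_H},\varepsilon_Y)$ is unchanged across environments, so $(H^e_{S^0_H},\varepsilon_Y^e)\equald(H^1_{S^0_H},\varepsilon_Y^1)$ and therefore $\varepsilon^e\equald\varepsilon^1=:F_\varepsilon$, which has mean zero (by hypothesis) and finite variance (being a finite linear combination of the model's exogenous noises).

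Point~3 is the real content. I would split $\varepsilon^e\independent X^e_{S^0}$ into two independences. First, $\varepsilon_Y^e\independent(X^e_{S^0},H^e_{S^0_H})$: $\varepsilon_Y$ is an exogenous noise not intervened on, and $S^0,S^0_H$ are non-descendants of $Y$, so their structural assignments (observational or intervened) do not involve $\varepsilon_Y$, and joint independence of the exogenous noises does the rest. Second, $X^e_{S^0}\independent H^e_{S^0_H}$: here I would invoke the global Markov property in the DAG $G^e$ describing environment $e$, for which it suffices that $S^0$ and $S^0_H$ are $d$-separated given the empty set in $G^e$. By (a), $G^e$ is either the observational graph unchanged (soft interventions) or the observational graph with incoming edges of intervened nodes deleted (do-interventions); deleting edges can only destroy paths, never create one or turn a non-collider into a collider, so no previously blocked path becomes active and no new active path appears. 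Hence assumption (c), absence of a $d$-connecting path between $S^0$ and $S^0_H$ in the observational graph, is inherited by $G^e$, giving $X^e_{S^0}\independent H^e_{S^0_H}$. Combining the two, $(H^e_{S^0_H},\varepsilon_Y^e)\independent X^e_{S^0}$, hence $\varepsilon^e\independent X^e_{S^0}$, and Assumption~\ref{assum:invariant} holds with the stated $\gamma^*,S^*$.

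The main obstacle I anticipate is the careful bookkeeping around interventions and $d$-separation: stating precisely that do-interventions (edge deletion) and soft interventions (no graph change) only prune active paths, so that (c) survives in every environment, and matching this with the observation that (b) is exactly the hypothesis that freezes the law of the compound noise $\varepsilon^e$ while still leaving interventions on $S^0$ -- and hence the distribution of $X^e_{S^0}$ -- completely free, as Assumption~\ref{assum:invariant} permits. It is likely cleanest to isolate the path-pruning fact as a short lemma before running the argument.
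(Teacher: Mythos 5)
Your proposal is correct and follows essentially the same route as the paper's (very terse) proof: case (i) is read off directly from~\eqref{addPB2}, and in case (ii) the compound noise $\eta^e=\sum_{k\in S^0_H}\kappa_{Y,k}H^e_k+\varepsilon^e_Y$ has the invariance of its law deduced from (a)--(b) and its independence from $X^e_{S^0}$ deduced from (c). You merely supply details the paper leaves implicit, namely the two-step factorization $\varepsilon^e_Y\independent(X^e_{S^0},H^e_{S^0_H})$ plus $X^e_{S^0}\independent H^e_{S^0_H}$, and the observation that the $d$-separation in (c) is preserved under do- and soft-interventions.
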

\begin{proof}
Assumption 1 follows immediately from~\eqref{addPB2}
or~\eqref{addPB3}, respectively. From the definition of the interventions,
as described in Section \ref{sec:intervv}, the justification
for~\eqref{addPB2} follows and hence the claim assuming condition (i). 
When invoking condition (ii), we show now that (a)-(c)
imply~\eqref{addPB3} and the required conditions. Due to (a) and (b), we 
have Equation~\eqref{addPB3} and we
know that the distribution of
$$
\eta^e := \sum_{k \in S^0_H}
  \kappa_{Y,k} H_k^\e + \varepsilon_Y^\e
  $$
is the same for all $\e \in \E$.
Furthermore, $\eta^\e$ is independent of $X_{S^0}^\e$ because of (c).
\end{proof}

\section{Model Misspecification} \label{app:proofmodelmis}
Under model misspecification $S(\E)$ may not be a subset of the direct causes of $Y$ anymore. 
The following proposition shows that in most cases it is still a subset of the ancestors of $Y$ (and is therefore a subset of possibly indirect causes of $Y$). 
The proposition is formulated in the general case, see Section~\ref{sec:nonlin}.
In order to formulate the required faithfulness assumption, we consider an environment variable $E$.
\begin{prop} \label{prop:modelmis}
Consider a SEM over nodes $(Y, X_2, \ldots, X_{p+1}, H_1, \ldots, H_q)$ with hidden variables $H_1, \ldots, H_q$. We now augment the corresponding graph by a discrete environment variable $E \in \E$ \citep[e.g.][]{Pearl2009} 
that satisfies $P(E = e) > 0$ for all $e \in \E$ and has a directed
edge to any node that is do- or soft-intervened on. Let us assume that the joint distribution over $(Y, X_2, \ldots, X_{p+1}, H_1, \ldots, H_q, E)$ is faithful w.r.t.\ the augmented graph. 
Then 
$$
S(\E) := \bigcap_{S\,:\, H_{0,S,nonlin}(\E) \text{ is true }} \! \!S \; \; \subseteq \; \mathbf{AN}(Y) \cap \{ X_2, \ldots, X_{p+1}\}.
$$
\end{prop}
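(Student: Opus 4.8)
The plan is to turn the probabilistic statement into a graphical one and then use faithfulness. Work in the augmented DAG $G^+$ over $(Y,X_2,\ldots,X_{p+1},H_1,\ldots,H_q,E)$, in which $E$ is a source (it has only outgoing edges, to the intervened nodes). Write $\mathcal{F}:=\{S\subseteq\{2,\ldots,p+1\}:H_{0,S,\mathit{nonlin}}(\E)\text{ holds}\}$, so that $S(\E)=\bigcap_{S\in\mathcal F}S$; if $\mathcal F=\emptyset$ there is nothing to prove, so fix some $S_1\in\mathcal F$. By the equivalence established in Section~\ref{sec:nonlin}, $H_{0,S,\mathit{nonlin}}(\E)$ is equivalent to the conditional laws $Y^\e\mid X_S^\e=x$ agreeing across $\e\in\E$, and since $P(E=\e)>0$ for all $\e$ this is precisely the conditional independence $Y\independent E\mid X_S$ in the joint law over $(Y,X,H,E)$. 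The SEM satisfies the global Markov property w.r.t.\ $G^+$, and by hypothesis the joint law is faithful to $G^+$; hence $S\in\mathcal F$ if and only if $Y$ and $E$ are $d$-separated given $X_S$ in $G^+$.

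The core is the following graph lemma: \emph{if $Y$ and $E$ are $d$-separated given $X_S$ in $G^+$, then they are also $d$-separated given $X_{S_0}$, where $S_0:=S\cap\AN{Y}$.} (Because $E$ is a source, $\AN{Y}$ contains every node of $\{2,\ldots,p+1\}$ that could matter here.) Granting the lemma, the proof finishes immediately: for every $S\in\mathcal F$ the lemma plus the Markov property give $Y\independent E\mid X_{S\cap\AN{Y}}$, i.e.\ $S\cap\AN{Y}\in\mathcal F$; applying this to $S_1$ yields $S(\E)\subseteq S_1\cap\AN{Y}\subseteq\AN{Y}$, and together with $S(\E)\subseteq S_1\subseteq\{2,\ldots,p+1\}$ this gives $S(\E)\subseteq\mathbf{AN}(Y)\cap\{X_2,\ldots,X_{p+1}\}$.

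To prove the lemma I would argue by contradiction. Suppose $\pi$ is a path from $E$ to $Y$ that is $d$-connecting given $X_{S_0}$; I claim $\pi$ is then also $d$-connecting given $X_S$, contradicting $d$-separation by $X_S$. Being active given $S_0$, $\pi$ has every collider in $\AN{S_0}\subseteq\AN{Y}$ and no non-collider in $S_0$. If $\pi$ were blocked by $S$, it would be blocked either (A) at a collider $v\notin\AN{S}$, impossible since $\AN{S_0}\subseteq\AN{S}$; or (B) at a non-collider $w\in S$, so $w\notin S_0$, i.e.\ $w\notin\AN{Y}$. Orienting $\pi$ from $E$ to $Y$: if the edge of $\pi$ at $w$ on the $Y$-side points out of $w$, follow $\pi$ toward $Y$ along outgoing edges until reaching $Y$ or the first collider $v$; the traversed subpath is directed, so $w\in\AN{v}\subseteq\AN{Y}$ (or $w\in\AN{Y}$ directly), a contradiction. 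Otherwise that edge points into $w$, so ($w$ being a non-collider) its $E$-side edge points out of $w$; now follow $\pi$ toward $E$ — since $E$ is a source this segment cannot be entirely forward and must contain a collider, and the subpath from $w$ to the collider nearest $w$ is directed, so again $w$ is an ancestor of a collider of $\pi$, hence $w\in\AN{Y}$, a contradiction. Thus $\pi$ is not blocked by $S$.

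The main obstacle is precisely this last step: the path surgery showing that conditioning on the non-ancestors of $Y$ inside $S$ is never needed to block an $E$--$Y$ path, where the crucial leverage is that $E$ is a source and every activated collider on an active path is, by construction, an ancestor of $S_0$ and hence of $Y$. The bookkeeping (orienting $\pi$, locating the relevant collider, verifying it stays activated after shrinking the conditioning set) is routine but must be done carefully; everything else reduces to the Markov property, faithfulness, and the conditional-distribution reformulation of Section~\ref{sec:nonlin}.
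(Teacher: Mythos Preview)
Your proof is correct and follows essentially the same strategy as the paper's: translate $H_{0,S,\mathit{nonlin}}$ into $d$-separation of $Y$ and $E$ given $X_S$ via faithfulness, then show that the non-ancestors of $Y$ in $S$ can be discarded without destroying $d$-separation, using that $E$ is a source so any outgoing chain from a non-ancestor must terminate in a collider whose activation forces ancestry of $Y$. The only organizational difference is that the paper peels off one ``youngest'' non-ancestor $W\in S\setminus\AN{Y}$ at a time and iterates, whereas you pass directly to $S_0=S\cap\AN{Y}$ and argue the contrapositive in one shot; the underlying path analysis is the same.
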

In particular, this proposition still holds under 
model misspecification when for some do-interventions, for example,
$S^0 = \mathbf{PA}(Y) \cap \{X_2,\ldots, X_{p+1}\}$ does not
satisfy 
$H_{0,S,nonlin}(\E)$~\eqref{eq:H0S-nonlin}; Figure~\ref{fig:exmisspef} shows an example. 
The following proof 
also shows that there are model
misspecifications where we expect  
$S(\E) = \emptyset$. If $Y$ is directly intervened on, for example,
under the assumption of Proposition~\ref{prop:modelmis}, 
we will not be able
to find any set $S$ that satisfies~\eqref{eq:H0S-nonlin}.
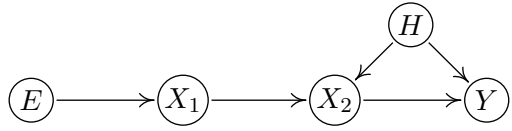
\begin{figure}
\begin{center}
\begin{tikzpicture}[xscale=2, yscale=1, line width=0.5pt, minimum size=0.58cm, inner sep=0.3mm, shorten >=1pt, shorten <=1pt]
    \normalsize
    \draw (0,0) node(i) [circle, draw] {$E$};
    \draw (1,0) node(x1) [circle, draw] {$X_1$};
    \draw (2,0) node(x2) [circle, draw] {$X_2$};
    \draw (3,0) node(y) [circle, draw] {$Y$};
    \draw (2.5,1) node(h) [circle, draw] {$H$};
    \draw[-arcsq] (i) -- (x1);
    \draw[-arcsq] (x1) -- (x2);
    \draw[-arcsq] (x2) -- (y);
    \draw[-arcsq] (h) -- (y);
    \draw[-arcsq] (h) -- (x2);
\end{tikzpicture}
\end{center}
\caption{
This graph corresponds to a model misspecification in the sense that the assumptions of Proposition~\ref{propos:sem}  and assumption (ii) c) of Proposition~\ref{propos:semb} are not satisfied. Indeed, we find that $H_{0,S}$ is violated for $S = S^0 := \{X_2\}$. And since $H_{0,S}$ is satisfied for 
both $S = \{X_1, X_2\}$ and $S = \{X_1\}$, we obtain
$S(\E) = \{X_1\}$. Therefore, $S(\E)$ is not a subset of $S^0$ but it is still a subset of the ancestors $\mathbf{AN}(Y)$ of $Y$, see Proposition~\ref{prop:modelmis}.
}
\label{fig:exmisspef}
\end{figure}

\begin{proof}
We first note that $H_{0,S,nonlin}(\E)$~\eqref{eq:H0S-nonlin-cond} holds if and only if $Y \independent E \given X_S$. 
Because of faithfulness this is the same as $Y$ and $E$ being $d$-separated given $X_S$ in the augmented graph.
Assume now that the latter holds for some set $S \subseteq \{X_2, \ldots, X_{p+1}\}$. (Such a set $S$ does not exist if $Y$ is directly intervened on.) The proposition follows if we can construct a set $\tilde S \subseteq \mathbf{AN}(Y) \cap \{X_2, \ldots, X_{p+1} \}$ that satisfies
$Y$ and $E$ being $d$-separated given $X_{\tilde S}$.

Assume that not all nodes in $S$ are ancestors of $Y$. Define then $W \in S$ to be one ``youngest'' non-ancestor in $S$, that is, $W \not\in \mathbf{AN}(Y)$ and there is no directed path from $W$ to any other node in $S$. (Such a node must exist since otherwise all youngest nodes of $S$ are in $\mathbf{AN}(Y)$, which implies $S \subseteq \mathbf{AN}(Y)$.) We now prove that for 
$$\tilde S := S \setminus \{W\}$$
we have
$Y$ and $E$ are $d$-separated given $X_{\tilde S}$.
To see this, consider any path from $E$ to $Y$. If this path does not go through $W$, the path is blocked by $\tilde S$ because it was blocked by $S = \tilde S \cup \{W\}$ (removing nodes outside a path can -if anything- only block it). Consider now a path that passes $W$ and the two edges connected to $W$ that are involved in this path. If both edges are into $W$, we are done because removing $W$ does not open the path. If one of these edges goes out of $W$, there must be a collider on this path which is a descendant of $W$ ($E$ does not have incoming edges and $W$ is not an ancestor of $Y$). But because $W$ is the youngest node in $S$ neither the collider nor any of its descendants is in $S$. We can therefore remove $W$ and the path is still blocked. 
\end{proof}

\section{Potential Outcomes and Invariant Prediction} \label{app:potentialoutcomes}
We now sketch that 
the assumption of invariant prediction can also be satisfied in a potential outcome framework \citep[e.g.][]{rubin2005causal}: 
as long as we do not intervene on the target variable $Y$,
the conditional distributions of $Y$ given the of causal predictors remains invariant. (Here, we discuss the nonlinear setting and therefore develop a result that corresponds to Remark~\ref{rem:extpropos:sem} rather than Proposition~\ref{propos:sem}.)
Although other formulations may be possible, too, we adopt the counterfactual language introduced by \citet{richardson2013single} who refer to finest fully randomised causally interpretable structured tree graphs (FFR-CISTG) \citep{Robins1986}.
We further consider the nonlinear version~\eqref{eq:H0S-nonlin-cond} of invariant prediction, see also Remark~\ref{rem:extpropos:sem}.

Similar as in \citep[][Definition~1]{richardson2013single}, we consider random variables~$\mathbf{V}:=(X_1 = Y, X_2, \ldots, X_p, X_{p+1})$ and assume the existence of counterfactual variables 
$X_j(\tilde{\mathbf{r}})$, for any assignment $\tilde{\mathbf{r}}$ to a subset $\mathbf{R} \subseteq \mathbf{V}$ and for all $j \in \{1, \ldots, p+1\}$. 
We further assume 
\begin{itemize}
\item[$\quad$(C1)] ``consistency and recursive substitution'' \citep[][equation~(14)]{richardson2013single} and
\item[$\quad$(C2)] ``FFR-CISTG independence'' 
\citep[][equation~(17)]{richardson2013single}\,.
\end{itemize}
To ease notation, we require $X_j(x_j = \tilde r) = \tilde r$ rather than $X_j(x_j = \tilde r) = X_j$
\citep[][p. 21]{richardson2013single}. 
\begin{prop} \label{prop:potentialoutcomes}
Consider random variables~$\mathbf{V}:=(X_1 = Y, X_2, \ldots, X_p, X_{p+1})$
and denote the causes of $Y$ by $\B{P}:=\pa{1}$.
For each environment $e \in \mathcal{E}$
consider a set $\mathbf{R}^e \subseteq \mathbf{V} \setminus \{Y\}$ of treatment variables and an assignment $\tilde{\mathbf{r}}^e$, that is $X_j^e := X_j(\tilde{\mathbf{r}}^e)$.
Assuming (C1) and (C2), i.e. an FFR-CISTG model, we have that
\begin{equation} \label{eq:potent}
Y(\tilde{\mathbf{r}}^e) \given \B{P}(\tilde{\mathbf{r}}^e) = \mathbf{q}
\quad \overset{d}{=} \quad
Y(\tilde{\mathbf{r}}^f) \given \B{P}(\tilde{\mathbf{r}}^f) = \mathbf{q}
\end{equation}
for all $e,f \in \mathcal{E}$ and for all $\B{q}$ such that both sides of~\eqref{eq:potent} are well-defined. Therefore, the set $\B{P}$ of parents satisfies~\eqref{eq:H0S-nonlin-cond}.
\end{prop}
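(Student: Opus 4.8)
The plan is to reduce the claim to exactly two ingredients that an FFR-CISTG model supplies: consistency together with recursive substitution (C1), which lets us rewrite each environment's potential outcome $Y(\tilde{\mathbf{r}}^e)$ as a \emph{single-vertex} counterfactual in which only the parents $\B{P}=\pa{1}$ are set; and the mutual independence of the one-step counterfactuals (C2), which makes that single-vertex counterfactual independent of whatever value the parents take under the treatment $\tilde{\mathbf{r}}^e$. The point where no intervention acts on $Y$ (i.e.\ $\mathbf{R}^e\subseteq\mathbf{V}\setminus\{Y\}$) is used implicitly: it guarantees that the mechanism $Y(x_{\B{P}}=\cdot)$ is never overridden, which is the entire causal content of the statement.

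First I would record a measurability fact. Since $Y=X_1$ is never an ancestor of any of its parents in the underlying DAG, recursive substitution (C1) expresses $\B{P}(\tilde{\mathbf{r}}^e)$ as a measurable function of the natural one-step counterfactuals $\{X_k(x_{\pa{k}}=\cdot)\}_{k\neq 1}$ alone (do-interventions only replace some of these by constants, which does not change this). Consistency then gives that on the event $\{\B{P}(\tilde{\mathbf{r}}^e)=\mathbf{q}\}$ one has $Y(\tilde{\mathbf{r}}^e)=Y(x_{\B{P}}=\mathbf{q})$. Next I would invoke the FFR-CISTG independence (C2): the family $\{Y(x_{\B{P}}=\mathbf{p})\}_{\mathbf{p}}$ of counterfactuals for the vertex $Y$ is jointly independent of the family of one-step counterfactuals for all other vertices, hence independent of $\B{P}(\tilde{\mathbf{r}}^e)$. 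Combining, for any measurable $A$ and any $\mathbf{q}$ for which the conditioning event has positive probability (and, in general, for almost every $\mathbf{q}$ via a disintegration),
\[
P\!\left(Y(\tilde{\mathbf{r}}^e)\in A \,\middle|\, \B{P}(\tilde{\mathbf{r}}^e)=\mathbf{q}\right)
= P\!\left(Y(x_{\B{P}}=\mathbf{q})\in A \,\middle|\, \B{P}(\tilde{\mathbf{r}}^e)=\mathbf{q}\right)
= P\!\left(Y(x_{\B{P}}=\mathbf{q})\in A\right),
\]
whose right-hand side does not depend on $e$; this is~\eqref{eq:potent}.

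Finally I would translate back to observables: since the observed pair $(X^e,Y^e)$ in environment $e$ is by definition $\mathbf{V}(\tilde{\mathbf{r}}^e)$, the conditional law of $Y^e$ given $X^e_{\B{P}}=\mathbf{q}$ is exactly the left-hand side of~\eqref{eq:potent}, and therefore coincides for all $e,f\in\mathcal{E}$ and all admissible $\mathbf{q}$. That is precisely the statement that $S=\B{P}$ satisfies~\eqref{eq:H0S-nonlin-cond}, and, by the equivalence noted after~\eqref{eq:H0S-nonlin-cond}, that it satisfies~\eqref{eq:H0S-nonlin}. I expect the only real obstacle to be bookkeeping: pinning down precisely which counterfactuals $\B{P}(\tilde{\mathbf{r}}^e)$ is a function of so that the independence asserted by (C2) applies verbatim, and replacing the naive conditioning on positive-probability events by a disintegration to justify the ``for all $\mathbf{q}$ such that both sides are well-defined'' clause; no genuinely causal argument is needed beyond (C1)--(C2).
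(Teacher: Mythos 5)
Your overall architecture is sound and runs parallel to the paper's: the paper conditions on the treated and untreated parents separately, uses $(\B{P}\cap\B{R})(\tilde{\B{r}}^e)=\tilde{\B{r}}^e_{\B{P}\cap\B{R}}$, and then invokes (C1) together with the \emph{modularity property} (Proposition~16 of Richardson and Robins) to identify the intervened conditional with the observational conditional $Y\given \B{P}=\mathbf{q}$, which is manifestly environment-free. You instead reduce, via recursive substitution, to the single-vertex counterfactual $Y(x_{\B{P}}=\mathbf{q})$ and argue its law is the common value of both sides of~\eqref{eq:potent}; in effect you are re-deriving modularity rather than citing it, and your reduction of $Y(\tilde{\mathbf{r}}^e)$ on the conditioning event and the final translation to~\eqref{eq:H0S-nonlin-cond} are fine.

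The genuine gap is in your invocation of (C2). The FFR-CISTG independence assumption (equation~(17) of Richardson and Robins) is a \emph{one-world-at-a-time} statement: for each fixed assignment $\tilde{x}$, the one-step counterfactuals $\{X_j(\tilde{x}_{\pa{j}})\}_j$ evaluated at that single assignment are mutually independent. It does \emph{not} assert that the whole family $\{Y(x_{\B{P}}=\mathbf{p})\}_{\mathbf{p}}$, ranging over all $\mathbf{p}$, is jointly independent of the families of one-step counterfactuals at the other vertices; that is the strictly stronger NPSEM-IE assumption, and keeping the weaker cross-world-free hypothesis is precisely why the paper works in the FFR-CISTG framework. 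Your key step needs $Y(x_{\B{P}}=\mathbf{q})$ to be independent of the event $\{\B{P}(\tilde{\mathbf{r}}^e)=\mathbf{q}\}$, and this event is built from one-step counterfactuals evaluated at \emph{random} arguments, so (C2) does not apply ``verbatim'' as you hope. The claim is still true under (C2), but to get it you must first decompose the conditioning event according to the realised values of the (non-$Y$) ancestors: on each slice $\{\mathbf{W}(\tilde{\mathbf{r}}^e)=\mathbf{w}\}$ the relevant counterfactuals sit at a single fixed assignment, the within-world independence applies, and one then sums (or disintegrates, in the continuous case) over $\mathbf{w}$ compatible with $\mathbf{q}$. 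This decomposition is exactly what the modularity property packages, and is the step the paper discharges by citing Proposition~16 at its equality $(+)$; without it, your proof as written silently assumes NPSEM-IE rather than the stated (C2).
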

We have already seen in Appendix~\ref{app:propsemb}, that we can allow for some hidden variables, i.e., the assumption (C2) can be relaxed further.

\begin{proof}
We have for all $e \in \E$
\begin{align*}
Y(\tilde{\mathbf{r}}^e) \;\big|\; \B{P}(\tilde{\mathbf{r}}^e) = \mathbf{q}
\quad & \overset{}{=} \quad  
Y(\tilde{\mathbf{r}}^e) \;\big|\; (\B{P}\setminus \B{R})(\tilde{\mathbf{r}}^e) = \mathbf{q}_{\B{P} \setminus \B{R}}, (\B{P}\cap \B{R})(\tilde{\B{r}}) = \tilde{\B{r}}_{\B{P} \cap \B{R}}\\
& \overset{(*)}{=} \quad  
Y(\tilde{\mathbf{r}}^e) \;\big|\; (\B{P}\setminus \B{R})(\tilde{\mathbf{r}}^e) = \mathbf{q}_{\B{P} \setminus \B{R}}\\
& \overset{(+)}{=} \quad  
Y \;\big|\; (\B{P}\setminus \B{R}) = \mathbf{q}_{\B{P} \setminus \B{R}} , (\B{P}\cap \B{R}) = \tilde{\B{r}}_{\B{P} \cap \B{R}},
\end{align*}
where we have used  
$(\B{P}\cap \B{R})(\tilde{\B{r}}) = \tilde{\B{r}}_{\B{P} \cap \B{R}}$
in $(*)$
and both (C1) and the modularity property
\citep[][Proposition~16]{richardson2013single} in $(+)$. 
This proves the statement because the latter expression is an observational distribution.
All equality signs should be understood as holding in distribution. 
\end{proof}

\section{Proof of Proposition~\ref{prop:popIV}} \label{app:popIV}
\begin{proof}
The residuals $Y - X\gamma$ for $\gamma\in \mathbb{R}^p$ are
given by 
$g(H,\varepsilon) + (\gamma^*-\gamma) f(H,\eta) +  Z 1_{I=1} (\gamma^*-\gamma)$. 
The two environments $\E$ are equivalent to conditioning on $I=0$ for the
first environment and $I=1$ for the second environment. 
Since $I, H,\varepsilon,\eta,Z$ are
independent and $Z$ has a full-rank covariance matrix, the
distribution of the residuals can only be invariant
between the two environments if $\gamma - \gamma^*\equiv 0$. Hence the
test of $H_{0,S,\mathit{hidden}}(\E)$ will be rejected for 
$S\neq S^*$,
whereas the true null
$H_{0,S^*,\mathit{hidden}}(\E)$ is accepted with probability at least
$1-\alpha$ by construction of the test and the result follows by the
definition of $\hat{S}$ in~\eqref{eq:hatScausalIV}.
\end{proof}

\section{Proofs of Section~\ref{sec:idres}} \label{app:proofs}

\subsection{Proof of Theorem~\ref{prop:1} (i)}
\begin{proof}
As shown in Proposition~\ref{propos:sem} we have $S(\E) \subseteq \PA{Y}$ because the null hypothesis~\eqref{eq:H0S} is correct for $S^{\caus}=\PA{Y}$. 
We assume that $S(\E) \neq \PA{Y}$ and deduce a contradiction.

As in~\eqref{eq:betapred} we define the regression coefficient
$$
\beta^{\pred,\e}(S) := \mbox{argmin}_{\beta\in \mathbb{R}^{p}: \beta_k=0 \mbox{ if } k \notin S} \;  E (Y^\e-X^\e \beta)^2.
$$
We then look for sets $S \subseteq \{1, \ldots, p\}$ such that for all $e_1,e_2 \in \E$
$$
\beta^{\pred, e_1}(S) = \beta^{\pred, e_2}(S) \quad \text{ and } \quad R^{e_1}(S) \equald R^{e_2}(S),
$$
with $R^{e_1}(S) := Y^{e_1} - X^{e_1} \beta^{\pred, e_1}(S)$ and $R^{e_2}(S) := Y^{e_2} - X^{e_2} \beta^{\pred, e_2}(S)$
(``constant beta'' and ``same error distribution'').
If $S(\E) \neq \PA{Y}$, then there must be a set 
$S \nsupseteq \PA{Y}$ whose null hypothesis is correct and that 
satisfies $\beta^{\pred, e}(S) \neq \beta^{\pred, e}(S^{\caus}) = \gamma^*$.
This set $S$ leads to the following residuals for $e=1$:
$$
R^1(S) = Y^1 - \sum_{k=2}^{p+1} \beta^{\pred,1}(S)_{k} X_k^1 = \sum_{k=2}^{p+1} \alpha_{k} X_k^1 + \varepsilon_{1}^1,
$$
with $\alpha_{k} := \gamma^{\caus}_{k} - \beta^{\pred,1}(S)_{k} =
\gamma^{\caus}_{k} - \beta^{\pred,\e}(S)_{k}$ 
for any $\e\in \E$ and $\alpha_k \neq 0$ for some (possibly more than one) $k \in \{2, \ldots, p+1\}$.

Among the set of {\it all} nodes (or variables) $X_{k}^1$ that have
non-zero $\alpha_k$, we consider a ``youngest'' node $X_{k_0}^1$ with the
property that there is no directed path from this node to any other node
with non-zero $\alpha_k$. We further consider experiment $e_0$ with
$\C{A}^{\e_0} = \{k_0\}$. This yields 
\begin{align}
R^1(S) &= \alpha_{k_0} X_{k_0}^1 + \sum_{k=2, k \neq k_0}^{p+1} \alpha_{k} X_k^1  + \varepsilon_{1}^1\quad \text{ and} \label{eq:help1}\\
R^{\e_0}(S) &= \alpha_{k_0} a_{k_0}^{\e_0} + \sum_{k=2, k \neq k_0}^{p+1} \alpha_{k} X_k^1  + \varepsilon_{1}^1, \label{eq:help2}
\end{align}
Since $\mean( X_{k_0}^1) \neq a_{k_0}^{\e_0}$,  
$R^{\e_0}(S)$ and $R^1(S)$ cannot have the same distribution.
This yields a contradiction.
\end{proof}

\subsection{Proof of Theorem~\ref{prop:1} (ii)}
\begin{proof}
As before we obtain equations~\eqref{eq:help1} and~\eqref{eq:help2}
for a ``youngest'' node $X_{k_0}^1$ among all nodes with non-zero $\alpha_{k_0}$ and an experiment $\e_0$ with $\C{A}^{\e_0} = \{k_0\}$. 
We now iteratively use the structural equations in order to obtain
\begin{align}
R^1(S) &= \alpha_{k_0} \varepsilon_{k_0}^1 + \sum_{k=1, k \neq k_0}^{p+1} \tilde \alpha_{k} \varepsilon_k^1  \quad \text{ and} \label{eq:52a}\\
R^{\e_0}(S) &= \alpha_{k_0} A_{k_0}^\e \varepsilon_{k_0}^1 + \sum_{k=1, k \neq k_0}^{p+1} \tilde \alpha_{k} \varepsilon_k^1 . \label{eq:52b}
\end{align}
Since all $\varepsilon_{k}^\e$ are jointly independent and $\mean (A_{k_0}^{\e_0})^2 \neq 1$, $R^1(S)$ and $R^{\e_0}(S)$ cannot have the same distribution.
This contradicts the fact that the null hypothesis~\eqref{eq:H0S} is correct for $S$. The proof works analogously for the shifted noise distributions.
\end{proof}

\subsection{Proof of Theorem~\ref{prop:1} (iii)}
\begin{proof}
We start as before and obtain analogously to equations~\eqref{eq:52a} and~\eqref{eq:52b} the equations
\begin{align*}
R^1(S) &= \alpha_{k_0} \varepsilon_{k_0}^1 + \sum_{k=1, k \neq k_0}^{p+1} \tilde \alpha_{k} \varepsilon_k^1  \quad \text{ and} \\
R^2(S) &= \alpha_{k_0} A_{k_0} \varepsilon_{k_0}^1 + \sum_{k=1, k \neq k_0}^{p+1} \tilde D_{k} \varepsilon_k^1 ,
\end{align*}
where the $\tilde D_k$ are continuous functions of the random variables 
$A_{s}, s \in \{2, \ldots, p+1\} \setminus \{k_0\}$ and
$\beta_{j,s}^{e=2}, j,s \in \{2, \ldots, p+1\}$ (and therefore random variables themselves).
$R^1(S)$ and $R^2(S)$ are supposed to have the same distribution. It follows from Cram\'er's theorem \citep{cramer} that $A_{k_0} \varepsilon_{k_0}^1$ must be normally distributed. 
But then it follows that
\begin{align*}
\mean[(A_{k_0})^4] \,\mean[(\varepsilon_{k_0}^1)^4] 
&= \mean[(A_{k_0} \varepsilon_{k_0}^1)^4] 
= 3 \mean[(A_{k_0} \varepsilon_{k_0}^1)^2]^2\\
&= 3 \mean[(A_{k_0})^2]^2 \,\mean[(\varepsilon_{k_0}^1)^2]^2  
= \mean[(A_{k_0})^2]^2 \,\mean[(\varepsilon_{k_0}^1)^4] \,
\end{align*}
and therefore
$$
\var(A_{k_0}^2) = 0
$$
which means $P[A_{k_0} \in \{-c,c\}] = 1$ for some constant 
$c \geq 0$. This contradicts the assumption that $A_{k_0}$ has a density.
\end{proof}

\subsection{Proof of Theorem~\ref{prop:onlyone}}
\begin{proof}
The proof follows directly from Lemma~\ref{lem:2} (see below) and the
fact that faithfulness is satisfied with probability one \citep[][Theorem~3.2]{Spirtes2000}.
Assume that the null hypothesis~\eqref{eq:H0Sregr} is accepted for $S$
with $S^{\caus} \setminus S \neq \emptyset$. Lemma~\ref{lem:2} implies
that with probability one, we have $\alpha_{k_0} \neq 0$, where
$\alpha$ is defined as in~\eqref{eq:resy}. (Otherwise, we 
construct a new SEM by replacing the equation for $Y$ with $Y_{k_0} := \sum_{k \in S^* \setminus \{k_0\}} \gamma^*_k X_k +\varepsilon_{1}$
and removing all equations for the descendants of $Y$. 
Equation~\eqref{eq:lem2} then reads a violation of faithfulness since
there is a path between $k_0$ and $Y_{k_0}$ via nodes in $S^{\caus}
\setminus S$ that is unblocked given $S \setminus \{k_0\}$.) 
But if $\alpha_{k_0} \neq 0$, we can use exactly the same arguments as in the proof of Theorem~\ref{prop:1}.
\end{proof}

\begin{lemma} \label{lem:2}
Assume that the joint distribution of $(X_1, \ldots, X_{p+1})$ is generated by a
structural equation model~\eqref{eq:semmmmm} with all non-zero parameters
$\beta_{j,k}$ and $\sigma_j^2$ drawn from a joint density
w.r.t.\ Lebesgue measure. Let
$X_{k_0}$ denote a youngest parent of target variable $Y=X_{1}$. Let $S$
be a set with $S^* \setminus S \neq \emptyset$, that is, some of the true
causal parents are missing in the set $S$. Consider the residuals 
\begin{align} \label{eq:resy}
Res(Y) &= \sum_{k \in S^*} \gamma^{*}_k X_k - \sum_{k \in S} \beta^{\pred, 1}(S)_k X_k + \varepsilon_{1}^1\\
&= \sum_{k \in S^*} \alpha_k X_k + \sum_{k \notin S^*} \alpha_k \varepsilon_k^1 \nonumber
\end{align}
where the second equation is obtained by iteratively using the structural equations except the ones for the parents $S^*$ of $Y$. \\
Then for almost all parameter values, we have: $\alpha_{k_0} = 0$
implies
${k_0} \in S$
   and
\begin{equation} \label{eq:lem2}
X_{k_0} \perp Y_{k_0} \given X_{\tilde S\setminus \{k_0\}},
\end{equation}
where $Y_{k_0} := \sum_{k \in S^* \setminus \{k_0\}} \gamma^*_k X_k +\varepsilon_{1}$
and $\tilde S := S \cap \ND{k_0}$
with $\ND{k_0}$ being the non-descendants of $k_0$.
\end{lemma}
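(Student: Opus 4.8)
The plan is to reduce everything to an explicit formula for the coefficient $\alpha_{k_0}$ and then read off both conclusions. First I would substitute the structural equations for all variables outside $S^{*}=\PA{1}$ into $Res(Y)=Y-\sum_{k\in S}\beta^{\pred,1}(S)_{k}X_{k}$ to bring it into the stated form $\sum_{k\in S^{*}}\alpha_{k}X_{k}+\sum_{k\notin S^{*}}\alpha_{k}\varepsilon^1_{k}$. Because $X_{k_0}$ is a \emph{youngest} parent of $Y$, there is no directed path from $X_{k_0}$ to any other node of $S^{*}$; hence the reduced-form coefficient of $\varepsilon^1_{k_0}$ is zero in every $X_{k}$ with $k\in S^{*}\setminus\{k_0\}$ and in $\varepsilon^1_{1}$, so taking the covariance of both representations of $Res(Y)$ with $\varepsilon^1_{k_0}$ isolates $\alpha_{k_0}$ and yields
\[
\alpha_{k_0}\;=\;\gamma^{*}_{k_0}\;-\;\sum_{k\in S}\beta^{\pred,1}(S)_{k}\,t_{k_0\to k},
\]
where $t_{k_0\to k}$ is the total causal effect of $X_{k_0}$ on $X_{k}$ (the coefficient of $\varepsilon^1_{k_0}$ in the reduced form of $X_{k}$), which vanishes unless $k\in\DE{k_0}$ and equals $1$ when $k=k_0$.

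Through $\beta^{\pred,1}(S)=\mathrm{Cov}(X_{S})^{-1}\mathrm{Cov}(X_{S},Y)$, the quantity $\alpha_{k_0}$ is a rational function of the parameters $(\beta_{j,k},\sigma_{j}^{2})$, defined wherever $\mathrm{Cov}(X_{S})$ is invertible. If it is not identically zero, then $\{\alpha_{k_0}=0\}$ has Lebesgue measure zero and the asserted implication holds vacuously for almost all parameter values; so it remains to treat the case $\alpha_{k_0}\equiv 0$. There I would first argue $k_0\in S$: if $k_0\notin S$, evaluate (in the limit of configurations in which all edge weights among $X_{2},\dots,X_{p+1}$ tend to zero, for which $\mathrm{Cov}(X_{S})$ stays invertible) to get $X_{k}=\varepsilon^1_{k}$ for $k\in S$, so that $\beta^{\pred,1}(S)_{k}=\gamma^{*}_{k}\mathbf 1[k\in S^{*}]$, every $t_{k_0\to k}$ with $k\neq k_0$ vanishes, and the formula gives $\alpha_{k_0}=\gamma^{*}_{k_0}\neq 0$ --- contradicting $\alpha_{k_0}\equiv 0$. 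Hence $k_0\in S$, and I write $S=\{k_0\}\cup A\cup B$ with $A:=\tilde S\setminus\{k_0\}=(S\cap\ND{k_0})\setminus\{k_0\}$ and $B:=S\cap\DE{k_0}$.

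Still with $\alpha_{k_0}\equiv 0$ and $k_0\in S$, I would prove \eqref{eq:lem2}. Using $Y=\gamma^{*}_{k_0}X_{k_0}+Y_{k_0}$ and $P_{X_{S}}X_{k_0}=X_{k_0}$ (as $k_0\in S$), the residual simplifies to $Res(Y)=Y_{k_0}-P_{X_{S}}Y_{k_0}$; orthogonality of $Res(Y)$ to $X_{S}$ (in particular to $X_{k_0}$ and to $X_{A}$) then lets me expand the partial covariance $\mathrm{Cov}(X_{k_0},Y_{k_0}\mid X_{A})$ into a closed form equal to $-\alpha_{k_0}\,\mathrm{Var}(X_{k_0}\mid X_{A})$ plus one ``descendant'' correction $\beta^{\pred,1}(S)_{k}\big(\lambda_{k}-t_{k_0\to k}\big)\,\mathrm{Var}(X_{k_0}\mid X_{A})$ for each $k\in B$, where $\lambda_{k}$ is the partial regression coefficient of $X_{k_0}$ on $X_{k}$ given $X_{A}$. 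The crucial remaining point is that identical vanishing of $\alpha_{k_0}$ makes all these corrections vanish too: by Wright's trek rule a sum of trek monomials in a linear Gaussian SEM vanishes identically only if the corresponding treks are structurally absent (the genericity hypothesis rules out accidental numerical cancellation), so $\alpha_{k_0}\equiv 0$ forces $X_{A}$ to $d$-separate $X_{k_0}$ from $Y_{k_0}$, and from each $X_{k}$ with $k\in B$, in the graph obtained from the DAG by deleting the edge $k_0\to Y$, redirecting the remaining parents of $Y$ into $Y_{k_0}$ and deleting the descendants of $Y$; the global Markov property then delivers $\lambda_{k}=t_{k_0\to k}$ for $k\in B$ and the conditional independence $X_{k_0}\independent Y_{k_0}\mid X_{A}$, which is \eqref{eq:lem2}.

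The step I expect to be the main obstacle is this last one: passing from the purely algebraic fact that $\alpha_{k_0}$ vanishes identically in the parameters to the graphical statement that $X_{A}$ $d$-separates $X_{k_0}$ from $Y_{k_0}$. This requires matching the reduced-form coefficients entering $\alpha_{k_0}$ with treks in the DAG and showing that simultaneous identical cancellation of all of them can only come from the absence of any active $X_{A}$-unblocked path from $X_{k_0}$ to $Y_{k_0}$; the careful bookkeeping of the conditioning set --- only $X_{A}$, not all of $X_{S\setminus\{k_0\}}$ --- and of the auxiliary role of the descendant set $B$ is the delicate part.
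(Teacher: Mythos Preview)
Your derivation of the formula $\alpha_{k_0}=\gamma^{*}_{k_0}-\sum_{k\in S}\beta^{\pred,1}(S)_{k}\,t_{k_0\to k}$ is correct and coincides with the paper's identity $\gamma^{*}_{k_0}=\eta_S^{t}\beta^{\pred,1}(S)$ (their $\eta_{S,k}$ is exactly your $t_{k_0\to k}$). Your case split into ``$\alpha_{k_0}$ not identically zero'' versus ``$\alpha_{k_0}\equiv 0$'' and the limit argument for $k_0\in S$ are also fine. The gap is precisely where you flag it: in the identically-zero case, the passage from $\alpha_{k_0}\equiv 0$ to the conditional independence \eqref{eq:lem2} via a ``trek rule'' argument does not go through as stated. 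The quantity $\alpha_{k_0}$ is not a sum of trek monomials --- it involves the inverse Gram matrix through $\beta^{\pred,1}(S)$ --- so identical vanishing of $\alpha_{k_0}$ is an algebraic constraint on a rational function, and Wright's rule gives you no direct handle on it. The claim that this forces $X_{A}$ to $d$-separate $X_{k_0}$ from $Y_{k_0}$ (and from each $X_{k}$, $k\in B$) in a modified graph would itself require a proof at least as involved as the lemma.

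The paper avoids this obstacle by a completely different, purely linear-algebraic device that you may find instructive. It does not split into the two cases but works pointwise with the constraint $\gamma^{*}_{k_0}=\eta_S^{t}\beta^{\pred,1}(S)$. Order $X_S$ topologically and let $X_s$ be a \emph{youngest} descendant of $X_{k_0}$ in $S$. The key observation is that $\sigma_s^{2}$ enters the Gram matrix $E(X_S^{t}X_S)$ only through its $(|S|,|S|)$ entry $d$. Writing the inverse via the $2\times 2$ block formula isolates the dependence on $d$ and rewrites the constraint as
\[
\gamma^{*}_{k_0}\;=\;\eta_S^{t}\begin{pmatrix}A^{-1}&0\\0&0\end{pmatrix}\xi\;+\;\frac{1}{d-b^{t}A^{-1}b}\,\eta_S^{t}C\xi,
\]
with $\xi=E(X_S^{t}Y)$ and $C$ independent of $\sigma_s^{2}$. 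Either $\eta_S^{t}C\xi\neq 0$, in which case $d$ (hence $\sigma_s^{2}$) is pinned down by the remaining parameters --- a Lebesgue-null ``fine-tuning'' event --- or $\eta_S^{t}C\xi=0$, in which case the constraint collapses to the same identity for $\tilde S_1:=S\setminus\{s\}$, namely $\gamma^{*}_{k_0}=\eta_{\tilde S_1}^{t}\beta^{\pred,1}(\tilde S_1)$. Iterating this peeling removes all descendants of $k_0$ from $S$ at the cost of finitely many null sets, leaving $\tilde S=S\cap\ND{k_0}$ and $\gamma^{*}_{k_0}=\eta_{\tilde S}^{t}\beta^{\pred,1}(\tilde S)$. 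Since the only possibly nonzero entry of $\eta_{\tilde S}$ is at $k_0$, this forces $k_0\in S$ and $\beta^{\pred,1}(\tilde S)_{k_0}=\gamma^{*}_{k_0}$; the conditional independence \eqref{eq:lem2} is then immediate because regressing $Y_{k_0}=Y-\gamma^{*}_{k_0}X_{k_0}$ on $X_{\tilde S}$ yields coefficient zero on $X_{k_0}$. This bypasses any graphical or trek-type reasoning entirely.
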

\begin{proof}
With probability one, we have $\gamma_{k_0}^* \neq 0$. Hence, $\alpha_{k_0} = 0$ can happen only if ${k_0} \in S$ or $S$ contains a descendant of $X_{k_0}$ (otherwise $\alpha_{k_0} = \gamma_{k_0}^* \neq 0$).
We will now show that in fact ${k_0} \in S$ must be true.
Let the random vector $X_S$ contain all variables $X_k$ with $k \in S$
and
let it be topologically ordered such that if $X_{k_2}$ is a descendant of $X_{k_1}$, it appears after $X_{k_1}$ in the vector $X_S$. 
Assume now that $S$ contains a descendant of $X_{k_0}$. 
W.l.o.g., we can assume that the $|S|$-entry of $X_S$ 
(i.e. its last component) is a ``youngest'' descendant $X_s$ of $X_{k_0}$ in $S$, that is, there is no directed path from $X_s$ to any other descendant of $X_{k_0}$ in $S$.
The entry $(|S|,|S|)$ of the matrix $\big(\mean {X_{S}^1}^t X_{S}^1\big)$ is the only entry depending (additively) on the parameter $\sigma_{s}^2$, we call this entry $d$. With 
$$
\big(\mean {X_{S}^1}^t X_{S}^1\big) =: \left(
\begin{array}{cc}
A & b\\
b^T & d
\end{array} \right)
$$
it follows 
\begin{align*}
\big(\mean {X_{S}^1}^t X_{S}^1\big)^{-1} &= \left(
\begin{array}{cc}
A^{-1} + \frac{A^{-1}bb^TA^{-1}}{d-b^TA^{-1}b} & \frac{A^{-1}b}{d-b^TA^{-1}b}\\
\frac{b^TA^{-1}}{d-b^TA^{-1}b} & \frac{1}{d-b^TA^{-1}b}
\end{array} \right)
=: 
\left( \begin{array}{cc}
A^{-1} & 0\\
0 & 0
\end{array} \right)
+ \frac{1}{d-b^TA^{-1}b} \; C
\end{align*}
Observe that $\big(\mean {X_{S}^1}^t X_{S}^1\big)$ is non-singular
with probability one (if the matrix is non-singular, the full covariance matrix over $(X_2, \ldots, X_{p+1})$ is non-singular, too) 
and
$$
\beta^{\pred,1}(S) = 
\big(\mean {X_{S}^1}^t X_{S}^1\big)^{-1}
\;
\xi
$$
for $\xi := {\mean X_S^1}^t Y^1 \neq 0$ 
(otherwise $\beta^{\pred,1}(S)$ would be zero and thus $\alpha_{k_0} = \gamma_{k_0}^{\caus} \neq 0$).

According to formula~\eqref{eq:resy} and $\alpha_{k_0}=0$, computing the linear coefficients $\beta^{\pred,1}(S)$
and subsequently using the
true structural equations, 
leads to the following
relationship between the true coefficients $\beta_{j,k}$ and
$\gamma^{\caus}$: 
$$
\gamma^*_{k_0} = 
\eta^t_S \; \beta^{\pred,1}(S),
$$
where $\eta_S$ depends on the true coefficients $\beta_{j,k}$ and is constructed in the following way: the $i$-th component of $\eta_S$ is obtained 
by multiplying the path coefficients between $X_{k_0}$ and $X_i$. For example, the two directed paths 
$X_{k_0} \rightarrow X_5 \rightarrow X_3  \rightarrow X_i$ and $X_{k_0} \rightarrow X_5 \rightarrow X_i$, lead to the corresponding $i$th entry 
$\eta_{S,i} = \beta_{5,k_0}^1 \beta_{3,5}^1 \beta_{i,3}^1 + \beta_{5,k_0}^1
\beta_{i,5}^1$. All non-descendants of $k_0$ have a zero entry in $\eta_S$, $k_0$ itself has the entry one in $\eta_S$ if $k_0 \in S$ (we will see below that this must be the case).
But then, we have:
\begin{equation} \label{eq:ggg}
\gamma^*_{k_0} = 
\eta^t_S \; \beta^{\pred,1}(S)
=
\eta^t_S
\;
\big(\mean {X_{S}^1}^t X_{S}^1\big)^{-1}
\;
\xi
=
\eta^t_S \;
\left( \begin{array}{cc}
A^{-1} & 0\\
0 & 0
\end{array} \right)
\;
\xi 
+
\frac{1}{d-b^TA^{-1}b}\;
\eta^t_S
\;
 C
\;
\xi.
\end{equation}
If $X_s \neq X_{k_0}$ then $\xi$ does not depend on $\sigma_s^2$ (it does if $X_s = X_{k_0}$). We must then have that $\eta^t_S C \xi = 0$ since otherwise
it follows from~\eqref{eq:ggg} that
$$
d
= b^TA^{-1}b \; + \;
\frac{\eta^t_S
\;
 C
\;
\xi}
{ 
\gamma^*_{k_0} - \eta_S^t \;
\left( \begin{array}{cc}
A^{-1} & 0\\
0 & 0
\end{array} \right)
\;
\xi 
},
$$
which can happen only with probability zero (it requires a ``fine-tuning''
of the parameter $\sigma_{s}^2$; note that $d$ is depending on
  $\sigma_s^2$).

But if $\eta^t_S C \xi = 0$ then $\gamma_{k_0}^* = (\eta_1 \cdots
\eta_{|S|-1}) A^{-1} (\xi_1, \cdots \xi_{|S|-1}) = \eta_{\tilde S_1}^t \; \beta^{\pred,1}(\tilde S_1)$ 
with $\tilde S_1 := S \setminus \{s\}$, an equation analogue to the first part of~\eqref{eq:ggg}.
We can now repeat the same argument for $\tilde S_1$ (assume that $\tilde S_1$ contains a descendant of $k_0$, then consider the youngest descendant of $k_0$ in $\tilde S_1$\ldots) and obtain $\tilde{S}_2$.
After $\ell$ iterations, we obtain 
$\gamma_{k_0}^* = \eta_{\tilde S}^t \; \beta^{\pred,1}(\tilde S)$, 
  where $\tilde S := \tilde S_{\ell}$ does not contain any descendant of
  $k_0$. The only non-zero entry of $\eta_{\tilde S}$ is the one for $k_0$
  (otherwise all remaining $\eta_{\tilde S}$ entries would be zero which
  implies $\gamma_{k_0}^{\caus}=0$).  

We have thus shown that $k_0 \in S$ and that 
$\beta^{\pred,1}(\tilde S)_{k_0} = \gamma^{\caus}_{k_0}$
  with 
$\tilde S := S \cap \ND{k_0}$.
We obtain~\eqref{eq:lem2} with the following argument: regressing $Y$ on $\tilde S$ yields a regression coefficient $\gamma^*_{k_0}$ for $X_{k_0}$; thus, regressing $Y_{k_0} = Y - \gamma^*_{k_0} X_{k_0}$ on $\tilde S$ yields a regression coefficient zero for $X_{k_0}$.
\end{proof}

\section{Experimental settings for numerical studies} \label{app:pars}
We sample $n_{obs}$ data points from an observational and $n_{int}$ data points from an interventional setting $(|\mathcal{E}| = 2)$. We first sample a directed acyclic graph with $p$ nodes that is common to both scenarios. In order to do so, we choose a random topological order and then connect two nodes with a probability of $k/(p-1)$. This leads to an average degree of $k$. Given the graph structure, we then sample non-zero linear coefficients with a random sign and a random absolute value between a lower bound $lb^{e=1}$ and an upper bound $ub^{e=1} = lb^{e=1} + \Delta_b^{e=1}$. We consider normally distributed noise variables with a random variance between $\sigma^2_{\text{min}}$ and $\sigma^2_{\text{max}}$. We can then sample the observational data set ($e=1$).

For the interventional setting ($e=2$), we choose simultaneous noise
interventions (Section~\ref{sec:det}) with the extension of changing
linear coefficients, that is for $j \in \mathcal{A}$ (where even
$\mathcal{A}$ is random and can include the later target of interest $Y$), we have 
$\varepsilon_j^{e=2} = A_j \varepsilon_j^{e=1}$ and (possibly) $\beta_{j,s}^{e=2} \neq \beta_{j,s}^{e=1}$. 
The set $\mathcal{A}$ of intervened nodes contains either a single node or a fraction $\theta$ of nodes.
We chose $A_j$ to be uniformly distributed random variables that take values between $a_{min}$ and $a_{min} + \Delta_a$. The linear coefficients $\beta_{j,s}^{e=2}$ are chosen either equal to $\beta_{j,s}^{e=1}$ or according the same procedure with corresponding bounds $lb^{e=2}$ and $ub^{e=2}$.

All parameters were sampled independently for each of the scenarios,
uniformly in a given range that is shown below in brackets (or with
given probability for discrete parameters).
(1) The number $n_{obs}$ of samples in the observational data is chosen uniformly from $\{100,200,300,400,500\}$.
(2)  The number $n_{int}$ of samples in intervention data is chosen uniformly from $\{100,200,300,400,500\}$.
(3) The number $p$ of nodes in the graph is chosen uniformly from $\{5, 6, 7, \ldots, 40\}$.
(4) The average degree $k$ of the graph is chosen uniformly from $\{1,2,3,4\}$.
(5) The lower bound $lb^{\e=1}$ is chosen uniformly from $\{0.1,0.2, \ldots, 2\}$.
(6) The maximal difference $\Delta_b^{\e=1}$ between largest and smallest coefficients is chosen uniformly from $\{0.1,0.2, \ldots, 1\}$.
(7) The minimal noise variance $\sigma^2_{\text{min}}$ is chosen uniformly from $\{0.1,0.2, \ldots, 2\}$ and
(8) the maximal noise variance $\sigma^2_{\text{max}}$ uniformly from $\{0.1,0.2, \ldots, 2\}$, yet at least equal
  to $\sigma^2_{\text{max}}$.
(9) The lower bound $a_{j,min}$ for the noise multiplication is chosen uniformly from $\{0.1, 0.2, \ldots, 4\}$.
(10) The difference $\Delta_a$ between upper and lower bound $a_{j,min}$ for noise
  multiplication is chosen to be zero with probability $1/3$ (which results in fixed coefficients) and otherwise uniformly from $\{0.1, 0.2, \ldots, 2\}$.
 (11) The interventional coefficients are chosen to be identical ($\beta_{j,s}^{e=2} = \beta_{j,s}^{e=1}$) with probability $2/3$, otherwise they are chosen uniformly between $lb^{e=2}$ and $ub^{e=2}$.
 (12) The lower bound $lb^{e=2}$ for new coefficients under interventions is chosen as the smaller
  value of two uniform values in $\{0.1, 0.2, \ldots, 2\}$ and
(13) the upper bound $ub^{e=2}$ for new coefficients under interventions as the corresponding larger value.
(14) With probability $1/6$ we intervene only on one (randomly chosen) variable, that is $|\mathcal{A}| = 1$.
(15) Otherwise, the inverse fraction $1/\theta$ is chosen uniformly from $\{1.1, 1.2, \ldots, 3\}$, that is the fraction of intervened nodes varies between $\theta = 1/3$ and $\theta = 1/1.1$.

\end{document}